\newtheorem{thm}{Theorem}[section]
\newtheorem{theorem}[thm]{Theorem}
\newtheorem{claim}[thm]{Claim}
\newtheorem{definition}[thm]{Definition}
\newtheorem{lemma}[thm]{Lemma}
\newtheorem{cor}[thm]{Corollary}
\newtheorem{fact}[thm]{Fact}
\newtheorem{prop}[thm]{Proposition}
\newcommand{\R}{\ensuremath{\mathbb{R}}}
\newcommand{\Z}{\ensuremath{\mathbb{Z}}}
\newcommand{\lat}{\mathcal{L}}
\newcommand{\cL}{\mathcal{L}}
\newcommand{\eps}{\varepsilon} 
\renewcommand{\epsilon}{\varepsilon}
\newcommand{\poly}{\mathrm{poly}}
\newcommand{\polylog}{\mathrm{polylog}}
\newcommand{\rank}{\mathrm{rank}}
\newcommand{\vol}{\mathrm{vol}}
\DeclareMathOperator*{\expect}{\mathbb{E}}
\DeclareMathOperator{\dist}{dist}
\DeclareMathOperator{\spn}{span}
\newcommand{\basis}{\ensuremath{\mathbf{B}}}
\DeclarePairedDelimiter\floor{\lfloor}{\rfloor}
\DeclarePairedDelimiter\ceil{\lceil}{\rceil}
\newcommand{\ignore}[1]{}
\renewcommand{\vec}[1]{\mathbf{#1}}
\newcommand{\mZ}{\mathbb{Z}}
\newcommand{\bareta}{\overline{\eta}}
\title{A $2^{n/2}$-Time Algorithm for $\sqrt{n}$-SVP and $\sqrt{n}$-Hermite SVP,\\
and an Improved Time-Approximation Tradeoff for (H)SVP}
\author{Divesh Aggarwal \thanks{This work was partially supported in part by the Singapore National Research Foundation under NRF RF Award No. NRF-NRFF2013-13, the Ministry of Education, Singapore under grants MOE2012-T3-1-009, and MOE2019-T2-1-145.}\\National University of Singapore \\
\texttt{dcsdiva@nus.edu.sg} 
\and Zeyong Li\\National University of Singapore \\ \texttt{li.zeyong@u.nus.edu} \and Noah Stephens-Davidowitz\thanks{Some of this work was done at MIT, supported by an NSF-BSF grant number 1718161 and
NSF CAREER Award number 1350619, at the Centre for Quantum Technologies at the National University of Singapore, and at the Simons Institute in Berkeley.}\\
Cornell University\\
\texttt{noahsd@gmail.com}}
\date{}
\begin{document}

\maketitle

\begin{abstract}
We show a $2^{n/2+o(n)}$-time algorithm that finds a (non-zero) vector in a lattice $\lat \subset \R^n$ with norm at most $\widetilde{O}(\sqrt{n})\cdot \min\{\lambda_1(\lat), \det(\lat)^{1/n}\}$, where $\lambda_1(\lat)$ is the length of a shortest non-zero lattice vector and $\det(\lat)$ is the lattice determinant. Minkowski showed that $\lambda_1(\lat) \leq \sqrt{n} \det(\lat)^{1/n}$ and that there exist lattices with $\lambda_1(\lat) \geq \Omega(\sqrt{n}) \cdot \det(\lat)^{1/n}$, so that our algorithm finds vectors that are as short as possible relative to the determinant (up to a polylogarithmic factor).

The main technical contribution behind this result is new analysis of (a simpler variant of) a $2^{n/2 + o(n)}$-time algorithm from~\cite{ADRSSolvingShortest15}, which was only previously known to solve less useful problems. To achieve this, we rely crucially on the ``reverse Minkowski theorem'' (conjectured by Dadush~\cite{DRStrongReverse16} and proven by~\cite{Regev:2017:RMT:3055399.3055434}), which can be thought of as a partial converse to the fact that $\lambda_1(\lat) \leq \sqrt{n} \det(\lat)^{1/n}$.
    
Previously, the fastest known algorithm for finding such a vector was the $2^{.802n + o(n)}$-time algorithm due to~\cite{LWXZShortestLattice11}, which actually found a non-zero lattice vector with length $O(1) \cdot \lambda_1(\lat)$. Though we do not show how to find lattice vectors with this length in time $2^{n/2+o(n)}$, we do show that our algorithm suffices for the most important application of such algorithms: basis reduction. In particular, we show a modified version of Gama and Nguyen's slide-reduction algorithm~\cite{GNFindingShort08}, which can be combined with the algorithm above to improve the time-length tradeoff for shortest-vector algorithms in nearly all regimes---including the regimes relevant to cryptography.
\end{abstract}

\thispagestyle{empty}
\newpage
\tableofcontents
\pagenumbering{roman}
\newpage
\pagenumbering{arabic}

\section{Introduction}
    A lattice $\lat \subset \R^n$ is the set of integer linear combinations  
	\[
	\lat := \lat(\basis) = \{z_1 \vec{b}_1 + \cdots + z_n \vec{b}_n \ : \ z_i \in \Z \}
	\]
	of linearly independent basis vectors $\basis = (\vec{b}_1,\ldots, \vec{b}_n) \in \R^{n \times n}$. 
	We define the length of a shortest non-zero vector in the lattice as $\lambda_1(\lat) := \min_{\vec{x}\in \lat_{\neq \vec0}} \|\vec{x}\|$. (Throughout this paper, $\|\cdot\|$ is the Euclidean norm.)
	
	The Shortest Vector Problem (SVP) is the computational search problem whose input is a (basis for a) lattice $\lat \subseteq \R^n$, and the goal is to output a shortest non-zero vector $\vec{y} \in \lat$ with $\|\vec{y}\| = \lambda_1(\lat)$.  For $\delta \geq 1$, the $\delta$-approximate variant of SVP ($\delta$-SVP) is the problem of finding a non-zero  vector $\vec{y} \in \lat$ of length at most $\delta \cdot \lambda_1(\lat)$ given a basis of $\lat$. 
	
	$\delta$-SVP and its many relatives have found innumerable applications over the past forty years. More recently, many cryptographic constructions have been discovered whose security is based on the (worst-case) hardness of $\delta$-SVP or closely related lattice problems. See~\cite{PeiDecadeLattice16} for a survey. Such lattice-based cryptographic constructions are likely to be used in practice on massive scales (e.g., as part of the TLS protocol) in the not-too-distant future~\cite{NISPostQuantumCryptography18}.
	
	For most applications, it suffices to solve $\delta$-SVP for superconstant approximation factors. E.g., cryptanalysis typically requires $\delta = \poly(n)$. However, our best algorithms for $\delta$-SVP work via (non-trivial) reductions to $\delta'$-SVP for much smaller $\delta'$ \emph{over lattices with smaller rank}, typically $\delta' = 1$ or $\delta' = O(1)$. E.g., one can reduce $n^c$-SVP with rank $n$ to $O(1)$-SVP with rank $n/(c+1)$ for constant $c \geq 1$~\cite{GNFindingShort08,ALNSSlideReduction19}. Such reductions are called \emph{basis reduction algorithms}~\cite{LLLFactoringPolynomials82,SchHierarchyPolynomial87,SELatticeBasis94}.
	
	Therefore, even if one is only interested in $\delta$-approximate SVP for large approximation factors, algorithms for $O(1)$-SVP are still relevant. (We make little distinction between exact SVP and $O(1)$-SVP in the introduction.)
	
    \subsection{Sieving for constant-factor-approximate SVP}	
	
     There is thus a very long line of work~\cite{KanImprovedAlgorithms83,AKSSieveAlgorithm01,NVSieveAlgorithms08,PSSolvingShortest09,MVDeterministicSingle13,LWXZShortestLattice11,WLWFindingShortest15,ADRSSolvingShortest15,ASJustTake18,AUV19} on this problem.

	The fastest known algorithms for $O(1)$-SVP run in time $2^{O(n)}$. With one exception (\cite{MVDeterministicSingle13}), all known algorithms with this running time are \emph{sieving algorithms}. These algorithms work by sampling $2^{O(n)}$ not-too-long lattice vectors $\vec{y}_1,\ldots, \vec{y}_M \in \lat$ from some nice distribution over the input lattice $\lat$, and performing some kind of sieving procedure to obtain $2^{O(n)}$ shorter vectors $\vec{x}_1,\ldots, \vec{x}_m
	\in \lat$. They then perform the sieving procedure again on the $\vec{x}_k$, and repeat this process many times. 
	
	The most natural sieving procedure was originally studied by Ajtai, Kumar, and Sivakumar~\cite{AKSSieveAlgorithm01}. This procedure simply takes $\vec{x}_k := \vec{y}_i - \vec{y}_j \in \lat$, where $i,j$ are chosen so that $\|\vec{y}_i - \vec{y}_j\| \leq (1-\eps)\min_\ell \|\vec{y}_\ell\|$. In particular, the resulting sieving algorithm clearly finds progressively shorter lattice vectors at each step. So, it is trivial to show that this algorithm will eventually find a short lattice vector. Unfortunately (and maddeningly), it seems very difficult to say nearly anything else about the distribution of the vectors when this very simple sieving technique is used, and in particular, while we know that the vectors must be short, we do not know how to show that they are \emph{non-zero}.~\cite{AKSSieveAlgorithm01} used clever tricks to modify the above procedure into one for which they could prove correctness, and the current state-of-the-art is a $2^{0.802n}$-time algorithm for $\gamma$-SVP for a sufficiently large constant $\gamma > 1$~\cite{LWXZShortestLattice11,WLWFindingShortest15,AUV19}.
	
	In this work, we are more interested in the ``sieving by averages'' technique, introduced in~\cite{ADRSSolvingShortest15} to obtain a $2^{n+o(n)}$-time algorithm for exact SVP. This sieving procedure takes $\vec{x}_k := (\vec{y}_i + \vec{y}_j)/2$ to be the average of two lattice vectors. Of course, $\lat$ is not closed under taking averages, so one must choose $i,j$ so that that $(\vec{y}_i + \vec{y}_j)/2 \in \lat$. This happens if and only if $\vec{y}_i, \vec{y}_j$ lie in the same \emph{coset} of $2\lat$, $\vec{y}_i = \vec{y}_j \bmod 2\lat$. Equivalently, the coordinates of $\vec{y}_i$ and $\vec{y}_j$ in the input basis should have the same parities. So, these algorithms pair vectors according to their cosets (and ignore all other information about the vectors) and take their averages $\vec{x}_k = (\vec{y}_i + \vec{y}_j)/2$. 
	
	The analysis of these algorithms centers around the \emph{discrete Gaussian distribution} $D_{\lat, s}$ over a lattice, given by
	\[
	    \Pr_{\vec{X} \sim D_{\lat, s}}[\vec{X} = \vec{y}] \propto e^{-\pi \|\vec{y}\|^2/s^2}
	\]
	for a parameter $s > 0$ and any $\vec{y} \in \lat$.
	When the starting vectors come from this distribution, we are able to say quite a bit about the distribution of the vectors at each step. (Intuitively, this is because this algorithm only uses algebraic properties of the vectors---their cosets---and entirely ignores the geometry.)  In particular,~\cite{ADRSSolvingShortest15} used a careful rejection sampling procedure to guarantee that the vectors at each step are distributed exactly as $D_{\lat,s}$ for some parameter $s > 0$. Specifically, in each step the parameter lowers by a factor of $\sqrt{2}$, which is exactly what one would expect, taking intuition from the continuous Gaussian.
	More closely related to this work is~\cite{ASJustTake18}, which showed that this rejection sampling procedure is actually unnecessary. 
	
	In addition to the above,~\cite{ADRSSolvingShortest15,NSDthesis} also present a $2^{n/2+o(n)}$-time algorithm that samples from $D_{\lat, s}$ as long as the parameter $s > 0$ is not too small. In particular, we need $s$ to be ``large enough that $D_{\lat,s}$ looks like a continuous Gaussian.'' This algorithm is similar to the $2^{n + o(n)}$-time algorithms in that it starts with independent discrete Gaussian vectors with some high parameter, and it gradually lowers the parameter using a rejection sampling procedure together with a procedure that takes the averages of pairs of vectors that lie in the same coset modulo some sublattice (with index $2^{n/2 + o(n)}$). But, it fails for smaller parameters because the rejection sampling procedure that it uses must throw out too many vectors in this case. (In~\cite{NSDthesis}, a different rejection sampling procedure is used that never throws away too many vectors, but it is not clear how to implement it in $2^{n/2+o(n)}$ time for small parameters $s < \sqrt{2} \eta_{1/2}(\lat)$.) It was left as an open question whether there is a suitable variant of this algorithm that works for small parameters, which would lead to an algorithm to solve SVP in $2^{n/2 + o(n)}$ time. For example, perhaps we could show that this algorithm solves SVP without doing any rejection sampling at all, as we showed for the $2^{n + o(n)}$-time algorithm in~\cite{ASJustTake18}.
	
	\subsection{Hermite SVP}
	
	We will also be interested in a variant of SVP called Hermite SVP (HSVP). HSVP is defined in terms of the determinant $\det(\lat) := |\det(\basis)|$ of a lattice $\lat$ with basis $\basis$. (Though a lattice can have many bases, one can check that $|\det(\basis)|$ is the same for all such bases, so that this quantity is well-defined.) Minkowski's celebrated theorem says that $\lambda_1(\lat) \leq O(\sqrt{n}) \cdot \det(\lat)^{1/n}$, and Hermite's constant $\gamma_n = \Theta(n)$ is the maximal value of $\lambda_1(\lat)^2/\det(\lat)^{2/n}$. (Hermite SVP is of course named in honor of Hermite and his study of $\gamma_n$. It is often alternatively called Minkowski SVP.)
	
	For $\delta \geq 1$, it is then natural to define $\delta$-HSVP as the variant of SVP that asks for any non-zero lattice vector $\vec{x} \in \lat$ such that $\|\vec{x}\| \leq \delta \det(\lat)^{1/n}$. One typically takes $\delta \geq \sqrt{\gamma_n} \geq \Omega(\sqrt{n})$, in which case the problem is total. In particular, there is a trivial reduction from $\delta\sqrt{\gamma_n}$-HSVP to $\delta$-SVP. (There is also a non-trivial reduction from $\delta^2$-SVP to $\delta$-HSVP for $\delta \geq \sqrt{\gamma_n}$~\cite{LovAlgorithmicTheory86}.)
	
	$\delta$-HSVP is an important problem in its own right. In particular, the random lattices most often used in cryptography typically satisfy $\lambda_1(\lat) \geq \Omega(\sqrt{n}) \cdot \det(\lat)^{1/n}$, so that for these lattices $\delta$-HSVP is equivalent to $O(\delta/\sqrt{n})$-SVP. This fact is quite useful because the best known basis reduction algorithms~\cite{GNFindingShort08,MWPracticalPredictable16,ALNSSlideReduction19} yield solutions to both $\delta_S$-SVP and $\delta_H$-HSVP with, e.g.,
	\begin{equation}
	\label{eq:ALNS}
	\delta_H :=  \gamma_{k}^{\frac{n-1}{2(k-1)}} \approx k^{n/(2k)}
	\qquad
	\delta_S :=
 \gamma_{k}^{\frac{n-k}{k-1}} \approx k^{n/k-1}
	\; ,
	\end{equation}
	when given access to an oracle for (exact) SVP in dimension $k \leq n/2$. Notice that $\delta_H$ is significantly better than the approximation factor $\sqrt{\gamma_n} \delta_S  \approx \sqrt{n} k^{n/k-1}$ that one obtains from the trivial reduction to $\delta_S$-SVP. (Furthermore, the approximation factor $\delta_H$ in Eq.~\eqref{eq:ALNS} holds for any $k \leq n$.)
	
	In fact, it is easy to check that we will achieve the same value of $\delta_H$ if the reduction is instantiated with a $\sqrt{\gamma_k}$-HSVP oracle in dimension $k$, rather than an SVP oracle. More surprisingly, a careful reading of the proofs in~\cite{GNFindingShort08,ALNSSlideReduction19} shows that a $\sqrt{\gamma_k}$-HSVP oracle is ``almost sufficient'' to even solve $\delta_S$-SVP. (We make this statement a bit more precise below.)

\subsection{Our results}

Our main contribution is a simplified version of the $2^{n/2 + o(n)}$-time algorithm from~\cite{ADRSSolvingShortest15} and a novel analysis of the algorithm that gives an approximation algorithm for both SVP and HSVP.

\begin{theorem}[Informal, approximation algorithm for (H)SVP]\label{thm:2^n/2_intro}
There is a $2^{n/2 + o(n)}$-time algorithm that solves $\delta$-SVP and $\delta$-HSVP for $\delta \leq \widetilde{O}(\sqrt{n})$. 
\end{theorem}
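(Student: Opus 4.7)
The plan is to use a simplified variant of the \cite{ADRSSolvingShortest15} ``sieve by averages'' sampler, letting its Gaussian parameter halve from a large initial value all the way down to the smallest $s$ for which the sampler still produces non-zero output; the algorithm outputs the shortest non-zero lattice vector produced in any round. The sampler is the ADRS halving scheme without the rejection-sampling step, in the spirit of \cite{ASJustTake18}: start from polynomially many independent samples of $D_{\cL, s_0}$ at a large initial $s_0$, iteratively pair up vectors that lie in the same coset of $2\cL$, and replace each pair by its midpoint $(\vec{y}_i+\vec{y}_j)/2\in\cL$. When the parent distribution is a discrete Gaussian above smoothing, this midpoint is statistically close to a sample from $D_{\cL, s/\sqrt{2}}$, and the coset-pairing step organizes into $2^{n/2+o(n)}$ time per round.

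To bound the output length, at any round with parameter $s$ the sample distribution should be close to $D_{\cL_s, s}$ for the maximal sublattice $\cL_s \subseteq \cL$ on which $s$ is above smoothing, so a sample has length at most $\sqrt{\rank(\cL_s)}\cdot s \leq \sqrt{n}\cdot s$ with high probability. By the reverse Minkowski theorem of \cite{Regev:2017:RMT:3055399.3055434}, the ``good'' value of $s$ is controlled by the densest sublattice structure: for HSVP, at $s = \widetilde{O}(\det(\cL)^{1/n})$ reverse Minkowski ensures that the relevant $\cL_s$ is non-trivial and carries a constant fraction of the Gaussian mass, yielding a non-zero sample of length $\widetilde{O}(\sqrt{n})\cdot\det(\cL)^{1/n}$; for SVP, at $s = \widetilde{O}(\lambda_1(\cL))$, the rank-one sublattice $\mathbb{Z}\vec{v}$ generated by a shortest vector $\vec{v}$ already satisfies $\eta_{1/2}(\mathbb{Z}\vec{v}) = O(\lambda_1)$, so by the time halving reaches this $s$ the support $\cL_s$ still contains $\vec{v}$, yielding a sample of length $\widetilde{O}(\sqrt{n})\cdot\lambda_1(\cL)$. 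Both bounds are realized among the $O(n)$ geometrically spaced halving rounds, so the algorithm's output simultaneously meets the SVP and HSVP targets.

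The main obstacle is precisely the novel analysis of the rejection-free halving sampler at parameters $s$ below the global smoothing parameter $\eta_{1/2}(\cL)$, exactly the regime left open in \cite{ADRSSolvingShortest15}. In this regime the discrete Gaussian concentrates sharply on a proper sublattice, and the danger is that only a few cosets of $2\cL$ receive non-negligible mass, starving the pairing step. Reverse Minkowski is precisely what controls this coset shrinkage: applied to the relevant intermediate sublattice $\cL_s$, it certifies that the Gaussian on $\cL_s$ spreads across at least $2^{n/2+o(n)}$ distinct cosets of $2\cL_s$, so pairing survives. Threading this quantitative control through all $\Theta(n)$ halving rounds, so that the composed statistical error stays $o(1)$ and the total runtime stays $2^{n/2+o(n)}$, is the technical core of the proof.
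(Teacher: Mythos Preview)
Your proposal has the right conceptual ingredients---the reverse Minkowski theorem and the idea of tracking an unknown smooth sublattice---but it has a genuine algorithmic gap that breaks the $2^{n/2}$ running time claim.

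You describe the sampler as ``pair up vectors that lie in the same coset of $2\cL$ and replace each pair by its midpoint,'' and then assert this ``organizes into $2^{n/2+o(n)}$ time per round.'' But $|\cL/2\cL| = 2^n$, so to guarantee pairs in common cosets you need $\Theta(2^n)$ input vectors per round, not $2^{n/2}$ (and certainly not ``polynomially many''). This is exactly the $2^{n+o(n)}$-time algorithm of \cite{ADRSSolvingShortest15,ASJustTake18}, not the $2^{n/2+o(n)}$ one. The $2^{n/2}$ algorithm requires a different combinatorial structure: a \emph{tower} $\cL_0 \supset \cL_1 \supset \cdots \supset \cL_\ell = \cL$ with $|\cL_{i-1}/\cL_i| = 2^{n/2+o(n)}$ and $2\cL_{i-1} \subseteq \cL_i$, so that each pairing step needs only $2^{n/2+o(n)}$ vectors. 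The paper uses exactly this tower (as did \cite{ADRSSolvingShortest15}), and the vectors climb the tower by \emph{summing} rather than averaging. Your proposal never mentions this structure, and without it the running time is $2^n$, not $2^{n/2}$.

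Your analytical framework is also different from the paper's and not obviously sound. You posit that the round-$s$ distribution is close to $D_{\cL_s, s}$ for the ``maximal sublattice $\cL_s$ on which $s$ is above smoothing.'' The paper does \emph{not} claim the intermediate distributions are close to discrete Gaussians over any lattice; instead it shows they remain \emph{mixtures} $D_{\cL' + \vec{C}, s}$ over cosets of a fixed sublattice $\cL'$ with $\eta_\eps(\cL') = \bar\eta_\eps(\cL) := \min_{\cL'' \subseteq \cL} \eta_\eps(\cL'')$, with the coset random variable $\vec{C}$ left completely uncontrolled. The length bound then comes from a separate symmetry argument tracking $\sum_i \expect\|\vec{X}_i\|^2$ through the sums (cross terms vanish), and the non-zero guarantee comes from $s \gtrsim \eta(\cL')$ regardless of $\vec{C}$. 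Your claim that reverse Minkowski ``certifies that the Gaussian on $\cL_s$ spreads across at least $2^{n/2+o(n)}$ distinct cosets of $2\cL_s$'' is not what reverse Minkowski says, and it is unclear how you would prove it; the paper uses reverse Minkowski only once, to bound $\bar\eta_\eps(\cL) \leq \widetilde{O}(1)\cdot \det(\cL)^{1/n}$.
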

Notice that this algorithm almost achieves the best possible approximation factor $\delta$ for HSVP since there exists a family of lattices for which $\lambda_1(\cL) \geq \Omega(\sqrt{n} \det(\cL)^{1/n})$ (i.e., $\gamma_n \geq \Omega(n)$). So, $\delta$ is optimal for HSVP up to a polylogarithmic factor.

As far as we know, this algorithm might actually solve exact or near-exact SVP, but we do not know how to prove this. However, by adapting the basis reduction algorithms of~\cite{GNFindingShort08,ALNSSlideReduction19}, we show that Theorem~\ref{thm:2^n/2_intro} is nearly as good (when combined with known results) as a $2^{k/2}$-time algorithm for exact SVP in $k$ dimensions, in the sense that we can already nearly match Eq.~\eqref{eq:ALNS} in time $2^{k/2 + o(k)}$ with this.

In slightly more detail, basis reduction procedures break the input basis vectors $\vec{b}_1,\ldots, \vec{b}_n$ into blocks $\vec{b}_{i+1},\ldots, \vec{b}_{i+k}$ of length $k$. They repeatedly call their oracle on (projections of) the lattices generated by these blocks and use the result to update the basis vectors. We observe that the procedures in~\cite{GNFindingShort08,ALNSSlideReduction19} only need to use an SVP oracle on the last block  $\vec{b}_{n-k+1},\ldots, \vec{b}_n$. For all other blocks, an HSVP oracle suffices. Since we now have a faster algorithm for HSVP than we do for SVP, we make this last block a bit smaller than the others, so that we can solve (near-exact) SVP on the last block in time $2^{k/2+ o(k)}$. 

When we instantiate this idea with the $2^{0.802n}$-time algorithm for $O(1)$-SVP from~\cite{LWXZShortestLattice11,WLWFindingShortest15,AUV19}, it yields the following result. Together with Theorem~\ref{thm:2^n/2_intro}, this yields the fastest known algorithms for $\delta$-SVP for all $\delta \gtrsim n^{1/2}$.

\begin{theorem}[Informal]\label{thm:basis_reduction_intro} 
There is a $2^{k/2 + o(k)}$-time algorithm that solves  $\delta_H^*$-HSVP with
\[
	\delta_H^* \approx k^{n/(2k)}
	\; ,
\]
for $k \leq n$
and $\delta_S^*$-SVP with
\[
    \delta_S^* \approx k^{(n/k)-0.62}
    \; ,
\]
for $k \leq n/1.63$.
\end{theorem}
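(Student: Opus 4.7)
The plan is to modify the slide-reduction algorithm of Gama--Nguyen~\cite{GNFindingShort08} (and its strengthening in~\cite{ALNSSlideReduction19}) so that it is driven by an HSVP oracle on every block except possibly the last, and then to supply that oracle with the $2^{n/2+o(n)}$-time $\widetilde{O}(\sqrt{n})$-HSVP algorithm of Theorem~\ref{thm:2^n/2_intro}. The key preliminary observation---already hinted at in the introduction---is that the slide-reduction proofs only really need a Minkowski-style bound $\sqrt{\gamma_k}\det(B)^{1/k}$ on the first Gram--Schmidt norm of each projected block $B$ in order to carry out their telescoping argument on consecutive Gram--Schmidt norms. The one place where a genuine SVP guarantee enters is at the very end, where one needs to compare $\|\vec b_1\|$ to $\lambda_1(\lat)$ itself via the vector found in the terminal block; so a single (approximate) SVP call on the last block suffices, and every other block can be serviced by HSVP.

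Granting this, the HSVP bound is immediate: instantiate slide reduction with block size $k$, calling the algorithm of Theorem~\ref{thm:2^n/2_intro} on each block in time $2^{k/2+o(k)}$. The standard slide-reduction analysis then yields
\[
\delta_H^* \; \lesssim \; \gamma_k^{(n-1)/(2(k-1))} \; \approx \; k^{n/(2k)},
\]
matching the HSVP half of Eq.~\eqref{eq:ALNS} up to the polylogarithmic slack already present in Theorem~\ref{thm:2^n/2_intro}.

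For the SVP bound I would run the same algorithm but make the \emph{last} block smaller than the rest in order to balance the two different oracle costs. Concretely, fix block size $k$ for every non-terminal block, and let the terminal block have dimension $k' \approx k/(2\cdot 0.802)$, so that a single call to the $2^{0.802 n}$-time $O(1)$-SVP algorithm of~\cite{LWXZShortestLattice11,WLWFindingShortest15,AUV19} on that block also costs $2^{k/2+o(k)}$. Plugging an HSVP guarantee of quality $\widetilde{O}(\sqrt{k})$ into the non-terminal blocks and an $O(1)$-SVP guarantee into the terminal block, the slide-reduction telescope gives, up to polylogarithmic factors,
\[
\delta_S^* \; \lesssim \; \gamma_k^{(n-k')/(k-1)} \; \approx \; k^{(n/k)-k'/k} \; \approx \; k^{(n/k)-0.62},
\]
while the requirement that the terminal block coexist with at least one full block of size $k$ translates to $k \lesssim n/1.63$.

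The main obstacle I anticipate is the first step: a careful audit of the proofs in~\cite{GNFindingShort08,ALNSSlideReduction19} to verify that each non-terminal oracle call can be downgraded from SVP to HSVP without loss, and that their analysis still goes through when the terminal block has a different (smaller) dimension than the others. Once this structural point is pinned down, the rest is bookkeeping: balancing the two oracle running times via the choice $k' = k/1.604$, and tracking how the unequal terminal block propagates through the Mordell-style inequalities that underlie the slide-reduction bound.
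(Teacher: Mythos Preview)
Your proposal is correct and matches the paper's approach essentially point for point: the paper defines a variant of slide reduction in which every non-terminal block is $\delta_H$-HSVP-reduced (using the DBKZ machinery of~\cite{MWPracticalPredictable16} with an HSVP oracle) while only the terminal block, of independent length $\ell$, is $\delta_S$-SVP-reduced, and then instantiates this with the $2^{k/2+o(k)}$-time $\widetilde{O}(\sqrt{k})$-HSVP algorithm on the size-$k$ blocks and the $2^{0.802\ell+o(\ell)}$-time $O(1)$-SVP algorithm of~\cite{LWXZShortestLattice11} on the last block with $\ell = 0.5k/0.802 \approx k/1.604$. The resulting bound $\|\vec b_1\|\le \delta_S(\delta_H^2)^{(n-\ell)/(k-1)}\lambda_1(\lat)$ is exactly your $\gamma_k^{(n-k')/(k-1)}$ estimate, and the constraint $n\ge k+\ell$ gives the $k\le n/1.63$ threshold you identified.
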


Notice that Theorem~\ref{thm:basis_reduction_intro} matches Eq.~\eqref{eq:ALNS} with block size $k$ exactly for $\delta_H$, and up to a factor of $k^{0.37}$ for $\delta_S$. This small loss in approximation factor comes from the fact that our last block is slightly smaller than the other blocks.

Together, Theorems~\ref{thm:2^n/2_intro} and~\ref{thm:basis_reduction_intro} give the fastest proven 
running times 
for $n^c$-HSVP for all $c > 1/2$ and for $n^c$-SVP for all $c > 1$, as well as $c \in (1/2,0.802)$. Table \ref{Table:approximat SVP} summarizes the current state of the art.

\begin{table}[t] 
\begin{center}
	\scalebox{0.88}{
	\begin{tabular}{|l|l|l l| l l |}%
		\hline
		Problem &Approximation factor & \multicolumn{2}{|l|}{Previous Best} & \multicolumn{2}{|l|}{This work} \\ \hline
		 \multirow{5}{*}{SVP} &Exact  &$2^n$ [*] &\cite{ADRSSolvingShortest15}  &  --- &\\ %
		 &$O(1)$ &$2^{0.802n}$ [*] &\cite{WLWFindingShortest15}  &  --- &\\
			 &$n^c$ for $c\in (0.5,0.802]$  &$2^{\frac{0.401n}{c}}$ &\cite{ALNSSlideReduction19}  &  $2^{\frac{n}{2}}$[*] &\\
				 &$n^c$ for $c\in (0.802,1]$  &$2^{\frac{0.401n}{c}}$ &\cite{ALNSSlideReduction19}  &  --- &\\
				 &$n^c$ for $c > 1$ &$2^{\frac{0.802 n}{c+1}}$ &\cite{ALNSSlideReduction19}  &  $2^{\frac{n}{2c+1.24}}$ &\\
				 \hline
				\multirow{2}{*}{HSVP} &$\sqrt{n}$\rule{0pt}{2.3ex}  &$2^{0.802n}$ [*] &\cite{WLWFindingShortest15}  &  $2^{\frac{n}{2}}$ [*] &\\
					 &$n^c$ for $c \ge 1$  &$2^{\frac{0.401n}{c}}$ &\cite{ALNSSlideReduction19}  &  $2^{\frac{n}{4c}}$ &\\
	 \hline
	\end{tabular}
	}
		\caption{\label{Table:approximat SVP}Proven running times for solving (H)SVP. We mark results that do not use basis reduction with [*]. We omit $2^{o(n)}$ factors in the running time, and except in the first two rows, polylogarithmic factors in the approximation factor.
	}
\end{center}
\end{table}

\subsection{Our techniques}
\label{sec:techniques}

\subsubsection{Summing vectors over a tower of lattices}

Like the $2^{n/2 + o(n)}$-time algorithm in~\cite{ADRSSolvingShortest15}, our algorithm for $\widetilde{O}(\sqrt{n})$-(H)SVP constructs a tower of lattices  $\cL_0 \supset \cL_1 \supset \cdots \supset \cL_{\ell} = \cL$ such that for every $i\ge 1$, $2\cL_{i-1} \subset \cL_i$. The index of $\cL_i$ over $\cL_{i-1}$ is $2^{\alpha}$ for an integer $\alpha = n/2 + o(n)$, and $\ell = o(n)$. For the purpose of illustrating our ideas, we make a simplifying assumption here that $\ell \alpha$ is an integer multiple of $n$, and hence $\cL_0 = \cL/2^{\alpha \ell/n}$ is a scalar multiple of $\cL$.

And, as in~\cite{ADRSSolvingShortest15}, we start by sampling $\vec{X}_1,\ldots, \vec{X}_N \in \lat_0$ for $N = 2^{\alpha + o(n)}$ from $D_{\lat_0, s}$. This can be done efficiently using known techniques, as long as $s$ is large relative to, e.g., the length of the shortest basis of $\lat_0$~\cite{GPVTrapdoorsHard08,BLPRS13}. Since $\lat_0 = \cL/2^{\alpha \ell/n}$, the parameter $s$ can still be significantly smaller than, e.g., $\lambda_1(\lat)$. In particular, we can essentially take $s \leq \poly(n) \lambda_1(\lat)/2^{\alpha \ell/n}$.

The algorithm then takes disjoint pairs of vectors that are in the same coset of $\cL_0/\cL_1$, and adds the pairs together. Since $2\cL_0 \subset \cL_1$, for any such pair $\vec{X}_i, \vec{X}_i$, $\vec{Y}_k = \vec{X}_i + \vec{X}_j$ is in $\cL_1$.  (This adding is analogous to the averaging procedure from~\cite{ADRSSolvingShortest15,ASJustTake18} described above. In that case, $\cL_1 = 2\cL_0$, so that it is natural to divide vectors in $\cL$ by two, while here adding seems more natural.) We thus obtain approximately $N/2$ vectors in $\cL_1$ (up to the loss due to the vectors that could not be paired), and repeat this procedure many times, until finally we obtain vectors in $\lat_\ell = \lat$, each the sum of $2^\ell$ of the original $\vec{X}_i$.

To prove correctness, we need to prove that with high probability some of these vectors will be \emph{both} short and non-zero. It is actually relatively easy to show that the vectors are short---at least in expectation. To prove this, we first use the fact that the expected squared norm of the $\vec{X}_i$ is bounded by $n s^2$ (which is what one would expect from the continuous Gaussian distribution). And, the original $\vec{X}_i$ are distributed symmetrically, i.e., $\vec{X}_i$ is as likely to equal $-\vec{x}$ as it is to equal $\vec{x}$). 

Furthermore, our pairing procedure is symmetric, i.e., if we were to replace $\vec{X}_i$ with $-\vec{X}_i$, the pairing procedure would behave identically. (This is true precisely because $2\lat_{0} \subset \lat_1$---we are using the fact that $\vec{x} = -\vec{x} \bmod \lat_1$ for any $\vec{x} \in \lat_0$.) This implies that 
\[
\expect[\langle \vec{X}_i,\vec{X}_j\rangle \ | \ E_{i,j} ] = \expect[\langle \vec{X}_i,-\vec{X}_j\rangle \ | \ E_{i,j} ] = 0
\; ,
\]
where $E_{i,j}$ is the event that $\vec{X}_i$ is paired with $\vec{X}_j$.
Therefore,
\[
     \expect[\|\vec{X}_i + \vec{X}_j\|^2 \ | \ E_{i,j}] = \expect[\|\vec{X}_i\|^2 \ | \ E_{i,j}] + \expect[\|\vec{X}_j\|^2 \ | \ E_{i,j}] + 2\expect[\langle\vec{X}_i, \vec{X}_j \rangle \ | \ E_{i,j}] \approx 2\expect[\|\vec{X}_i\|^2]
     \; .
\]
The same argument works at every step of the algorithm. So, (if we ignore the subtle distinction between $\expect[\|\vec{X}_i\|^2 \ | \ E_{i,j}]$ and $\expect[\|\vec{X}_i\|^2]$), we see that our final vectors have expected squared norm 
\begin{equation}
    \label{eq:length_of_output_intro}
    2^\ell \expect[\|\vec{X}_i\|^2] \leq 2^\ell n s^2 \leq \poly(n) 2^{\ell(1-2\alpha n)} \cdot \lambda_1(\lat)^2
    \; .
\end{equation}
By taking, e.g., $\alpha = n/2 + n/\log n < n + o(n)$ and $\ell = \log^2 n$, we see that we can make this expectation small relative to $\lambda_1(\lat)$.

The difficulty, then, is ``only'' to show that the distribution of the final vectors is not heavily concentrated on zero. Of course, we can't hope for this to be true if, e.g., the expectation in Eq.~\eqref{eq:length_of_output_intro} is much smaller than $\lambda_1(\lat)^2$. And, as we will discuss below, if we choose $\alpha$ and $\ell$ so that this expectation is sufficiently large, then techniques from prior work can show that the probability of zero is low. Our challenge is therefore to bound the probability of zero for the largest choices of $\alpha$ and $\ell$ (and therefore the lowest expectation in Eq.~\eqref{eq:length_of_output_intro}) that we can manage.

\subsubsection{Gaussians over unknown sublattices}

Peikert and Micciancio (building on prior work) showed what they called a ``convolution theorem'' for discrete Gaussians. Their theorem says that the sum of discrete Gaussian vectors is statistically close to a discrete Gaussian (with parameter increased by a factor of $\sqrt{2}$), provided that the parameter $s$ is a bit larger than the \emph{smoothing parameter} $\eta(\lat)$ of the lattice $\lat$~\cite{MPHardnessSIS13}. This (extremely important) parameter $\eta(\lat)$, was introduced by Micciancio and Regev~\cite{MR07}, and has a rather technical (and elegant) definition. (See Section~\ref{sec:smoothing}.) Intuitively, $\eta(\lat)$ is minimal such that for any $s > \eta(\lat)$, $D_{\lat, s}$ ``looks like a continuous Gaussian distribution.'' E.g., for $s > \eta(\lat)$, the moments of the discrete Gaussian distribution are quite close to the moments of the continuous Gaussian distribution (with the same parameter). 

In fact,~\cite{MPHardnessSIS13} showed a convolution for lattice \emph{cosets}, not just lattices, i.e., the sum of a vector sampled from $D_{\lat + \vec{t}_1, s}$ and a vector sampled from $D_{\lat + \vec{t}_2, s}$ yields a vector distributed as $D_{\lat + \vec{t}_1 + \vec{t}_2, \sqrt{2} s}$. Since our algorithm sums vectors sampled from a discrete Gaussian over $\lat_0$, conditioned on their cosets modulo $\lat_1$, it is effectively summing discrete Gaussians over cosets of $\lat_1$. So, as long as we stay above the smoothing parameter of $\lat_1 \supset \lat$, our vectors will be statistically close to discrete Gaussians, allowing us to easily bound the probability of zero.

However,~\cite{ADRSSolvingShortest15} already showed how to use a variant of this algorithm to obtain samples from \emph{exactly} the discrete Gaussian above smoothing. And, more generally, there is a long line of work that uses samples from the discrete Gaussian above smoothing to find ``short vectors'' from a lattice, but the length of these short vectors is always proportional to $\eta(\lat)$. The problem is that in general $\eta(\lat)$ can be arbitrarily larger than $\lambda_1(\lat)$ and $\det(\lat)^{1/n}$. (To see this, consider the two-dimensional lattice generated by $(T,0), (0,1/T)$ for large $T$, which has $\eta(\lat) \approx T$, $\lambda_1(\lat) = 1/T$ and $\det(\lat) = 1$.) So, this seems useless for solving (H)SVP, instead yielding a solution to another variant of SVP called SIVP.\footnote{It is not known how to use an SIVP oracle for basis reduction, which makes it significantly less useful than SVP. \cite{MR07,MPHardnessSIS13} and other works used these ideas to reduce SIVP to the problem of breaking a certain cryptosystem, in order to argue that the cryptosystem is secure. They were therefore primarily interested in SIVP as an example of a hard lattice problem, rather than as a problem that one might actually wish to solve.}

Our solution is essentially to apply these ideas from~\cite{MPHardnessSIS13} to an \emph{unknown} sublattice $\lat' \subseteq \lat$. (Here, one should imagine a sublattice generated by fewer than $n$ vectors. Jumping ahead a bit, the reader might consider the example $\lat' = \Z \vec{v} = \{\vec0, \pm \vec{v}, \pm 2\vec{v},\ldots,\}$, the rank-one sublattice generated by $\vec{v}$, a shortest non-zero vector in the lattice.) Indeed, the discrete Gaussian over $\lat$, $D_{\lat, s}$, can be viewed as a \emph{mixture of discrete Gaussians} over cosets of $\lat'$, $D_{\lat, s} = D_{\lat' + \vec{C}, s}$, where $\vec{C} \in \lat/\lat'$ is some random variable over cosets of $\lat'$. (Put another way, one could obtain a sample from $D_{\lat, s}$ by first sampling a coset $\vec{C} \in \lat/\lat'$ from some appropriately chosen distribution and then sampling from $D_{\lat' + \vec{C}, s}$.) 

The basic observation behind our analysis is that we can now apply (a suitable variant of)~\cite{MPHardnessSIS13}'s convolution theorem in order to see that the sum of two mixtures of Gaussians over $\lat'$, $\vec{X}_1, \vec{X}_2 \sim D_{\lat' + \vec{C}, s}$, yields a new mixture of Gaussians $D_{\lat' + \vec{C}', \sqrt{2} s}$ for \emph{some} $\vec{C}'$, provided that $s$ is sufficiently large \emph{relative to $\eta(\lat')$}. 

Ignoring \emph{many} technical details, this shows that our algorithm can be used to output a distribution of the form $D_{\lat' + \vec{C}, s}$ for some random variable $\vec{C} \in \lat/\lat'$ provided that $s \gg \eta(\lat')$. Crucially, we only need to consider $\lat'$ in the analysis; the algorithm does not need to know what $\lat'$ is for this to work. Furthermore, we do not care at all about the distribution of $\vec{C}$! We already know that our algorithm samples from a distribution that is short in expectation (by the argument above), so that the only thing we need from the distribution $D_{\lat' + \vec{C}, s}$ is that it is not zero too often. Indeed, when $\vec{C}$ is not the zero coset (i.e., $\vec{C} \notin \lat'$), then $D_{\lat' + \vec{C},s}$ is never zero, and when $\vec{C}$ is zero, then we get a sample from $D_{\lat', s}$ for $s \gg \eta(\lat')$, in which case well-known techniques imply that we are unlikely to get zero.

\subsubsection{Smooth sublattices}

So, in order to prove that our algorithm finds short vectors, it remains to show that there exists some sublattice $\lat' \subseteq \lat$ with low smoothing parameter---a ``smooth sublattice.'' In more detail, our algorithm will find a non-zero vector with length less than $\sqrt{n} \cdot \eta(\lat')$ for any sublattice $\lat'$. Indeed, as one might guess, taking $\lat' = \Z \vec{v} = \{\vec0, \pm \vec{v}, \pm 2 \vec{v},\ldots,\}$ to be the lattice generated by a shortest non-zero vector $\vec{v}$, we have $\eta(\lat') = \polylog(n) \|\vec{v}\| = \polylog(n)\lambda_1(\lat)$ (where the polylogarithmic factor arises because of ``how smooth we need $\lat'$ to be''). This immediately yields our $\widetilde{O}(\sqrt{n})$-SVP algorithm.

To solve $\widetilde{O}(\sqrt{n})$-HSVP, we must argue that every lattice has a sublattice $\lat' \subseteq \lat$ with $\eta(\lat') \leq \polylog(n) \cdot \det(\lat)^{1/n}$. In fact, for very different reasons, Dadush conjectured \emph{exactly} this statement (phrased slightly differently), calling it a ``reverse Minkowski conjecture''~\cite{DRStrongReverse16}. (The reason for this name might not be clear in this context, but one can show that this is a partial converse to Minkowski's theorem.) Later, Regev and Stephens-Davidowitz proved the conjecture~\cite{Regev:2017:RMT:3055399.3055434}. Our result then follows from this rather heavy hammer.

\subsection{Open questions and directions for future work}

We leave one obvious open question: Does our algorithm (or some variant) solve $\gamma$-SVP for a better approximation factor? It is clear that our current analysis cannot hope to do better than $\delta \approx \sqrt{n}$, but we see no fundamental reason why the algorithm cannot achieve, say, $\delta = \polylog(n)$ or even $\delta = 1$! (Indeed, we have been trying to prove something like this for roughly five years.) 

We think that even a negative answer to this question would also be interesting. In particular, it is not currently clear whether our algorithm is ``fundamentally an HSVP algorithm.'' For example, if one could show that our algorithm fails to output vectors of length $\polylog(n) \cdot \lambda_1(\lat)$ for some family of input lattices $\lat$, then this would be rather surprising. Perhaps such a result would be our first hint at a true algorithmic separation between the optimal running times for the two problems. 

\section{Preliminaries}

 We write $\log$ for the base-two logarithm. We use the notation $a = 1\pm \delta$ and $a=e^{\pm \delta}$ to denote the statements $1-\delta \leq a \leq 1+\delta$ and $e^{-\delta} \leq a \leq e^{\delta}$, respectively.

\begin{definition}
	We say that a distribution $\widehat{D}$ is $\delta$-similar to another distribution $D$ if for all $\vec{x}$ in the support of $D$, we have
	\[
	\Pr_{\vec{X}\sim \widehat{D}}[\vec{X} = \vec{x}] =  e^{\pm \delta} \cdot \Pr_{\vec{X}\sim D}[\vec{X} = \vec{x}]
	\; .
	\]
\end{definition}

\subsection{Probability}

The following inequality gives a concentration result for the values of (sub-)martingales that have bounded differences. 
\begin{lemma} [{{\cite[Azuma's inequality, Chapter 7]{AS04}}}]
\label{lem:Azuma}
Let $X_0, X_1, \ldots$ be a set of random variables that form a discrete-time sub-martingale, i.e., for all $n\ge 0$, \[\mathbb{E}[X_{n+1} \: | \: X_1, \ldots, X_n] \ge X_n \;. \] If for all $n \ge 0$, $|X_{n} - X_{n-1}| \le c$, then for all integers $N$ and positive real $t$, \[\Pr[X_N - X_0 \le - t] \le \exp\left(\frac{ -t^2}{2 Nc^2}\right) \; . \]
\end{lemma}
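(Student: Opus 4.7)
The plan is to use the exponential moment (Chernoff) method, which is the standard route to martingale concentration. I would first fix $\lambda > 0$ (to be optimized later) and apply Markov's inequality to the non-negative random variable $e^{-\lambda(X_N - X_0)}$:
\[
\Pr[X_N - X_0 \le -t] = \Pr\bigl[e^{-\lambda(X_N - X_0)} \ge e^{\lambda t}\bigr] \le e^{-\lambda t}\, \mathbb{E}\bigl[e^{-\lambda(X_N - X_0)}\bigr].
\]
This reduces everything to bounding the moment generating function of the total increment $X_N - X_0 = \sum_{n=1}^N D_n$, where $D_n := X_n - X_{n-1}$.

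Next I would bound this MGF iteratively by conditioning on the past and applying the tower property of conditional expectation. The key per-step estimate is that for each $n$, conditionally on $X_0,\ldots,X_{n-1}$ one has $|D_n| \le c$ almost surely and $\mu_n := \mathbb{E}[D_n \mid X_0,\ldots,X_{n-1}] \ge 0$ by the sub-martingale hypothesis, and under these two assumptions
\[
\mathbb{E}\bigl[e^{-\lambda D_n}\,\big|\, X_0,\ldots,X_{n-1}\bigr] \le \exp(\lambda^2 c^2/2).
\]
I would prove this by writing the conditional expectation as $e^{-\lambda \mu_n}\cdot \mathbb{E}[e^{-\lambda(D_n - \mu_n)}\mid \cdots] \le \mathbb{E}[e^{-\lambda(D_n - \mu_n)}\mid \cdots]$, using $\mu_n \ge 0$ to discard the prefactor, and then invoking Hoeffding's lemma on the conditionally mean-zero random variable $D_n - \mu_n$, whose support has length at most $2c$. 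Iterating this bound $N$ times yields $\mathbb{E}[e^{-\lambda(X_N - X_0)}] \le \exp(N\lambda^2 c^2/2)$, so plugging back into Markov gives $\Pr[X_N - X_0 \le -t] \le \exp(-\lambda t + N\lambda^2 c^2/2)$. Optimizing at $\lambda = t/(Nc^2) > 0$ then gives exactly the claimed $\exp(-t^2/(2Nc^2))$.

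The only delicate point---and the place where the sub-martingale (as opposed to martingale) hypothesis matters---is the trick $e^{-\lambda \mu_n} \le 1$, which absorbs any positive drift without cost. This asymmetry is essential: the identical argument would fail to produce a matching upper-tail bound for a sub-martingale, since then the drift contributes a factor $e^{+\lambda \mu_n}$ that can be arbitrarily large. Everything else---Markov, the tower property, Hoeffding's lemma, and the one-variable optimization in $\lambda$---is routine, so I expect no real obstacles beyond being careful with this step.
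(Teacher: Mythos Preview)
The paper does not actually prove this lemma; it is stated as a citation from Alon--Spencer with no argument given. Your proof is correct and is the standard exponential-moment argument for Azuma's inequality, handling the sub-martingale case via the observation that $e^{-\lambda\mu_n}\le 1$ when $\mu_n\ge 0$; there is nothing to compare against in the paper itself.
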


We will need the following corollary of the above inequality.
\begin{cor}
\label{cor:Azuma}
Let $\alpha \in (0,1)$, and let $ Y_1, Y_2, Y_3, \ldots$ be random variables in $[0,1]$ such that for all $n \ge 0$
\[
\mathbb{E}[Y_{n+1}|Y_1, \ldots, Y_n] \ge \alpha \;.
\]
Then, for all positive integers $N$ and positive real $t$, 
\[
\Pr[\sum_{i=1}^N Y_i \le N \alpha - t] \le \exp\left(\frac{ -t^2}{2 N}\right) \; . \]
\end{cor}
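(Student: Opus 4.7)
The plan is to reduce directly to Azuma's inequality (Lemma~2.1) by constructing the appropriate centered sub-martingale from the sequence $Y_1, Y_2, \ldots$. Specifically, I would define $X_0 := 0$ and, for $n \ge 1$,
\[
X_n := \sum_{i=1}^n (Y_i - \alpha).
\]
Since $X_1, \ldots, X_n$ and $Y_1, \ldots, Y_n$ determine each other (via $Y_i = X_i - X_{i-1} + \alpha$), conditioning on the former is the same as conditioning on the latter. Hence
\[
\mathbb{E}[X_{n+1} \mid X_1, \ldots, X_n] = X_n + \mathbb{E}[Y_{n+1} - \alpha \mid Y_1, \ldots, Y_n] \ge X_n,
\]
using the hypothesis $\mathbb{E}[Y_{n+1} \mid Y_1, \ldots, Y_n] \ge \alpha$. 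So $(X_n)_{n \ge 0}$ is a sub-martingale.

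Next I would verify the bounded-differences condition $|X_n - X_{n-1}| \le c$ with $c = 1$. Since $Y_n \in [0,1]$ and $\alpha \in (0,1)$, we have $Y_n - \alpha \in [-\alpha, 1-\alpha] \subset [-1,1]$, so $|X_n - X_{n-1}| = |Y_n - \alpha| \le 1$.

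Applying Lemma~2.1 with this $c = 1$ yields, for every positive integer $N$ and positive real $t$,
\[
\Pr[X_N - X_0 \le -t] \le \exp\!\left(\frac{-t^2}{2N}\right).
\]
Finally, $X_N - X_0 = \sum_{i=1}^N Y_i - N\alpha$, so the event $\{X_N \le -t\}$ coincides with $\{\sum_{i=1}^N Y_i \le N\alpha - t\}$, giving the claimed bound. There is no real obstacle here; the only point requiring a bit of care is the observation that conditioning on the $X_i$'s is equivalent to conditioning on the $Y_i$'s, which is what lets us transport the sub-martingale hypothesis from the form given in Lemma~2.1 to the form stated for $Y_{n+1}$ in the corollary.
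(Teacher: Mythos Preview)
Your proof is correct and follows exactly the same approach as the paper: define $X_0=0$, $X_n=\sum_{i=1}^n(Y_i-\alpha)$, and apply Azuma's inequality. You have in fact spelled out more of the details (the sub-martingale verification, the bounded-differences check with $c=1$, and the equivalence of conditioning on the $X_i$ versus the $Y_i$) than the paper, which simply states the definition of $X_i$ and asserts that the result follows immediately from Lemma~\ref{lem:Azuma}.
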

\begin{proof}
  Let $X_0 = 0$, and  for all $i \ge 1$,
  \[
  X_i := X_{i-1} + Y_i - \alpha = \sum_{j=1}^i Y_i - i \cdot \alpha \;.
  \]

  The statement then follows immediately from Lemma~\ref{lem:Azuma}. 
    \end{proof}

\subsection{Lattices}

A lattice $\lat \subset \R^n$ is the set of integer linear combinations 
	\[
	\lat := \lat(\basis) = \{z_1 \vec{b}_1 + \cdots + z_k \vec{b}_k \ : \ z_i \in \Z \}
	\]
	of linearly independent basis vectors $\basis = (\vec{b}_1,\ldots, \vec{b}_k) \in \R^{n \times k}$. 
	We call $k$ the \emph{rank} of the lattice. Given a lattice $\lat$, the basis is not unique.
 For any lattice $\lat$, we use $\rank(\lat)$ to denote its rank. 
We use $\lambda_1(\lat)$ to denote the length of the shortest non-zero vector in $\lat$,  and more generally, for $1 \leq i \leq k$,
\[
    \lambda_i(\lat) := \min\{r \ : \ \dim \spn(\{ \vec{y} \in \lat \ : \ \|\vec{y}\| \leq r\}) \geq i\} \; .
\]

For any lattice $\lat \subset \R^n$, its dual lattice $\lat^*$ is defined to be the set of vectors in the span of $\lat$ that have integer inner products with all vectors in $\lat$. More formally:
\[\lat^* = \{\vec{x} \in \spn(\lat): \forall \vec{y} \in \lat, \langle \vec{x},\vec{y}\rangle \in \Z \}\; .\]

We often assume without loss of generality that the lattice is full rank, i.e., that $n = k$, by identifying $\spn(\lat)$ with $\R^k$. However, we do often work with sublattices $\lat' \subseteq \lat$ with $\rank(\lat') < \rank(\lat)$.

For any sublattice $\lat' \subseteq \lat$, $\lat/\lat'$ denotes the set of cosets which are translations of $\lat'$ by vectors in $\lat$. In particular, any coset can be denoted as $\lat' + \vec{c}$ for $\vec{c} \in \lat$. When there is no ambiguity, we drop the $\lat'$ and use $\vec{c}$ to denote a coset.

\subsection{The discrete Gaussian Distribution}
For any parameter $s > 0$, we define Gaussian mass function $\rho_s : \R^n \rightarrow \R$ to be:
\[\rho_s(\vec{x}) = \exp\Big(-\frac{\pi \Vert \vec{x} \Vert^2}{s^2}\Big)\; , \]
and for any discrete set $A \subset \R^n$, its Gaussian mass is defined as $\rho_s(A) = \sum_{\vec{x} \in A} \rho_s(\vec{x})$.

For a lattice $\lat \subset \R^n$, shift $\vec{t} \in \R^n$, and parameter $s >0$, we have the following convenient formula for the Gaussian mass of the lattice coset $\lat + \vec{t}$, which follows from the Poisson Summation Formula
\begin{equation}
\label{eq:PSF_gaussian}
    \rho_s(\lat + \vec{t}) = \frac{s^n}{\det(\lat)} \cdot \sum_{\vec{w} \in \lat^*} \rho_{1/s}(\vec{w}) \cos(2\pi \langle \vec{w}, \vec{t} \rangle)
    \; .
\end{equation}
In particular, for the special case $\vec{t} = \vec0$, we have $\rho_s(\lat) = s^n \rho_{1/s}(\lat^*)/\det(\lat)$. 

\begin{definition}
	For a lattice $\lat \subset \R^n$, $\vec{u} \in \R^n$, the discrete Gaussian distribution $\mathcal{D}_{\lat+\vec{u},s}$ over $\lat + \vec{u}$ with parameter $s>0$ is defined as follows. For any $\vec{x} \in \lat + \vec{u}$,
	\[\Pr_{\vec{X}\sim \mathcal{D}_{\lat+\vec{u},s}}[\vec{X} = \vec{x}] = \frac{\rho_s(\vec{x})}{\rho_s(\lat + \vec{u})} \; .\]
\end{definition}

We will need the following result about the discrete Gaussian distribution. 
\begin{lemma} [{{\cite[Lemma 2.13]{DRS14}}}]
\label{lem:DGStail}
For any lattice $\lat \subset \R^n$, $s>0$, $\vec{u} \subset \R^n$, and $t > \frac{1}{\sqrt{2\pi}}$, 
\[
\Pr_{\vec{X} \sim \mathcal{D}_{\lat+\vec{u},s}}(\|\vec{X}\| > ts\sqrt{n}) < \frac{\rho_s(\lat)}{\rho_s(\lat+\vec{u})} \left(\sqrt{2\pi e t^2}\exp(-\pi t^2)\right)^n \; .
\]
\end{lemma}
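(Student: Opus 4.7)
The plan is to follow the classical Banaszczyk-style scaling argument. First, I would rewrite the probability as the ratio
\[
\Pr_{\vec{X}\sim \mathcal{D}_{\lat+\vec{u},s}}(\|\vec{X}\|>ts\sqrt{n}) = \frac{\rho_s\bigl((\lat+\vec{u})\setminus rB_n\bigr)}{\rho_s(\lat+\vec{u})},
\]
where $r:=ts\sqrt{n}$ and $B_n$ denotes the Euclidean unit ball. The denominator is exactly $\rho_s(\lat+\vec{u})$, so the factor $\rho_s(\lat)/\rho_s(\lat+\vec{u})$ in the target bound will emerge automatically once I show that $\rho_s\bigl((\lat+\vec{u})\setminus rB_n\bigr) \le (\sqrt{2\pi e t^2}\,e^{-\pi t^2})^n \cdot \rho_s(\lat)$.

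The core trick will be to ``shrink'' the Gaussian by a factor $\alpha > 1$ in exchange for an exponentially small prefactor that exploits the lower bound on $\|\vec{x}\|$. For any $\vec{x}$ with $\|\vec{x}\| \ge r$ a direct rearrangement of exponents gives
\[
\rho_s(\vec{x}) \le \exp\!\bigl(-\pi r^2(1-\alpha^{-2})/s^2\bigr)\cdot \rho_{s\alpha}(\vec{x}).
\]
Summing this pointwise inequality over $\vec{x} \in (\lat+\vec{u})\setminus rB_n$ yields $\rho_s\bigl((\lat+\vec{u})\setminus rB_n\bigr) \le \exp(-\pi t^2 n(1-\alpha^{-2}))\cdot \rho_{s\alpha}(\lat+\vec{u})$, replacing the tail mass at scale $s$ by a full coset mass at the larger scale $s\alpha$.

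Two applications of Poisson summation (Eq.~\eqref{eq:PSF_gaussian}) will then clean up the right-hand side. First, since $\cos(2\pi\langle \vec{w},\vec{u}\rangle)\le 1$ for every $\vec{w}\in \lat^*$, the identity immediately gives $\rho_{s\alpha}(\lat+\vec{u}) \le \rho_{s\alpha}(\lat)$; this is where the $\vec{u}$ dependence is absorbed. Second, I will compare $\rho_{s\alpha}(\lat) = (s\alpha)^n\rho_{1/(s\alpha)}(\lat^*)/\det(\lat)$ with $\rho_s(\lat) = s^n \rho_{1/s}(\lat^*)/\det(\lat)$ and observe that $\rho_{1/(s\alpha)}(\vec{w}) \le \rho_{1/s}(\vec{w})$ pointwise on $\lat^*$, since $\alpha > 1$ only shrinks the dual parameter. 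This yields $\rho_{s\alpha}(\lat) \le \alpha^n \rho_s(\lat)$.

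Combining, for every $\alpha > 1$ I will have the intermediate bound $\rho_s\bigl((\lat+\vec{u})\setminus rB_n\bigr) \le \alpha^n \exp\bigl(-\pi t^2 n(1-\alpha^{-2})\bigr)\rho_s(\lat)$. The final step is to minimize the coefficient in $\alpha$: a single-variable calculus computation pins the optimum at $\alpha^2 = 2\pi t^2$, which satisfies $\alpha>1$ precisely under the hypothesis $t > 1/\sqrt{2\pi}$. Plugging this optimal $\alpha$ back in collapses the coefficient to exactly $(\sqrt{2\pi e t^2}\,e^{-\pi t^2})^n$, and dividing through by $\rho_s(\lat+\vec{u})$ completes the bound. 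The only step I expect to require genuine care is the monotonicity of $\rho_{1/s}(\lat^*)$ in $s$ used inside the second Poisson-summation application: one must include the origin term $\rho_{1/s}(\vec0) = 1$ correctly so that the inequality $\rho_{s\alpha}(\lat) \le \alpha^n \rho_s(\lat)$ is valid for the whole lattice rather than just its nonzero points.
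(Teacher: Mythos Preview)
Your argument is correct and is precisely the standard Banaszczyk scaling proof: the pointwise inequality $\rho_s(\vec{x}) \le e^{-\pi r^2(1-\alpha^{-2})/s^2}\rho_{s\alpha}(\vec{x})$ for $\|\vec{x}\|\ge r$, the coset bound $\rho_{s\alpha}(\lat+\vec{u})\le\rho_{s\alpha}(\lat)$ via Poisson summation, the comparison $\rho_{s\alpha}(\lat)\le\alpha^n\rho_s(\lat)$ again via Poisson, and the optimization at $\alpha^2=2\pi t^2$ all go through exactly as you describe. Note, however, that the paper does not supply its own proof of this lemma; it is simply quoted from \cite[Lemma~2.13]{DRS14}, and the argument there is essentially the one you have written.
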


\subsection{The smoothing parameter}
\label{sec:smoothing}

\begin{definition}
	For a lattice $\lat \subset \R^n$ and $\epsilon > 0$, the smoothing parameter $\eta_\epsilon(\lat)$ is defined as the unique value that satisfies $\rho_{1/\eta_\epsilon(\lat)}(\lat^* \backslash \{\vec{0}\}) = \epsilon$.
\end{definition}

We will often use the basic fact that $\eta_\eps(\alpha\lat) = \alpha \eta_\eps(\lat)$ for any $\alpha > 0$ and $\eta_\eps(\lat') \geq \eta_\eps(\lat)$ for any \emph{full-rank} sublattice $\lat' \subseteq \lat$.

\begin{claim}[{{\cite[Lemma  3.3]{MR07}}}]
\label{clm:Z_smooth}
    For any $\eps \in (0,1/2)$, we have
    \[
        \eta_\eps(\Z) \leq \sqrt{\log(1/\eps)} \; .
    \]
\end{claim}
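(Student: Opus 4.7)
The plan is to reduce the claim to a one-line geometric-series bound. Since $\Z$ is self-dual ($\Z^* = \Z$), the smoothing parameter $\eta_\eps(\Z)$ is by definition the unique $s > 0$ satisfying $\rho_{1/s}(\Z \setminus \{0\}) = \eps$. The function $s \mapsto \rho_{1/s}(\Z \setminus \{0\}) = \sum_{k \neq 0} e^{-\pi k^2 s^2}$ is strictly decreasing in $s > 0$, so to conclude $\eta_\eps(\Z) \leq \sqrt{\log(1/\eps)}$ it suffices to check that at the specific value $s := \sqrt{\log(1/\eps)}$ we already have $\rho_{1/s}(\Z \setminus \{0\}) \leq \eps$; monotonicity then forces the crossing point to occur at or below $s$.

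The main step is to dominate the theta-series sum by a geometric series. Writing $q := e^{-\pi s^2} < 1$ and using $k^2 \geq k$ for every $k \geq 1$,
\[
\rho_{1/s}(\Z \setminus \{0\}) \;=\; 2\sum_{k=1}^\infty q^{k^2} \;\leq\; 2\sum_{k=1}^\infty q^k \;=\; \frac{2q}{1-q}.
\]
Recalling that $\log$ is base two in this paper, $s^2 = \log_2(1/\eps)$ gives $q = e^{-\pi \log_2(1/\eps)} = \eps^{\pi/\ln 2}$. Since $\pi/\ln 2 > 4$ and $\eps < 1$, we have $q \leq \eps^4$. For $\eps < 1/2$ this already yields $q \leq 1/16$, so that
\[
\frac{2q}{1-q} \;\leq\; \frac{2\eps^4}{1-1/16} \;=\; \frac{32}{15}\,\eps^4 \;\leq\; \eps,
\]
where the last step uses $\tfrac{32}{15}\eps^3 < \tfrac{32}{15}\cdot \tfrac18 < 1$.

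There is no real obstacle here: once one observes that the relevant quantity is a theta-series that can be compared to a geometric series, the argument reduces to a numerical check that goes through with considerable slack on $(0,1/2)$. The only mild subtlety is the factor of $\pi/\ln 2$ that appears when translating between the base-$e$ Gaussian kernel and the paper's convention $\log=\log_2$, but this factor comfortably exceeds $4$, which is already more than enough. An alternative route would apply the Poisson summation identity in Eq.~\eqref{eq:PSF_gaussian} to rewrite $\rho_{1/s}(\Z)$ in terms of $\rho_s(\Z)$ and then bound the resulting tail, but the direct geometric comparison above is cleaner and already sharp enough for this range of $\eps$.
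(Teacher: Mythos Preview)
Your argument is correct. The reduction to showing $\rho_{1/s}(\Z\setminus\{0\})\le\eps$ at $s=\sqrt{\log(1/\eps)}$ via monotonicity is the right move, and the geometric-series bound $2\sum_{k\ge 1}q^{k^2}\le 2q/(1-q)$ together with the numerical check $q=\eps^{\pi/\ln 2}\le \eps^4$ goes through cleanly on $(0,1/2)$.

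As for comparison: the paper does not actually prove this claim---it is quoted from \cite[Lemma~3.3]{MR07} without argument. Your proof is essentially the standard one found in that reference (bound the theta series by a geometric series), with the only wrinkle being the explicit accounting for the $\log=\log_2$ convention, which you handle correctly via the $\pi/\ln 2>4$ observation. There is nothing to add.
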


We will need the following simple results, which follows immediately from Eq.~\eqref{eq:PSF_gaussian}. 
\begin{lemma}[{{\cite[Claim 3.8]{Reg09}}}]
\label{lem:smooth}
For any lattice $\lat$, $s \geq \eta_\epsilon(\lat)$, and any vectors $\vec{c}_1, \vec{c}_2$, we have that
\[
\frac{1-\eps}{1+\eps} \le \frac{\rho_s(\lat + \vec{c}_1)}{\rho_s(\lat + \vec{c}_2)} \le \frac{1+\eps}{1-\eps} \;.
\]
Thus, for $\eps < 1/3$, 
\[
e^{-3\eps} \le \frac{\rho_s(\lat + \vec{c}_1)}{\rho_s(\lat + \vec{c}_2)} \le e^{3\eps} \;.
\]
\end{lemma}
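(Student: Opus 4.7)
The plan is to apply the Poisson Summation Formula, Eq.~\eqref{eq:PSF_gaussian}, separately to $\rho_s(\lat + \vec{c}_1)$ and $\rho_s(\lat + \vec{c}_2)$, and exploit the fact that the $\vec{t}$-dependence of the right-hand side enters only through $\cos(2\pi \langle \vec{w}, \vec{t}\rangle)$, which is bounded by $1$ in absolute value. Concretely, for any shift $\vec{t}$ I would split the sum as
\[
\rho_s(\lat + \vec{t}) = \frac{s^n}{\det(\lat)}\Bigl(1 + \sum_{\vec{w} \in \lat^* \setminus \{\vec0\}} \rho_{1/s}(\vec{w}) \cos(2\pi \langle \vec{w}, \vec{t}\rangle)\Bigr)\;,
\]
noting that the zero dual vector contributes exactly $1$ since $\rho_{1/s}(\vec0) = 1$ and $\cos(0) = 1$, and crucially this is the only term independent of $\vec{t}$.

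Next I would bound the tail. Using $|\cos(\cdot)| \leq 1$ and the triangle inequality, the absolute value of the nonzero-$\vec{w}$ sum is at most $\rho_{1/s}(\lat^* \setminus \{\vec0\})$. Since $s \geq \eta_\eps(\lat)$, we have $1/s \leq 1/\eta_\eps(\lat)$, and because $\rho_a(\vec{x}) = e^{-\pi\|\vec{x}\|^2/a^2}$ is monotonically nondecreasing in $a$ for each fixed $\vec{x}$, we get $\rho_{1/s}(\lat^* \setminus \{\vec0\}) \leq \rho_{1/\eta_\eps(\lat)}(\lat^* \setminus \{\vec0\}) = \eps$ by the definition of $\eta_\eps$. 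Therefore, uniformly in $\vec{t}$,
\[
\rho_s(\lat + \vec{t}) = \frac{s^n}{\det(\lat)} \cdot (1 \pm \eps) \;.
\]

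Applying this identity to both $\vec{t} = \vec{c}_1$ and $\vec{t} = \vec{c}_2$, the leading factor $s^n/\det(\lat)$ cancels in the ratio $\rho_s(\lat + \vec{c}_1)/\rho_s(\lat + \vec{c}_2)$, yielding a quantity in $[(1-\eps)/(1+\eps),(1+\eps)/(1-\eps)]$ as claimed. For the exponential form, assuming $\eps < 1/3$ I would observe
\[
\frac{1+\eps}{1-\eps} = 1 + \frac{2\eps}{1-\eps} \leq 1 + 3\eps \leq e^{3\eps}\;,
\]
and obtain the matching lower bound $e^{-3\eps}$ by taking reciprocals.

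I do not anticipate any genuine obstacle; the only substantive ingredient is the bound $|\cos| \leq 1$, which is exactly what lets the $\vec{t}$-dependent part of the Poisson expansion be absorbed into a small multiplicative error whenever we are above smoothing. The rest is bookkeeping and an elementary inequality.
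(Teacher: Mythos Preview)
Your proposal is correct and is precisely the argument the paper has in mind: the lemma is stated without proof and simply attributed to the Poisson Summation Formula (Eq.~\eqref{eq:PSF_gaussian}), and your derivation---isolating the $\vec{w}=\vec0$ term, bounding the remaining cosine-weighted sum by $\rho_{1/s}(\lat^*\setminus\{\vec0\})\leq\eps$, and then taking the ratio---is exactly how that ``follows immediately'' unpacks. The elementary check $(1+\eps)/(1-\eps)\leq 1+3\eps\leq e^{3\eps}$ for $\eps<1/3$ is also fine.
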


We prove the following  statement. 
\begin{theorem}
\label{thm:eta_lambda_n}
    For any lattice $\lat \subset \R^n$ with rank $k \ge 20$,
    \[
        \eta_{1/2}(\lat) \geq \lambda_k(\lat)/\sqrt{k}
        \; .
    \]
\end{theorem}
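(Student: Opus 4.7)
The plan is to prove this by contradiction. Assume $s := \eta_{1/2}(\lat) < \lambda_k(\lat)/\sqrt{k}$. Let $V := \spn\{\vec y \in \lat : \|\vec y\| < \lambda_k(\lat)\}$, so $\dim V \le k - 1$; set $\lat' := \lat \cap V$, which has rank $k - 1$, and pick $\vec u \in \lat \setminus V$ so that the length $\|\vec u_\perp\|$ of its projection onto $V^\perp$ is minimal. Then $\lat = \lat' + \Z\vec u$ and $\det(\lat) = \det(\lat')\,\|\vec u_\perp\|$.

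First I would invoke Banaszczyk's tail bound (Lemma~\ref{lem:DGStail} with $t = 1$). Since $s\sqrt{k} < \lambda_k(\lat)$, every lattice vector of norm at most $s\sqrt{k}$ lies in $V$, so the lemma yields $\rho_s(\lat') \ge (1 - c_0^{k})\,\rho_s(\lat)$, where $c_0 := \sqrt{2\pi e}\,e^{-\pi} < 1$. Second, I would exploit that $\lat^* \cap V^\perp$ is a rank-one sublattice whose shortest nonzero vector has length $1/\|\vec u_\perp\|$: the identity $\rho_{1/s}(\lat^* \cap V^\perp) = \theta(s/\|\vec u_\perp\|)$, together with $\rho_{1/s}(\lat^*) = 3/2$, forces $\tau \theta(\tau) \le 3/2$ for $\tau := \|\vec u_\perp\|/s$, and since $\theta \ge 1$ this gives $\|\vec u_\perp\| \le (3/2)\,s$.

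Third, I would derive conflicting lower and upper bounds on the covering radius $\mu(\lat')$. On the one hand, every representative of the nonzero coset $\vec u + \lat' \subseteq \lat \setminus V$ is linearly independent of $V$ and therefore has norm at least $\lambda_k(\lat)$; minimizing over the coset gives $\mathrm{dist}(\vec u_\parallel, \lat')^2 + \|\vec u_\perp\|^2 \ge \lambda_k(\lat)^2 > k s^2$, whence
\[
\mu(\lat') \;\ge\; \mathrm{dist}(\vec u_\parallel, \lat') \;>\; s\sqrt{k - 9/4}.
\]
On the other hand, Poisson summation on both $\lat$ and $\lat'$, combined with the mass comparison of the first step, yields $\rho_{1/s}((\lat')^*) \le 3s/(2\|\vec u_\perp\|)$; in the regime $\|\vec u_\perp\| \ge s$ this shows $\rho_{1/s}((\lat')^* \setminus \{\vec 0\}) \le 1/2$, i.e. $s \ge \eta_{1/2}(\lat')$. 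Banaszczyk's tail bound applied to the shifted Gaussians $D_{\lat'+\vec c,\,s}$ (using the coset-mass ratios from Lemma~\ref{lem:smooth}) then gives $\mu(\lat') \le t\,s\sqrt{k-1}$ for any $t$ satisfying $3\,\bigl(\sqrt{2\pi e t^2}\,e^{-\pi t^2}\bigr)^{k-1} < 1$. For $k \ge 20$, one can choose such a $t$ with $t\sqrt{k-1} < \sqrt{k - 9/4}$, producing the desired contradiction.

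The principal obstacle I anticipate is the residual case $\|\vec u_\perp\| < s$, where $\lat'$ is not directly known to be smooth at parameter $s$ and the Banaszczyk-based upper bound on $\mu(\lat')$ no longer applies verbatim. I expect this to be handled either by an induction on $k$ (since $\lat'$ has rank $k - 1 \ge 19$, the theorem applied inductively gives $\eta_{1/2}(\lat') \ge \lambda_{k-1}(\lat')/\sqrt{k-1}$, which combined with $s < \eta_{1/2}(\lat')$ forces the short sublattice vectors to also be long), or by sharpening the bound on $\|\vec u_\perp\|$ using the full dual-sum identity implied by $\tau\theta(\tau) \le 3/2$. Either way, the hypothesis $k \ge 20$ enters concretely through the choice of $t$ in the concentration step, making the $c_0^{\,k-1}$-type tails small enough to close the numerical gap between the lower and upper bounds on $\mu(\lat')$.
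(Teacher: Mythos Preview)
Your approach has a genuine unresolved gap: the case $\|\vec u_\perp\| < s$. Neither proposed fix works. The inductive route needs the statement at rank $k-1 \ge 19$, outside the hypothesis, and in any event you never establish $s < \eta_{1/2}(\lat')$ in this regime, so the inductive conclusion has nothing to bite on. Sharpening $\tau\theta(\tau)\le 3/2$ only yields \emph{upper} bounds on $\|\vec u_\perp\|/s$, never lower bounds. And the difficulty is real: in this case every vector of the coset $\lat'+\vec u$ has norm $\ge \lambda_k$, so $\rho_s(\lat'+\vec u)$ (and hence $\rho_s(\lat'+\vec u_\parallel)=e^{\pi\|\vec u_\perp\|^2/s^2}\rho_s(\lat'+\vec u)$) can be extremely small relative to $\rho_s(\lat')$, which blows up the ratio in Lemma~\ref{lem:DGStail} and kills the upper bound on $\mu(\lat')$. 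There is also a smaller slip: your $V$ need not have dimension $k-1$ (if $\lambda_1=\cdots=\lambda_k$ then $V=\{\vec 0\}$), so $\lat'$ need not have rank $k-1$ and $\lat^*\cap V^\perp$ need not be rank one; this is patchable by taking $V$ to be spanned by $k-1$ independent lattice vectors of norm $\le \lambda_{k-1}$, but as written it is a gap.

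The paper's argument avoids all of this by never passing to a sublattice. It uses only the elementary fact $\mu(\lat)\ge \lambda_k(\lat)/2$ to produce $\vec u$ with $\dist(\vec u,\lat) > \tfrac12\sqrt{k}\,s$, so that $D_{\lat+\vec u,s}$ is supported entirely on vectors of norm exceeding $\tfrac12 s\sqrt{k}$. Applying Lemma~\ref{lem:DGStail} directly to $\lat$ with $t=\tfrac12$, together with the coset-mass bound of Lemma~\ref{lem:smooth} at $s=\eta_{1/2}(\lat)$, gives probability at most $3\bigl(\sqrt{\pi e/2}\,e^{-\pi/4}\bigr)^k \le 3\cdot(0.943)^k < 1$ for $k\ge 20$, a contradiction. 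This is precisely the concentration step you perform on $\lat'$, but run on $\lat$ itself; the detour through $\lat'$ is what forces you into the problematic $\|\vec u_\perp\|<s$ case.
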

\begin{proof}
    If $\lat$ is not a full-rank lattice, then we can project to a subspace given by the span of $\lat$. So, without loss of generality, we assume that $\lat$ is a full-rank lattice, i.e., $k = n$.
    
    Suppose $\lambda_n(\lat) > \sqrt{n} \eta_{1/2} (\lat)$. Then there exists a vector $\vec{u} \in \R^n$ such that $\dist(\vec{u}, \lat) > \frac{1}{2}\sqrt{n} \eta_{1/2} (\lat)$. Then, using Lemma~\ref{lem:DGStail} with $t=1/2$, $s = \eta_{1/2}(\lat)$, we have 
    \begin{align*}
1 &=\Pr_{\vec{X} \sim \mathcal{D}_{\lat+\vec{u},\eta_{1/2}(\lat)}}\left[\|\vec{X}\| > st\sqrt{n} \right] \\
&< \frac{\rho_s(\lat)}{\rho_s(\lat+\vec{u})} \left(\sqrt{2\pi e t^2}\exp(-\pi t^2)\right)^n \\
& \le \frac{1+1/2}{1-1/2} (\sqrt{\pi e/2} \cdot e^{-\pi/4})^n & \text{using Lemma~\ref{lem:smooth}}\\
& \le 3 \cdot (0.943)^n \\
& < 1 &\text{since } \; k=n \ge 20\;, 
\end{align*}
 which is a contradiction.
\end{proof}

\begin{claim}
\label{clm:smooth_mass_growth}
    For any lattice $\lat \subset \R^n$ and any parameters $s \geq s' \geq \eta_{1/2}(\lat)$,
    \[
        \frac{\rho_s(\lat)}{\rho_{s'}(\lat)} \geq \frac{2s}{3s'}
        \; .
    \]
\end{claim}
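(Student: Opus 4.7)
The plan is to apply the Poisson Summation Formula in the form of Eq.~\eqref{eq:PSF_gaussian}, reducing the ratio of Gaussian masses to a ratio involving the dual lattice, which the smoothing parameter directly controls.

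Specifically, setting $\vec{t} = \vec0$ in Eq.~\eqref{eq:PSF_gaussian} gives
\[
    \rho_s(\lat) = \frac{s^n}{\det(\lat)} \cdot \rho_{1/s}(\lat^*) = \frac{s^n}{\det(\lat)} \bigl(1 + \rho_{1/s}(\lat^* \setminus \{\vec0\})\bigr),
\]
and similarly for $\rho_{s'}(\lat)$. Taking the ratio and cancelling $\det(\lat)$ yields
\[
    \frac{\rho_s(\lat)}{\rho_{s'}(\lat)} = \left(\frac{s}{s'}\right)^n \cdot \frac{1 + \rho_{1/s}(\lat^* \setminus \{\vec0\})}{1 + \rho_{1/s'}(\lat^* \setminus \{\vec0\})}.
\]

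Now I would use the definition of $\eta_{1/2}$ together with monotonicity of $\rho_{1/t}(\lat^* \setminus \{\vec0\})$ in $t$. Since $s \geq s' \geq \eta_{1/2}(\lat)$, we have $1/s \leq 1/s' \leq 1/\eta_{1/2}(\lat)$, so
\[
    0 \leq \rho_{1/s}(\lat^* \setminus \{\vec0\}) \leq \rho_{1/s'}(\lat^* \setminus \{\vec0\}) \leq \rho_{1/\eta_{1/2}(\lat)}(\lat^* \setminus \{\vec0\}) = \tfrac{1}{2}.
\]
Thus the numerator in the fraction is at least $1$, and the denominator is at most $3/2$, giving
\[
    \frac{\rho_s(\lat)}{\rho_{s'}(\lat)} \geq \frac{2}{3}\left(\frac{s}{s'}\right)^n.
\]

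Finally, since $s/s' \geq 1$ and $n \geq 1$, we have $(s/s')^n \geq s/s'$, so $\rho_s(\lat)/\rho_{s'}(\lat) \geq 2s/(3s')$, as required. There is no real obstacle here; the only subtlety is remembering that $\rho_{1/t}(\lat^* \setminus \{\vec0\})$ is monotone \emph{decreasing} in $t$ (larger $t$ means narrower dual Gaussian), which is what lets the smoothing condition at $s'$ propagate upward to $s$.
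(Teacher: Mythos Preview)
Your proof is correct and follows essentially the same approach as the paper: apply the Poisson Summation Formula to both $\rho_s(\lat)$ and $\rho_{s'}(\lat)$, use the smoothing condition $s' \geq \eta_{1/2}(\lat)$ to bound the dual mass by $3/2$, obtain $\rho_s(\lat)/\rho_{s'}(\lat) \geq \tfrac{2}{3}(s/s')^n$, and then drop the exponent since $s/s' \geq 1$. The only cosmetic difference is that the paper bounds the numerator and denominator separately rather than writing the ratio first.
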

\begin{proof}
    By the Poisson Summation Formula (Eq.~\eqref{eq:PSF_gaussian}), we have
    \[
        \rho_{s}(\lat) = s^n \cdot \frac{\rho_{1/s}(\lat^*)}{\det(\lat)} \geq s^n/\det(\lat)
        \; ,
    \]
    and similarly,
    \[
        \rho_{s'}(\lat) = (s')^n \cdot \frac{\rho_{1/s'}(\lat^*)}{\det(\lat)} \leq 3(s')^n/(2\det(\lat))
        \; ,
    \]
    since $\rho_{1/s'}(\lat^*) \leq 3/2$ for $s' \geq \eta_{1/2}(\lat)$.
    Combining the two inequalities gives $\rho_s(\lat) \geq 2(s/s')^n/3 \geq 2(s/s')/3$, as needed.
\end{proof}

\begin{claim}
\label{clm:DGS_variance}
    For any lattice $\lat \subset \R^n$ and any $s > 0$,
    \[
        \expect_{\vec{X} \sim D_{\lat, s}}[\|\vec{X}\|^2] \leq \frac{n s^2}{2\pi}
        \; .
    \]
\end{claim}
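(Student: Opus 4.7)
The plan is to express the expected squared norm as a logarithmic derivative of the Gaussian mass. Differentiating $\rho_s(\vec{x}) = \exp(-\pi\|\vec{x}\|^2/s^2)$ with respect to $s$ gives $\frac{d}{ds}\rho_s(\vec{x}) = \rho_s(\vec{x}) \cdot \frac{2\pi \|\vec{x}\|^2}{s^3}$, so summing over $\vec{x} \in \lat$ and dividing by $\rho_s(\lat)$ yields
\[
    \expect_{\vec{X} \sim D_{\lat,s}}[\|\vec{X}\|^2] \;=\; \frac{s^3}{2\pi} \cdot \frac{d}{ds}\log \rho_s(\lat).
\]
Thus the task reduces to showing that $\frac{d}{ds}\log\rho_s(\lat) \leq n/s$ for every $s > 0$.

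To get this inequality, I would invoke the Poisson Summation Formula (Eq.~\eqref{eq:PSF_gaussian}) in the form $\rho_s(\lat) = s^n \rho_{1/s}(\lat^*)/\det(\lat)$, so that
\[
    \log \rho_s(\lat) \;=\; n\log s \;+\; \log\rho_{1/s}(\lat^*) \;-\; \log\det(\lat).
\]
The first term differentiates to $n/s$, and the third is constant in $s$. For the middle term I would write $\rho_{1/s}(\lat^*) = \sum_{\vec{w}\in\lat^*} e^{-\pi s^2\|\vec{w}\|^2}$ and differentiate term-by-term to get $-2\pi s \sum_{\vec{w}\in\lat^*}\|\vec{w}\|^2 e^{-\pi s^2\|\vec{w}\|^2}$, which is manifestly non-positive since every term in the sum is non-negative. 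Hence $\frac{d}{ds}\log\rho_s(\lat) \leq n/s$, and plugging this into the identity above gives the claim.

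The main obstacle is only a minor technical bookkeeping step: justifying the termwise differentiation of the series defining $\rho_s(\lat)$ and $\rho_{1/s}(\lat^*)$. This is routine because both series converge absolutely and uniformly on any compact subinterval of $s > 0$ (the summands decay as Gaussians in the lattice index), so standard dominated-convergence arguments allow the derivative and sum to be interchanged. I would mention this briefly rather than write it out in detail. Note that the key conceptual input is really the Poisson summation identity, which converts an estimate on $\lat$ into a monotonicity statement about a Gaussian sum over the dual $\lat^*$, where each individual exponential is monotone in $s$.
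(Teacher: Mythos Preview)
Your proof is correct. The identity $\expect_{\vec{X}\sim D_{\lat,s}}[\|\vec{X}\|^2] = \frac{s^3}{2\pi}\frac{d}{ds}\log\rho_s(\lat)$ follows exactly as you compute, and applying Poisson summation in the form $\rho_s(\lat) = s^n\rho_{1/s}(\lat^*)/\det(\lat)$ splits the log-derivative into the exact term $n/s$ plus the derivative of $\log\rho_{1/s}(\lat^*)$, which is nonpositive because each summand $e^{-\pi s^2\|\vec{w}\|^2}$ is nonincreasing in $s$. The termwise-differentiation justification is indeed routine given the Gaussian decay.

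As for comparison with the paper: the paper does not actually supply a proof of this claim. It is stated as a standalone fact in the preliminaries (presumably as well known) and then invoked once, in the proof of Theorem~\ref{thm:main_GSVP}. So your argument is not merely consistent with the paper's approach---it fills in a gap the paper leaves to the reader. Your route via Poisson summation is the standard one and is in the same spirit as the paper's other uses of Eq.~\eqref{eq:PSF_gaussian} (e.g., in Claim~\ref{clm:smooth_mass_growth}), so it fits naturally here.
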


\begin{lemma}\label{lem:smoothing_parameter}
	For $s\geq \eta_\epsilon(\lat)$, and any real factor $k \geq 1$, $ks \geq \eta_{\epsilon^{k^2}}(\lat)$. 
\end{lemma}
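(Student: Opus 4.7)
The plan is to work directly from the definition of the smoothing parameter. By definition, $\eta_{\eps'}(\lat)$ is the unique value such that $\rho_{1/\eta_{\eps'}(\lat)}(\lat^* \setminus \{\vec 0\}) = \eps'$, and since $\rho_{1/\eta}(\vec w) = \exp(-\pi \eta^2 \|\vec w\|^2)$ is decreasing in $\eta$ for each non-zero $\vec w$, the quantity $\rho_{1/\eta}(\lat^* \setminus \{\vec 0\})$ is decreasing in $\eta$. Hence to show $ks \geq \eta_{\eps^{k^2}}(\lat)$ it suffices to show that $\rho_{1/(ks)}(\lat^* \setminus \{\vec 0\}) \leq \eps^{k^2}$.

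The key algebraic observation is that, for any $\vec w$,
\[
    \rho_{1/(ks)}(\vec w) = \exp\bigl(-\pi k^2 s^2 \|\vec w\|^2\bigr) = \rho_{1/s}(\vec w)^{k^2}
    \; .
\]
So the question reduces to bounding $\sum_{\vec w \in \lat^* \setminus \{\vec 0\}} \rho_{1/s}(\vec w)^{k^2}$ in terms of $\sum_{\vec w \in \lat^* \setminus \{\vec 0\}} \rho_{1/s}(\vec w) \leq \eps$ (the latter bound following from $s \geq \eta_\eps(\lat)$ together with the monotonicity noted above).

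First, I would note that for any non-zero $\vec w$, we have $\rho_{1/s}(\vec w) = \exp(-\pi s^2 \|\vec w\|^2) \in (0,1)$. Then I would invoke the standard fact that $\ell^p$ norms are non-increasing in $p$ on non-negative sequences: for any $p \geq 1$ and any non-negative $(x_i)$,
\[
    \Bigl(\sum_i x_i^p\Bigr)^{1/p} \;\leq\; \sum_i x_i
    \; ,
\]
or equivalently $\sum_i x_i^p \leq (\sum_i x_i)^p$. Applying this with $p = k^2 \geq 1$ and $x_{\vec w} = \rho_{1/s}(\vec w)$ for $\vec w \in \lat^* \setminus \{\vec 0\}$ gives
\[
    \rho_{1/(ks)}(\lat^* \setminus \{\vec 0\}) \;=\; \sum_{\vec w \in \lat^* \setminus \{\vec 0\}} \rho_{1/s}(\vec w)^{k^2} \;\leq\; \Bigl(\,\sum_{\vec w \in \lat^* \setminus \{\vec 0\}} \rho_{1/s}(\vec w)\Bigr)^{k^2} \;\leq\; \eps^{k^2}
    \; .
\]

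By monotonicity of $\rho_{1/\eta}(\lat^* \setminus \{\vec 0\})$ in $\eta$, this last inequality is precisely equivalent to $ks \geq \eta_{\eps^{k^2}}(\lat)$, finishing the proof. There is no real obstacle here; the only point requiring care is that each individual term $\rho_{1/s}(\vec w)$ lies in $[0,1]$ (which is what allows the $\ell^{p}$ inequality to give a useful bound), and this is automatic because $\vec w \neq \vec 0$ so $\rho_{1/s}(\vec w) < 1$.
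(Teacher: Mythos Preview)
Your proof is correct and follows essentially the same approach as the paper: both use the identity $\rho_{1/(ks)}(\vec w) = \rho_{1/s}(\vec w)^{k^2}$ and then the inequality $\sum_i x_i^{p} \leq \bigl(\sum_i x_i\bigr)^{p}$ for $p \geq 1$ and $x_i \geq 0$. One small remark: this last inequality holds for arbitrary non-negative $x_i$ (normalize by $S = \sum_i x_i$ so each $x_i/S \in [0,1]$), so the condition $\rho_{1/s}(\vec w) < 1$ is not actually needed.
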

\begin{proof}
	\begin{align*}
		\sum_{\vec{w} \in \lat^* \backslash \{0\}} \rho_{1/(ks)}(\vec{w}) &= \sum_{\vec{w} \in \lat^* \backslash \{0\}} e^{-\pi \|\vec{w}\| k^2s^2}\\ 
		&= \sum_{\vec{w} \in \lat^* \backslash \{0\}} \rho_{1/s}(\vec{w})^{k^2} \\
		&\leq \Big(\sum_{\vec{w} \in \lat^* \backslash \{0\}} \rho_{1/s}(\vec{w}) \Big)^{k^2} \\ 
		&\leq \epsilon^{k^2} \; .
	\end{align*}
\end{proof}

\begin{cor}\label{cor:smoothing_parameter_scaled} 
	For any lattice $\lat \subset \R^n$ and $\epsilon \in (0,1/2)$, $\eta_{\epsilon}(\lat) \leq \sqrt{\log (1/\epsilon)} \cdot\eta_{1/2}(\lat)$.
\end{cor}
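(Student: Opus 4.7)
The plan is to obtain this as an immediate consequence of Lemma~\ref{lem:smoothing_parameter} via the right choice of parameters. Specifically, I would set $s := \eta_{1/2}(\lat)$ (so that trivially $s \geq \eta_{1/2}(\lat)$) and take the scaling factor $k := \sqrt{\log(1/\epsilon)}$. Since $\epsilon \in (0, 1/2)$, we have $\log(1/\epsilon) > 1$, so $k > 1$ and the hypothesis $k \geq 1$ of Lemma~\ref{lem:smoothing_parameter} is satisfied.

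Applying the lemma then gives $ks \geq \eta_{(1/2)^{k^2}}(\lat)$. Now the key observation is that the paper fixes $\log$ to mean the base-$2$ logarithm, so
\[
(1/2)^{k^2} = 2^{-\log(1/\epsilon)} = \epsilon \; .
\]
Hence $\sqrt{\log(1/\epsilon)} \cdot \eta_{1/2}(\lat) = ks \geq \eta_{\epsilon}(\lat)$, which is exactly the claimed bound.

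I expect no real obstacle here: the corollary is a one-line consequence of Lemma~\ref{lem:smoothing_parameter}, and the only point requiring any care is matching the logarithm base to the exponent $k^2$ appearing in the lemma. The fact that $\epsilon < 1/2$ (rather than $\epsilon \le 1$) is used precisely to guarantee $k \geq 1$ so that the lemma applies.
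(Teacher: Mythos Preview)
Your proposal is correct and matches the paper's proof essentially line for line: the paper also sets $k = \sqrt{\log(1/\epsilon)}$, notes that $(1/2)^{k^2} = \epsilon$, and invokes Lemma~\ref{lem:smoothing_parameter} with $s = \eta_{1/2}(\lat)$. Your additional remarks about the base-2 logarithm and the need for $\epsilon < 1/2$ to ensure $k \geq 1$ are accurate clarifications that the paper leaves implicit.
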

\begin{proof}
	Let $k = \sqrt{\log (1/\epsilon)}$ and thus $(\frac{1}{2})^{k^2} = \epsilon$. By Lemma \ref{lem:smoothing_parameter}, $k\eta_{1/2}(\lat)\geq \eta_{\epsilon}(\lat)$.
\end{proof}
We will need the following useful lemma concerning the convolution of two discrete Gaussian distributions. See~\cite{GMPWImprovedDiscrete20} for a very general result of this form (and a list of similar results). Our lemma differs from those in~\cite{GMPWImprovedDiscrete20} and elsewhere in that we are interested in a stronger notion of statistical closeness: point-wise multiplicative distance, rather than statistical distance. One can check that this stronger variant follows from the proofs in~\cite{GMPWImprovedDiscrete20}, but we give a separate proof for completeness.

\begin{lemma}
    \label{lem:convolution}
    For any lattice $\lat \subset \R^n$, $\eps \in (0,1/3)$, parameter $s \geq \sqrt{2}\eta_\eps(\lat)$, and shifts $\vec{t}_1, \vec{t}_2 \in \R^n$, let $\vec{X}_i \sim D_{\lat + \vec{t}_i,s}$ be independent random variables. Then the distribution of $\vec{X}_1 + \vec{X}_2$ is $6\eps$-similar to $D_{\lat + \vec{t}_1 + \vec{t}_2, \sqrt{2} s}$.
\end{lemma}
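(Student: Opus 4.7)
The plan is to compute $\Pr[\vec{X}_1 + \vec{X}_2 = \vec{y}]$ exactly, factor it into a part depending on $\vec{y}$ through $\rho_{\sqrt{2}s}(\vec{y})$ only and a ``coset mass'' term that is nearly constant thanks to smoothing, and then pin down the normalizing constant by forcing both distributions to sum to $1$.

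Concretely, I would first expand the convolution. For any $\vec{y} \in \lat + \vec{t}_1 + \vec{t}_2$,
\[
\Pr[\vec{X}_1 + \vec{X}_2 = \vec{y}]
= \frac{1}{\rho_s(\lat+\vec{t}_1)\,\rho_s(\lat+\vec{t}_2)} \sum_{\vec{x} \in \lat + \vec{t}_1} \rho_s(\vec{x})\,\rho_s(\vec{y}-\vec{x}).
\]
Using the parallelogram identity $\|\vec{a}\|^2+\|\vec{b}\|^2 = \tfrac12(\|\vec{a}+\vec{b}\|^2+\|\vec{a}-\vec{b}\|^2)$ with $\vec{a}=\vec{x}$, $\vec{b}=\vec{y}-\vec{x}$, we get the key factorization $\rho_s(\vec{x})\,\rho_s(\vec{y}-\vec{x}) = \rho_{\sqrt{2}s}(\vec{y})\,\rho_{\sqrt{2}s}(2\vec{x}-\vec{y})$. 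Reindexing the sum by $\vec{w} = 2\vec{x} - \vec{y} \in 2\lat + 2\vec{t}_1 - \vec{y}$ yields
\[
\Pr[\vec{X}_1 + \vec{X}_2 = \vec{y}]
= \frac{\rho_{\sqrt{2}s}(\vec{y}) \cdot \rho_{\sqrt{2}s}\!\left(2\lat + 2\vec{t}_1 - \vec{y}\right)}{\rho_s(\lat+\vec{t}_1)\,\rho_s(\lat+\vec{t}_2)}.
\]

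Next I would invoke smoothing. Since $s \geq \sqrt{2}\,\eta_\eps(\lat)$ and $\eta_\eps(2\lat) = 2\eta_\eps(\lat)$, we have $\sqrt{2}s \geq \eta_\eps(2\lat)$, so by Lemma~\ref{lem:smooth} the mass $\rho_{\sqrt{2}s}(2\lat + \vec{c})$ of any coset of $2\lat$ lies in $e^{\pm 3\eps}$ of a common reference value $M$ (using $\eps < 1/3$). Therefore, writing
\[
\Pr[\vec{X}_1 + \vec{X}_2 = \vec{y}] = K(\vec{y})\cdot D_{\lat+\vec{t}_1+\vec{t}_2,\sqrt{2}s}(\vec{y}),
\quad
K(\vec{y}) := \frac{\rho_{\sqrt{2}s}(2\lat+2\vec{t}_1-\vec{y}) \cdot \rho_{\sqrt{2}s}(\lat+\vec{t}_1+\vec{t}_2)}{\rho_s(\lat+\vec{t}_1)\,\rho_s(\lat+\vec{t}_2)},
\]
the ratio $K(\vec{y})$ depends on $\vec{y}$ only through the $2\lat$-coset factor, so $K(\vec{y}) = e^{\pm 3\eps} K_0$ for the constant $K_0$ obtained by replacing that factor with $M$.

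Finally, I would fix $K_0$ by normalization. Summing $\Pr[\vec{X}_1+\vec{X}_2 = \vec{y}] = K(\vec{y})\,D_{\lat+\vec{t}_1+\vec{t}_2,\sqrt{2}s}(\vec{y})$ over $\vec{y} \in \lat+\vec{t}_1+\vec{t}_2$ gives $1$ on the left, and on the right a convex combination of the $K(\vec{y})$'s, hence $e^{-3\eps} K_0 \leq 1 \leq e^{3\eps} K_0$, i.e.\ $K_0 = e^{\pm 3\eps}$. Combining, $K(\vec{y}) = e^{\pm 6\eps}$ for every $\vec{y}$, which is exactly the $6\eps$-similarity claim.

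The substantive step is the parallelogram-identity factorization, which isolates a $\vec{y}$-dependent piece $\rho_{\sqrt{2}s}(\vec{y})$ (matching the target Gaussian) from a $2\lat$-coset mass to which smoothing applies; the main bookkeeping issue, and the reason for the factor $6$ rather than $3$, is that smoothing pays $e^{\pm 3\eps}$ once for $K(\vec{y})$ and a second time when normalizing to determine $K_0$. No other obstacle arises, since once the factorization is in hand everything else is a clean application of Lemma~\ref{lem:smooth} plus the fact that both sides are probability distributions on the same support $\lat+\vec{t}_1+\vec{t}_2$.
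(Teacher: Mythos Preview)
Your proposal is correct and follows essentially the same approach as the paper: expand the convolution, use the parallelogram identity to factor out $\rho_{\sqrt{2}s}(\vec{y})$ times a coset mass, apply Lemma~\ref{lem:smooth} once to make the coset mass nearly constant, and then normalize to pick up the second $e^{\pm 3\eps}$. The only cosmetic difference is that you write the coset mass as $\rho_{\sqrt{2}s}(2\lat+2\vec{t}_1-\vec{y})$ and invoke smoothing for $2\lat$, whereas the paper rescales to $\rho_{s/\sqrt{2}}(\lat+\vec{t}_1-\vec{y}/2)$ and invokes smoothing for $\lat$; these are identical since $\rho_{\sqrt{2}s}(2\vec{z})=\rho_{s/\sqrt{2}}(\vec{z})$ and $\eta_\eps(2\lat)=2\eta_\eps(\lat)$.
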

\begin{proof}
    Let $\vec{y} \in \lat + \vec{t}_1 + \vec{t}_2$. We have
    \begin{align*}
        \Pr[\vec{X}_1 + \vec{X}_2 = \vec{y}] 
            &= \frac{1}{\rho_s(\lat + \vec{t}_1) \rho_s(\lat + \vec{t}_2)}\sum_{\vec{x} \in \lat + \vec{t}_1} \exp(-\pi (\|\vec{x}\|^2 + \|\vec{y} - \vec{x}\|^2)/s^2)\\
            &= \frac{1}{\rho_s(\lat + \vec{t}_1) \rho_s(\lat + \vec{t}_2)}\sum_{\vec{x} \in \lat + \vec{t}_1} \exp(-\pi (\|\vec{y}\|^2/2 + \| 2\vec{x} - \vec{y}\|^2/2)/s^2)\\
            &= \frac{\rho_{\sqrt{2} s}(\vec{y})}{\rho_s(\lat + \vec{t}_1) \rho_s(\lat + \vec{t}_2)} \rho_{s/\sqrt{2}}(\lat + \vec{t}_1 - \vec{y}/2)\\
            &= e^{\pm 3\eps} \rho_{\sqrt{2} s}(\vec{y}) \cdot \frac{\rho_{s/\sqrt{2}}(\lat)}{\rho_{s}(\lat + \vec{t}_1) \rho_s(\lat + \vec{t}_2)}
            \; ,
    \end{align*}
    where the last step follows from Lemma~\ref{lem:smooth}.
    By applying this for all $\vec{y}' \in \lat + \vec{t}_1 + \vec{t}_2$, we see that
    \[
        \Pr[\vec{X}_1 + \vec{X}_2 = \vec{y}] = e^{\pm 3\eps}\cdot \frac{\rho_{\sqrt{2}s}(\vec{y})}{\sum_{\vec{y}' \in \lat + \vec{t}_1 + \vec{t}_2} \chi_{\vec{y}'} \rho_{\sqrt{2}s}(\vec{y}')}
    \]
    for some $\chi_{\vec{y}'} = e^{\pm 3\eps}$. Therefore,
    \[
        \Pr[\vec{X}_1 + \vec{X}_2 = \vec{y}] = e^{\pm 6\eps} \cdot  \frac{\rho_{\sqrt{2} s}(\vec{y})}{\rho_{\sqrt{2} s}(\lat + \vec{t}_1 + \vec{t}_2)}
        \; ,
    \]
    as needed.
\end{proof}

\subsection{Lattice problems}
In this paper, we study the algorithms for the following lattice problems.
\begin{definition}[$r$-HSVP]
	For an approximation factor $r := r(n) \geq 1$, the $r$-Hermite Approximate Shortest Vector Problem ($r$-HSVP) is defined as follows: Given a basis $\vec{B}$ for a lattice $\lat \subset \R^n$, the goal is to output a vector $\vec{x} \in \lat\backslash \{\vec{0}\}$ with $\Vert \vec{x} \Vert \leq r \cdot \det(\lat)^{1/n}$. 
\end{definition}

\begin{definition}[$r$-SVP]
	For an approximation factor $r := r(n) \geq 1$, the $r$-Shortest Vector Problem ($r$-SVP) is defined as follows: Given a basis $\vec{B}$ for a lattice $\lat \subset \R^n$, the goal is to output a vector $\vec{x} \in \lat\backslash \{\vec{0}\}$ with $\Vert \vec{x} \Vert \leq r \cdot \lambda_1(\lat)$. 
\end{definition}

 It will be convenient to define a generalized version of SVP, of which HSVP and SVP are special cases.

\begin{definition}[$\eta$-GSVP]
    For a function $\eta$ mapping lattices to positive real numbers, the $\eta$-Generalized Shortest Vector Problem $\eta$-GSVP is defined as follows: Given a basis $\vec{B}$ for a lattice $\lat \subset \R^n$ and a length bound $d \geq \eta(\lat)$, the goal is to output a vector $\vec{x} \in \lat\backslash \{\vec{0}\}$ with $\Vert \vec{x} \Vert \leq d$.
\end{definition}

To recover $r$-SVP or $r'$-HSVP, we can take $\eta(\lat) = r \lambda_1(\lat)$ or $\eta(\lat) = r' \det(\lat)^{1/n}$ respectively. Below, we will set $\eta$ to be a new parameter, which in particular will satisfy $\eta(\lat) \leq \widetilde{O}(\sqrt{n}) \cdot  \min\{\lambda_1(\lat), \det(\lat)^{1/n} \}$.

\subsection{Gram-Schmidt orthogonalization}
For any given basis $\basis = (\vec{b}_1,\ldots, \vec{b}_n) \in \R^{m \times n}$, we define the sequence of projections $\pi_{i} := \pi_{\{\vec{b}_1,\ldots, \vec{b}_{i-1}\}^\perp}$ where $\pi_{W^\perp}$ refers to the orthogonal
projection onto the subspace orthogonal to $W$.
As in~\cite{GNFindingShort08, ALNSSlideReduction19}, we use $\basis_{[i,j]}$ to denote the projected block $(\pi_{i}(\vec{b}_i),\pi_{i}(\vec{b}_{i+1}),\ldots, \pi_{i}(\vec{b}_j))$.

The Gram-Schmidt orthogonalization (GSO) $\vec{B}^{\ast} := (\mathbf{b}_1^{\ast}, \ldots, \mathbf{b}_{n}^{\ast})$ of a basis $\basis$ is as follows: for all $i \in [1, n], \vec{b}_i^* := \pi_{i}(\vec{b}_i) =  \vec{b}_i - \sum_{j < i} \mu_{i,j} \vec{b}_j^*$, where $\mu_{i,j} = \langle \vec{b}_i, \vec{b}_j^* \rangle/\|\vec{b}_j^*\|^2$. 

\begin{theorem}[{{\cite[Lemma 3.1]{GPVTrapdoorsHard08}}}]
\label{thm:eta_GSO}
    For any lattice $\lat \subset \R^n$ with basis $\basis := (\vec{b}_1,\ldots, \vec{b}_n)$ and any $\eps \in (0,1/2)$,
    \[
        \eta_\eps(\lat) \leq \sqrt{\log(n/\eps)}\cdot \max_i \|\vec{b}_i^*\|
        \; .
    \]
\end{theorem}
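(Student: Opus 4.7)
The plan is to show directly from the definition of the smoothing parameter that $\rho_{1/s}(\lat^* \setminus \{\vec{0}\}) \leq \eps$ for $s := \sqrt{\log(n/\eps)} \cdot b^*$, where $b^* := \max_i \|\vec{b}_i^*\|$; this immediately gives $\eta_\eps(\lat) \leq s$. The main device is the \emph{reverse-order} dual basis of $\lat^*$: define $(\vec{d}_1, \ldots, \vec{d}_n)$ by $\langle \vec{b}_i, \vec{d}_j \rangle = \mathbf{1}[i+j = n+1]$. A short linear-algebra computation---writing $\basis = \basis^\ast M$ with $M$ upper-unitriangular, inverting, transposing, and reversing the columns---verifies that these vectors form a basis of $\lat^*$ whose GSO satisfies $\|\vec{d}_j^*\| = 1/\|\vec{b}_{n+1-j}^*\| \geq 1/b^*$ for every $j$. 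Consequently, every $\vec{v} \in \lat^*$ has a unique expression $\vec{v} = \sum_i c_i \vec{d}_i$ with $c \in \Z^n$, and
\[
    \|\vec{v}\|^2 = \sum_i (c_i + \mu_i)^2 \|\vec{d}_i^*\|^2
    \; ,
\]
where each shift $\mu_i$ depends only on $c_{i+1}, \ldots, c_n$.

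Next, I would partition $\lat^* \setminus \{\vec{0}\}$ by $k := \max\{i : c_i \neq 0\}$ and, for each fixed $k$, evaluate the contribution by nesting the sum over $(c_1, \ldots, c_{k-1}, c_k)$ from the innermost index outward. Because $\mu_i$ is independent of $c_1, \ldots, c_i$, each 1D inner sum factors out as $\rho_{1/(s\|\vec{d}_i^*\|)}(\Z + \mu_i) \leq \rho_{1/(s\|\vec{d}_i^*\|)}(\Z) \leq \rho_{b^*/s}(\Z)$, where the first inequality is the elementary consequence of Eq.~\eqref{eq:PSF_gaussian} that $\rho_r(\Z + \mu) \leq \rho_r(\Z)$ for every $\mu$ (since $\cos \leq 1$), and the second uses $\|\vec{d}_i^*\| \geq 1/b^*$. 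The outermost sum over $c_k \neq 0$ contributes $\rho_{b^*/s}(\Z \setminus \{0\})$, yielding
\[
    \rho_{1/s}(\lat^* \setminus \{\vec{0}\}) \;\leq\; \sum_{k=1}^n \rho_{b^*/s}(\Z \setminus \{0\}) \cdot \rho_{b^*/s}(\Z)^{k-1}
    \; .
\]

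Finally, substituting $s^2/(b^*)^2 = \log(n/\eps)$ makes the tail parameter $q := e^{-\pi s^2/(b^*)^2} = (\eps/n)^{\pi/\ln 2}$ well below $(\eps/n)^4$, so $\rho_{b^*/s}(\Z \setminus \{0\}) \leq 3q$ and $\rho_{b^*/s}(\Z) \leq 1 + 3q$; the bound above is then at most $3nq \cdot e^{3nq} \leq \eps$ with ample slack. I expect the main obstacle to be the reverse-dual GSO identity $\|\vec{d}_j^*\| = 1/\|\vec{b}_{n+1-j}^*\|$: reversing the order of the dual basis is crucial so that the triangular relationship with $\basis^*$ has the right orientation for the telescoping step, but once set up it is a routine matrix computation from the uniqueness of Gram--Schmidt.
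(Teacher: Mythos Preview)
The paper does not give its own proof of this statement; it simply cites \cite[Lemma~3.1]{GPVTrapdoorsHard08}. So there is no in-paper argument to compare against.

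Your proof is correct and is essentially the standard argument underlying the cited result. The reverse-dual GSO identity $\|\vec{d}_j^*\| = 1/\|\vec{b}_{n+1-j}^*\|$ is exactly right (writing $\basis = \basis^*M$ with $M$ upper-unitriangular gives $\basis^{-T}J = (\basis^*)^{-T}J \cdot (J M^{-T} J)$, and $J M^{-T} J$ is again upper-unitriangular, so $(\basis^*)^{-T}J$ is the GSO of the reversed dual basis). The telescoping sum over $c_1,\ldots,c_k$ with shifts $\mu_i$ depending only on $c_{i+1},\ldots,c_k$ is precisely the mechanism behind Banaszczyk-style bounds, and your use of $\rho_r(\Z+\mu)\le\rho_r(\Z)$ via Eq.~\eqref{eq:PSF_gaussian} is clean. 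The numerical wrap-up is fine: with $q=(\eps/n)^{\pi/\ln 2}<(\eps/n)^4$ and $\eps<1/2$, one gets $3nq\cdot e^{3nq}\le 6\eps^4/n^3\le\eps$ with room to spare.

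For context, the original GPV proof is slightly less direct: it combines the bound $\lambda_1(\lat^*)\ge 1/\max_i\|\vec{b}_i^*\|$ (which is the same reverse-dual GSO fact, used only at the level of the shortest vector) with the Micciancio--Regev inequality $\eta_\eps(\lat)\le\sqrt{\ln(2n(1+1/\eps))/\pi}/\lambda_1(\lat^*)$. Your argument unrolls that second step explicitly and yields the constant $\sqrt{\log(n/\eps)}$ stated here.
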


 For $\gamma \geq 1$, a basis is \emph{$\gamma$-HKZ-reduced} if for all $i \in \{1,\ldots, n\}$, $\|\vec{b}_i^*\| \leq \gamma \cdot \lambda_1(\pi_i(\lat))$.

We say that a basis $\basis$ is \emph{size-reduced} if it satisfies the following condition: for all $i \neq j$, $|\mu_{i,j}| \le \frac{1}{2}$. A size-reduced basis $\basis$ satisfies that $\|\basis\| \le \sqrt{n}\|\basis^{\ast}\|$,  where $\|\basis\|$ is the length of the longest basis vector in $\basis$. It is known that we can efficiently transform any basis into a size-reduced basis while maintaining the lattice generated by the basis $\lat(\basis)$ as well as the GSO $\basis^{\ast}$. We call such operation {\em size reduction}.
\subsection{Some lattice algorithms}
\begin{theorem}[\cite{LLLFactoringPolynomials82}]\label{thm:LLL}
Given a basis $\vec{B} \in \mathbb{Q}^{n\times n}$, there is an algorithm that computes a vector $\vec{x} \in \lat(\vec{B})$ of length at most $ 2^{n/2} \cdot \lambda_1(\lat(\vec{B}))$ in polynomial time.
\end{theorem}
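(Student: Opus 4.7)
The plan is to describe the LLL algorithm and analyze both its output quality and its running time. The algorithm maintains a basis $\vec{B} = (\vec{b}_1, \ldots, \vec{b}_n)$ and repeatedly performs two kinds of operations: (1) \emph{size reduction}, which replaces $\vec{b}_i$ by $\vec{b}_i - \round{\mu_{i,j}} \vec{b}_j$ for $j < i$, ensuring that $|\mu_{i,j}| \leq 1/2$ without changing $\lat(\vec{B})$ or the GSO; and (2) a \emph{swap} of consecutive vectors $\vec{b}_i, \vec{b}_{i+1}$ whenever the Lov\'asz condition $\|\vec{b}_{i+1}^*\|^2 \geq (3/4 - \mu_{i+1,i}^2)\|\vec{b}_i^*\|^2$ fails. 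The algorithm terminates once the basis is size-reduced and every adjacent pair satisfies the Lov\'asz condition.

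First I would analyze the output. Combining size-reduction with the Lov\'asz condition gives $\|\vec{b}_{i+1}^*\|^2 \geq (3/4 - 1/4)\|\vec{b}_i^*\|^2 = \tfrac{1}{2}\|\vec{b}_i^*\|^2$, so by induction $\|\vec{b}_1^*\|^2 \leq 2^{i-1}\|\vec{b}_i^*\|^2$ for all $i \leq n$. Since for any non-zero $\vec{y} \in \lat$ one has $\|\vec{y}\| \geq \min_i \|\vec{b}_i^*\|$ (write $\vec{y}$ in the basis, take the largest index $i$ with non-zero integer coefficient, and note that the projection of $\vec{y}$ onto $\spn(\vec{b}_1,\ldots,\vec{b}_{i-1})^\perp$ is a non-zero integer multiple of $\vec{b}_i^*$), we get $\lambda_1(\lat) \geq \min_i \|\vec{b}_i^*\| \geq 2^{-(n-1)/2}\|\vec{b}_1\|$, and therefore $\|\vec{b}_1\| \leq 2^{(n-1)/2}\lambda_1(\lat) \leq 2^{n/2}\lambda_1(\lat)$.

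Next I would bound the number of swap steps via the potential function
\[
    \Pi(\vec{B}) := \prod_{i=1}^{n} \det(\lat(\vec{b}_1,\ldots,\vec{b}_i))^2 = \prod_{i=1}^{n} \prod_{j=1}^{i} \|\vec{b}_j^*\|^2 \; .
\]
Size-reductions do not change $\Pi$, since they leave the GSO untouched. A single swap of $\vec{b}_i$ and $\vec{b}_{i+1}$ only alters $\det(\lat(\vec{b}_1,\ldots,\vec{b}_i))$ among the determinants appearing in $\Pi$, replacing $\|\vec{b}_i^*\|^2$ by the new value, which by the failed Lov\'asz condition is at most $\tfrac{3}{4}\|\vec{b}_i^*\|^2$. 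Thus each swap multiplies $\Pi$ by a factor $\leq 3/4$. Because the input has rational entries, one can clear denominators and view $\Pi$ as a positive integer bounded by $2^{\poly(\mathrm{size}(\vec{B}))}$ throughout, so there can be at most $\poly(\mathrm{size}(\vec{B}))$ swaps before $\Pi < 1$ would force termination.

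The main technical obstacle, and the part I would treat most carefully, is controlling the \emph{bit complexity} of the intermediate numerators and denominators: naively iterating size reductions and swaps on rationals can cause the entries to blow up exponentially. The standard cure is to show that throughout the algorithm the GSO coefficients $\mu_{i,j}$ and the Gram-Schmidt norms $\|\vec{b}_i^*\|$ remain polynomially bounded in the input size (using the bounded potential $\Pi$ and the determinantal formula for $\|\vec{b}_i^*\|^2$), from which one deduces that the current basis vectors also have polynomial bit length after each size reduction. Combined with the polynomial bound on the number of iterations, and the fact that each iteration performs only a polynomial amount of rational arithmetic on polynomially-sized numbers, this yields the overall polynomial running time claimed in the theorem.
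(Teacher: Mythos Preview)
The paper does not prove this theorem at all; it is stated with a citation to \cite{LLLFactoringPolynomials82} and used as a black box. Your proposal, by contrast, sketches the standard LLL argument (size reduction plus Lov\'asz swaps, output quality via the chain $\|\vec{b}_{i+1}^*\|^2 \geq \tfrac12\|\vec{b}_i^*\|^2$, termination via the product-of-determinants potential), which is correct in outline and matches the original proof in \cite{LLLFactoringPolynomials82}. So there is no comparison to make with the paper's approach, since the paper has none; what you wrote is more than the paper itself provides.

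One small remark on your potential argument: the claim that ``one can clear denominators and view $\Pi$ as a positive integer'' deserves a line of care. The clean way is to first scale the input so that the $\vec{b}_i$ are integer vectors (this only rescales $\lambda_1$ and the output), after which each $\det(\lat(\vec{b}_1,\ldots,\vec{b}_i))^2$ is the Gram determinant of integer vectors and hence a positive integer; then $\Pi \in \Z_{>0}$ and the $3/4$-decrease per swap bounds the number of swaps by $O(\log \Pi_0)$. Your bit-complexity discussion is the right place to be careful, and the standard bounds you allude to (the $\|\vec{b}_i^*\|^2$ are ratios of successive Gram determinants, hence controlled by $\Pi$) are exactly what is needed.
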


 We will prove a strictly stronger result than the theorem below in the sequel, but this weaker result will still prove useful.

\begin{theorem}[\cite{ADRSSolvingShortest15,GNFindingShort08}]
    \label{thm:BKZ}
    There is a $2^{r + o(r)}\cdot \poly(n)$-time algorithm that takes as input a (basis for a) lattice $\lat \subset \R^n$ and $2 \leq r \leq n$ and outputs a $\gamma$-HKZ-reduced basis for $\lat$, where $\gamma := r^{n/r}$.
\end{theorem}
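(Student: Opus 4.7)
The plan is to combine the $2^{r+o(r)}$-time exact SVP algorithm of \cite{ADRSSolvingShortest15} (on rank-$r$ lattices) with a BKZ-style basis reduction framework as in \cite{GNFindingShort08}. At a high level, I would run BKZ with block size $r$, using the ADRS algorithm as a black-box subroutine to answer the oracle calls on projected rank-$r$ blocks.

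First, I would LLL-reduce the input basis via \cite{LLLFactoringPolynomials82} (cf.\ Theorem~\ref{thm:LLL}), which takes $\poly(n)$ time and produces a basis whose Gram--Schmidt norms satisfy $\|\vec{b}_i^*\|/\|\vec{b}_{i+1}^*\| \le 2$; this provides a well-conditioned starting point for the main loop. Then I would repeatedly sweep over positions $i = 1, \ldots, n-r+1$, calling the ADRS SVP oracle on the projected rank-$r$ sublattice $\pi_i(\lat_{[i, i+r-1]})$, and, whenever the returned vector is strictly shorter than $\vec{b}_i^*$, updating the basis via insertion of this vector at position $i$ followed by a local size- and LLL-reduction to restore the basis structure. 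For the trailing positions, I would additionally HKZ-reduce the final block $\basis_{[n-r+1, n]}$ (one more rank-$r$ computation). By the polynomial-time convergence analysis of BKZ due to Gama--Nguyen, the loop reaches a fixed point in $\poly(n)$ oracle calls; since each call costs $2^{r+o(r)}$ time, the total running time is $2^{r+o(r)} \cdot \poly(n)$.

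At the fixed point, for every $i \le n-r+1$ we have $\|\vec{b}_i^*\| = \lambda_1(\pi_i(\lat_{[i, i+r-1]})) \le \sqrt{\gamma_r} \bigl(\prod_{k=i}^{i+r-1} \|\vec{b}_k^*\|\bigr)^{1/r}$, and the analogous condition on the last block gives $\|\vec{b}_i^*\| = \lambda_1(\pi_i(\lat_{[n-r+1,n]}))$ for $i > n-r+1$. Since projection preserves BKZ-$r$-reducedness, each projected lattice $\pi_i(\lat)$ is itself BKZ-$r$-reduced, so iterating Mordell's inequality $\gamma_m \le \gamma_r^{(m-1)/(r-1)}$ on $\pi_i(\lat)$ yields $\|\vec{b}_i^*\| \le \gamma_r^{(n-i)/(r-1)} \lambda_1(\pi_i(\lat))$. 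Using $\gamma_r \le r$, the right-hand side is bounded by $r^{n/r + o(n/r)} \cdot \lambda_1(\pi_i(\lat))$, which matches the claimed $\gamma = r^{n/r}$ up to lower-order factors that can be absorbed into the $o(r)$ term of the running time.

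The main technical obstacle is this last analytic step: translating the per-block fixed-point condition $\|\vec{b}_i^*\|^r \le \gamma_r^{r/2}\prod_{k=i}^{i+r-1}\|\vec{b}_k^*\|$ into a \emph{global} approximate-HKZ guarantee via iterated Mordell-type inequalities, while being careful that the projections do not inflate the exponent beyond $n/r$. Everything else --- the LLL preprocessing, the BKZ oracle loop, the bookkeeping for insertion and local size-reduction, and the polynomial-time convergence of the sweep --- is by now classical and follows directly from \cite{LLLFactoringPolynomials82,GNFindingShort08}, with the only new ingredient being that the inner SVP oracle is instantiated by the $2^{r+o(r)}$-time algorithm of \cite{ADRSSolvingShortest15} rather than by Kannan's or Micciancio--Voulgaris's algorithm.
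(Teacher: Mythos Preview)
The paper does not actually prove Theorem~\ref{thm:BKZ}; it is stated as a known consequence of combining the $2^{r+o(r)}$-time SVP solver of \cite{ADRSSolvingShortest15} with block reduction \`a la \cite{GNFindingShort08}, and the paper explicitly remarks that a strictly stronger result is proved later. Your proposal is precisely this standard argument, so in spirit it matches what the paper is invoking.

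There is, however, one genuine technical slip. You write ``by the polynomial-time convergence analysis of BKZ due to Gama--Nguyen, the loop reaches a fixed point in $\poly(n)$ oracle calls.'' This is not what \cite{GNFindingShort08} proves: polynomial-time termination of \emph{plain} BKZ (the sweep-and-insert loop you describe) is still open. What \cite{GNFindingShort08} shows is that \emph{slide reduction} terminates in polynomially many oracle calls, and it is slide reduction (or DBKZ~\cite{MWPracticalPredictable16}, or terminating BKZ of Hanrot--Pujol--Stehl\'e) that you should actually be running here. The fix is painless---any of these variants gives, at every index $i$, the bound $\|\vec{b}_i^*\| \le \gamma_r^{(n-i)/(r-1)}\lambda_1(\pi_i(\lat))$ you need, either directly or by applying the reduction to each $\pi_i(\lat)$ in turn---but as written your convergence claim is unsupported.

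A smaller point: your remark that the gap between $\gamma_r^{(n-1)/(r-1)}$ and $r^{n/r}$ ``can be absorbed into the $o(r)$ term of the running time'' is stated confusingly (approximation factor and running time are different axes), but the underlying idea---run with block size $r' = r(1+o(1))$ so that $(r')^{(n-1)/(r'-1)} \le r^{n/r}$, at cost $2^{r'+o(r')} = 2^{r+o(r)}$---is correct. You should spell that out rather than leave it as a throwaway clause.
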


\begin{theorem}[\cite{BLPRS13}]\label{thm:DG_initialisation}
	There is a probabilistic polynomial-time algorithm that takes as input a basis $\basis$ for an n-dimensional lattice $\lat \subset \R^n$, a parameter $s \geq \Vert \basis^*\Vert\sqrt{10 \log n}$ and outputs a vector that is distributed as $\mathcal{D}_{\lat,s}$, where $\Vert \basis^* \Vert$ is the length of the longest vector in the Gram-Schmidt orthogonalization of $\basis$.
\end{theorem}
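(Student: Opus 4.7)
The plan is to implement Klein's randomized nearest-plane sampler (as formalized by Gentry, Peikert, and Vaikuntanathan). After computing the Gram-Schmidt orthogonalization $\basis^* = (\vec{b}_1^*,\ldots,\vec{b}_n^*)$, I would sample integer coefficients $z_n, z_{n-1}, \ldots, z_1$ in reverse order: once $z_{i+1},\ldots,z_n$ have been fixed, set $c_i = \sum_{k>i} z_k \mu_{k,i}$ and draw $z_i$ from the one-dimensional discrete Gaussian $D_{\Z + c_i,\, s/\|\vec{b}_i^*\|}$. The output is $\vec{v} := \sum_i z_i \vec{b}_i \in \lat$. Each one-dimensional sampler can be implemented in polynomial time by standard rejection sampling against a discretized continuous Gaussian (since $s/\|\vec{b}_i^*\| \geq \sqrt{10 \log n}$ is not too small), so the overall algorithm runs in polynomial time.

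For the analysis, I would use the Gram-Schmidt identity obtained by expanding $\vec{v} = \sum_i (z_i + c_i)\vec{b}_i^*$ in the orthogonal basis $\basis^*$, which yields
\[
    \|\vec{v}\|^2 = \sum_{i=1}^n (z_i + c_i)^2 \|\vec{b}_i^*\|^2 \; .
\]
Multiplying the conditional probabilities of the coordinate samples then collapses cleanly to
\[
    \Pr[\vec{v}\text{ is output}] \;=\; \prod_{i=1}^n \frac{\rho_{s/\|\vec{b}_i^*\|}(z_i + c_i)}{\rho_{s/\|\vec{b}_i^*\|}(\Z + c_i)} \;=\; \frac{\rho_s(\vec{v})}{\prod_{i=1}^n \rho_{s/\|\vec{b}_i^*\|}(\Z + c_i)}\; .
\]
Since the target $D_{\lat,s}$ puts mass proportional to $\rho_s(\vec{v})$ on each $\vec{v}$, it suffices to show that the denominator above is essentially independent of the $c_i$'s (hence of $\vec{v}$).

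This is exactly where the hypothesis $s \geq \|\basis^*\| \sqrt{10\log n}$ is used. By Claim~\ref{clm:Z_smooth} we have $\eta_{n^{-10}}(\Z) \leq \sqrt{10 \log n}$, so $s/\|\vec{b}_i^*\| \geq \eta_{n^{-10}}(\Z)$ for every $i$. Applying Lemma~\ref{lem:smooth} coordinate by coordinate, each factor $\rho_{s/\|\vec{b}_i^*\|}(\Z + c_i)$ equals $\rho_{s/\|\vec{b}_i^*\|}(\Z)$ up to a multiplicative factor $e^{\pm O(n^{-10})}$, independent of $c_i$. Taking the product, the denominator equals a fixed constant up to a factor $e^{\pm O(n^{-9})}$, so the output distribution is $O(n^{-9})$-similar to $D_{\lat,s}$.

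If the theorem is read as demanding the output be distributed \emph{exactly} as $D_{\lat,s}$, I would append a rejection-sampling step: after generating $\vec{v}$ compute $\prod_i \rho_{s/\|\vec{b}_i^*\|}(\Z + c_i)$ and reject with probability given by the ratio of this quantity to a fixed upper bound. The $O(n^{-9})$-similarity established above already guarantees acceptance probability $1 - o(1)$, so the expected number of trials is $1+o(1)$ and the algorithm remains polynomial-time. The main technical nuisance is numerical: the $c_i$'s and the theta-like sums $\rho_{s/\|\vec{b}_i^*\|}(\Z + c_i)$ are generically irrational and must be evaluated to polynomial precision, but the comfortable $e^{\pm O(n^{-9})}$ slack absorbs any rounding incurred at that precision.
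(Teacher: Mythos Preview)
The paper does not prove Theorem~\ref{thm:DG_initialisation}; it is quoted from \cite{BLPRS13} as a black box in the preliminaries. So there is no in-paper proof to compare against.

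That said, your proposal is the standard Klein/GPV randomized nearest-plane sampler, which is exactly the construction underlying the cited result, and your analysis via the Gram--Schmidt Pythagorean identity and the coordinatewise smoothing bound (Claim~\ref{clm:Z_smooth} plus Lemma~\ref{lem:smooth}) is correct. Two small glitches: (i) you write ``draw $z_i$ from $D_{\Z+c_i,\,s/\|\vec{b}_i^*\|}$'' but then treat $z_i\in\Z$ and work with $z_i+c_i\in\Z+c_i$; what you mean is to sample $z_i\in\Z$ with density proportional to $\rho_{s/\|\vec{b}_i^*\|}(z_i+c_i)$, i.e.\ a Gaussian on $\Z$ centered at $-c_i$. (ii) In the rejection step you should \emph{accept} with probability $\prod_i \rho_{s/\|\vec{b}_i^*\|}(\Z+c_i)/D_{\max}$, not reject; as written the correction goes the wrong way. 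The precision caveat you raise is genuine if one insists on literally exact sampling, but for every use of Theorem~\ref{thm:DG_initialisation} in this paper $O(n^{-9})$-similarity (or any negligible closeness) is more than sufficient.
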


\subsection{Lattice basis reduction}
\label{subsec:lattice reduction}

\paragraph{LLL reduction.} A basis $\vec{B} = (\mathbf{b}_1,
\ldots, \mathbf{b}_n)$ is $\eps$-{\em LLL-reduced} \cite{LLLFactoringPolynomials82} for $\eps \in [0, 1]$ if it is a size-reduced basis and for  $1 \leq i < n$, the projected  %
block $\vec{B}_{[i,i+1]}$
satisfies Lov\'{a}sz's condition:
$\|\mathbf{b}_{i}^{\ast}\|^2 \leq (1 +
\varepsilon)\|\mu_{i,i-1}\mathbf{b}_{i-1}^{\ast} + \mathbf{b}_{i}^{\ast}\|^2$ .
For $\eps \geq 1/\poly(n)$, an $\eps$-LLL-reduced basis for any given lattice can be computed efficiently.

\paragraph{SVP reduction and its extensions.}

Let $\vec{B} = (\mathbf{b}_1, \ldots, \mathbf{b}_{n})$ be a basis of a lattice $\lat$ 
and $\delta \ge 1$ be approximation factors. 

We say that $\vec{B}$ is $\delta$-{\em SVP-reduced} if 
$\|\mathbf{b}_1\|  \leq \delta \cdot \lambda_{1}(\lat)$.
Similarly, we say that $\vec{B}$ is $\delta$-{\em HSVP-reduced}
if $\|\vec{b}_1\| \leq \delta \cdot \mathrm{vol}(\lat)^{1/n}$.

$\vec{B}$ is {\em $\delta$-DHSVP-reduced} \cite{GNFindingShort08, ALNSSlideReduction19} (where D
stands for dual) if the reversed dual basis $\vec{B}^{-s}$ is  $\delta$-HSVP-reduced and it implies that 
\[ \mathrm{vol}(\lat)^{1/n}  \leq \delta \cdot \|\vec{b}_n^{\ast}\| \; . \]

Given a $\delta$-(H)SVP oracle on lattices with rank at most $n$, we can efficiently compute a $\delta$-(H)SVP-reduced
basis or a $\delta$-D(H)SVP-reduced basis for any rank $n$ lattice $\lat \subseteq \Z^m$. 
Furthermore, this also applies for a projected block of basis. More specifically, with access to a $\delta$-(H)SVP oracle for lattices with rank at most $k$, given any basis $\basis = (\vec{b}_1,\ldots, \vec{b}_n) \in \Z^{m\times n}$ of $\lat$ and an index $i \in [1,n-k+1]$, we can efficiently compute a size-reduced basis $\mathbf{C} = (\vec{b}_1,\ldots, \vec{b}_{i-1}, \vec{c}_i,\ldots, \vec{c}_{i+k-1}, \vec{b}_{i+k},\ldots, \vec{b}_n)$ such that $\mathbf{C}$ is a basis for $\lat$ and the projected block $\mathbf{C}_{[i,i+k-1]}$ is $\delta$-(H)SVP-reduced or $\delta$-D(H)SVP reduced. Moreover, we note the following:
\begin{itemize}
	\item If $\mathbf{C}_{[i,i+k-1]}$ is $\delta$-(H)SVP-reduced, %
	the procedures in \cite{GNFindingShort08,MWPracticalPredictable16} equipped with $\delta$-(H)SVP-oracle ensure that $\|\mathbf{C}^{\ast}\|\le \|\basis^{\ast}\|$;
	\item If $\mathbf{C}_{[i,i+k-1]}$ is $\delta$-D(H)SVP-reduced, the inherent LLL reduction implies $\|\mathbf{C}^{\ast}\|\le 2^{k}\|\basis^{\ast}\|$. Indeed, the GSO %
	of $\mathbf{C}_{[i,i+k-1]}$ satisfies \[\|(\mathbf{C}_{[i,i+k-1]})^{\ast}\|\le 2^{k/2}\lambda_{k}(\lat(\mathbf{C}_{[i,i+k-1]}))\] (by \cite[p.\,518, Line 27]{LLLFactoringPolynomials82}) and $\lambda_{k}(\lat(\mathbf{C}_{[i,i+k-1]}))\le \sqrt{k}\|\basis^{\ast}\|$. Here, $\lambda_k(\cdot)$ denotes the $k$-th minimum. 
\end{itemize}

Therefore, with size reduction, performing $\poly(n, \log \|\basis\|)$ many such operations will increase $\|\basis^{\ast}\|$ and hence $\|\basis\|$ by at most a factor of $2^{\poly(n,\log\|B\|)}$.
If the number of operations is bounded by $\poly(n, \log \|\basis\|)$, all intermediate steps and the total running time (excluding oracle queries) will be polynomial in the initial input size; Details can be found in e.g., \cite{GNFindingShort08,LNApproximatingDensest14}.
Hence, we will focus on bounding the number of calls to such block reduction subprocedures when we analyze the running time of basis reduction algorithms.

\paragraph{Twin reduction}
The following notion of twin reduction and the subsequent fact comes from \cite{GNFindingShort08, ALNSSlideReduction19}.

A basis $\vec{B} = (\vec{b}_1,\ldots, \vec{b}_{d+1})$  is $\delta$-{\em twin-reduced} if $\basis_{[1,d]}$ is $\delta$-{\em HSVP-reduced} and $\basis_{[2,d+1]}$ is $\delta$-{\em DHSVP-reduced}.

\begin{fact}
	\label{fact:twinsies}
	If $\basis := (\vec{b}_1,\ldots, \vec{b}_{d+1}) \in \R^{m \times (d+1)}$ is $\delta$-twin-reduced, then 
	\begin{equation}
	\label{eq:twin_decay}
	\|\vec{b}_1\| \le \delta^{2d/(d-1)} \|\vec{b}^*_{d+1}\|
	\; .
	\end{equation}
\end{fact}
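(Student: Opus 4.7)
The plan is to expand both reducedness conditions in terms of Gram--Schmidt norms, observe that the two projected sublattices share most of their Gram--Schmidt vectors, and then combine the resulting inequalities by dividing one by the other.

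First I would write $V_1 := \det(\lat(\basis_{[1,d]})) = \prod_{i=1}^{d} \|\vec{b}_i^*\|$ and $V_2 := \det(\lat(\basis_{[2,d+1]})) = \prod_{i=2}^{d+1} \|\vec{b}_i^*\|$, using the standard fact that the Gram--Schmidt vectors of a projected block are precisely the relevant tail of the Gram--Schmidt sequence of $\basis$. Since $\vec{b}_1^* = \vec{b}_1$, the two products share the factors $\|\vec{b}_2^*\|,\ldots,\|\vec{b}_d^*\|$, giving the key identity
\[
\frac{V_2}{V_1} \;=\; \frac{\|\vec{b}_{d+1}^*\|}{\|\vec{b}_1\|}.
\]

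Next I would unpack the two hypotheses. Since $\basis_{[1,d]}$ is $\delta$-HSVP-reduced, we have $\|\vec{b}_1\| \le \delta \cdot V_1^{1/d}$, i.e.\ $V_1^{1/d} \ge \|\vec{b}_1\|/\delta$. The DHSVP condition on $\basis_{[2,d+1]}$ is exactly the inequality displayed in the definition, namely $V_2^{1/d} \le \delta \cdot \|\vec{b}_{d+1}^*\|$.

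Now I would raise the ratio identity to the $1/d$ power, giving $V_2^{1/d} = V_1^{1/d} \cdot (\|\vec{b}_{d+1}^*\|/\|\vec{b}_1\|)^{1/d}$, and substitute the two bounds to obtain
\[
\delta \cdot \|\vec{b}_{d+1}^*\| \;\ge\; V_2^{1/d} \;\ge\; \frac{\|\vec{b}_1\|}{\delta}\cdot\Bigl(\frac{\|\vec{b}_{d+1}^*\|}{\|\vec{b}_1\|}\Bigr)^{1/d}.
\]
Multiplying through by $\delta$ and rearranging yields $\delta^2 \cdot \|\vec{b}_{d+1}^*\|^{(d-1)/d} \ge \|\vec{b}_1\|^{(d-1)/d}$, and raising both sides to the $d/(d-1)$ power produces exactly $\|\vec{b}_1\| \le \delta^{2d/(d-1)}\,\|\vec{b}_{d+1}^*\|$, as claimed.

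There is no real obstacle here: the proof is a short algebraic manipulation once one notices the telescoping cancellation in $V_2/V_1$. The only point that merits care is making sure the projected-block determinants are correctly identified with the Gram--Schmidt products (which is immediate from $\pi_i(\vec{b}_i) = \vec{b}_i^*$ and the fact that Gram--Schmidt commutes with projection onto the orthogonal complement of $\vec{b}_1,\ldots,\vec{b}_{i-1}$) and that the DHSVP definition is applied in the form already spelled out in the excerpt.
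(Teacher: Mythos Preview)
Your argument is correct. Note, however, that the paper does not actually supply a proof of this fact: it is stated without proof and attributed to \cite{GNFindingShort08,ALNSSlideReduction19}. The argument you give---writing $\prod_{i=1}^d \|\vec{b}_i^*\|$ and $\prod_{i=2}^{d+1}\|\vec{b}_i^*\|$ for the two block determinants, exploiting the telescoping ratio $\|\vec{b}_{d+1}^*\|/\|\vec{b}_1\|$, and combining the HSVP and DHSVP inequalities---is precisely the standard proof one finds in those references, so there is nothing to contrast.
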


\subsection{The DBKZ algorithm}

We augment Micciancio and Walter's elegant DBKZ algorithm \cite{MWPracticalPredictable16} with a $\delta_H$-HSVP-oracle instead of an SVP-oracle since the SVP-oracle is used as a $\sqrt{\gamma_{k}}$-HSVP oracle everywhere in their algorithm. See \cite{ALNSSlideReduction19} for a high-level sketch of the proof. %

\begin{algorithm}
	\small
	\caption{%
		The Micciancio-Walter  DBKZ %
		algorithm \cite[Algorithm 1]{MWPracticalPredictable16}\label{alg:SDBKZ}}
	\begin{algorithmic}[1]
		\REQUIRE A block size $k \ge 2$, number of tours $N$,  
		a basis $\basis = (\mathbf{b}_{1}, \cdots,
		\mathbf{b}_{n}) \in \Z^{m \times n}$, and access to a $\delta_H$-HSVP oracle for lattices with rank $k$.

		\ENSURE  A new basis of $\lat(\basis)$.

		\FOR{$\ell = 1$ \TO $N$}  %
		\label{step of SDBKZ alg:repeat}

		\FOR{$i = 1$ \TO $n-k$}\label{step of SDBKZ alg:for}
		
		\STATE %
		$\delta_H$-HSVP-reduce $\basis_{[i,i+k-1]}$.
		\label{step of LSDBKZ alg:size-reduction}
		\ENDFOR %
		\label{step of SDBKZ alg:firt end for}
		\FOR{$j = n-k+1$ \TO $1$} \label{step of SDBKZ alg:for2}

		\STATE %
		$\delta_H$-DHSVP-reduce $\basis_{[j,j+k-1]}$%

		\ENDFOR %
		\label{step of SDBKZ alg:end for}

		\ENDFOR
		\label{step of SDBKZ alg:until}
		\STATE  $\delta_H$-HSVP-reduce $\basis_{[1,k]}$.
		\RETURN $\basis$. \label{step of SDBKZ alg:return}
	\end{algorithmic}
\end{algorithm}

\begin{theorem}\label{th:SDBKZ}
	For integers $n > k \geq 2$, an approximation factor $1 \leq \delta_H \leq 2^k$,
	an input basis $\basis_{0} \in \Z^{m \times n}$ for a lattice $\lat  \subseteq \Z^m$, and 
	$
	N := \lceil (2n^2/(k-1)^2) \cdot \log (n\log(5\|\basis_{0}\|)/\eps) \rceil$
	for some $\eps \in [2^{-\poly(n)},1]$, %
	Algorithm \ref{alg:SDBKZ} outputs a
	basis $\basis$ of $\lat$ in polynomial time (excluding oracle queries) such that
	\[
	\|\vec{b}_1\| \leq (1+\eps)\cdot (\delta_H)^{\frac{n-1}{(k-1)}}\vol(\lat)^{1/n}
	\; ,
	\]
	by making $N
	\cdot (2n-2k+1)+1$ calls to the $\delta_H$-HSVP oracle for lattices with rank $k$.
\end{theorem}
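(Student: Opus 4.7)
The plan is to adapt Micciancio and Walter's original potential-function analysis of DBKZ to the HSVP-oracle setting. As emphasized in the algorithm description, the only way the SVP oracle is exploited in \cite{MWPracticalPredictable16} is through its $\sqrt{\gamma_k}$-HSVP guarantee on projected blocks; replacing it with a general $\delta_H$-HSVP oracle therefore does not break the structure of the argument and only replaces $\sqrt{\gamma_k}$ by $\delta_H$ throughout.

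First I would record the effect of one tour on the log Gram-Schmidt profile $x_i := \log \|\vec{b}_i^*\|$. Each HSVP-reduction of $\basis_{[i,i+k-1]}$ in the forward sweep enforces $x_i \leq \log \delta_H + \frac{1}{k}\sum_{j=i}^{i+k-1} x_j$, and each DHSVP-reduction of $\basis_{[j,j+k-1]}$ in the backward sweep enforces $x_{j+k-1} \geq -\log \delta_H + \frac{1}{k}\sum_{\ell=j}^{j+k-1} x_\ell$, while $\sum_i x_i = \log \vol(\lat)$ is preserved throughout. Iterating Fact~\ref{fact:twinsies} (or equivalently solving the tightness equations) identifies the unique fixed profile $\vec{x}^*$ as a geometric progression with ratio $\delta_H^{-2/(k-1)}$, with $x_1^* = \frac{1}{n}\log\vol(\lat) + \frac{n-1}{k-1}\log\delta_H$. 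Following Micciancio--Walter, I would introduce the weighted potential $\Phi(\vec{x}) := \sum_{i=1}^{n-k}(n-k+1-i)(x_i - x_i^*)$ and show that one full tour is an affine map on the truncation $(x_1,\ldots,x_{n-k})$ whose action on $\Phi$ contracts by a factor $1 - \Omega((k-1)^2/n^2)$.

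Given the contraction rate, the stated $N$ follows from bounding the initial potential by $\poly(n)\log\|\basis_0\|$ and iterating $\log((n\log\|\basis_0\|)/\eps)/\log(1/(1-\Omega((k-1)^2/n^2))) = O((n^2/(k-1)^2)\log(n\log\|\basis_0\|/\eps))$ times. After $N$ tours, $\Phi \leq \log(1+\eps)$, which with the choice of weights gives $x_1 \leq x_1^* + \log(1+\eps)$; the concluding HSVP-reduction of $\basis_{[1,k]}$ in the last line of the algorithm then yields $\|\vec{b}_1\| \leq (1+\eps)\delta_H^{(n-1)/(k-1)}\vol(\lat)^{1/n}$. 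The call count $N(2n-2k+1)+1$ is immediate from counting the inner loops in Algorithm~\ref{alg:SDBKZ}, and the polynomial time bound on all non-oracle work, along with the guarantee that intermediate bit lengths stay polynomial, follows from the size-reduction discussion in Section~\ref{subsec:lattice reduction}.

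The main obstacle is establishing the precise contraction rate $1 - \Omega((k-1)^2/n^2)$ for the affine tour map. This requires a spectral/Markov-chain style analysis of an averaging operator on $(n-k)$ coordinates: one shows that in an appropriate weighted $\ell^1$ norm the forward and backward sweeps each strictly move the profile toward its fixed point, with quantitative rate governed by the block size relative to $n$. Since $\delta_H$ enters only as an additive shift in the affine map and not in the linear part, the spectral gap computation is identical to the one in \cite{MWPracticalPredictable16}, and the bound $\delta_H \leq 2^k$ is used only to guarantee that the affine shift is itself polynomially bounded so that the $\poly(n,\log\|\basis\|)$ bit-length control from Section~\ref{subsec:lattice reduction} is maintained across all tours.
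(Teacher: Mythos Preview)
The paper does not actually prove this theorem: it simply states the result and remarks that the Micciancio--Walter analysis goes through verbatim once the SVP oracle is replaced by a $\delta_H$-HSVP oracle, pointing to \cite{MWPracticalPredictable16} and \cite{ALNSSlideReduction19} for the details. Your proposal is precisely that adaptation, so you are doing what the paper defers to those references; the overall structure (affine tour map on the log-GSO profile, geometric fixed profile with $x_1^* = \tfrac{1}{n}\log\vol(\lat) + \tfrac{n-1}{k-1}\log\delta_H$, contraction by a factor $1-\Theta((k-1)^2/n^2)$, and the observation that $\delta_H$ only shifts the affine map and does not affect the linear part) is exactly the Micciancio--Walter argument.

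One small caution: the specific potential you wrote, $\Phi(\vec{x}) = \sum_{i=1}^{n-k}(n-k+1-i)(x_i - x_i^*)$, is not quite the quantity Micciancio and Walter track; their analysis works with the partial volumes $\log\vol(\basis_{[1,i]})$ (equivalently prefix sums of the $x_i$) and shows that a single tour acts as a doubly-stochastic-like linear map on the deviation of these prefix sums from their fixed values, with spectral radius $1-\Theta((k-1)^2/n^2)$. Your weighted sum is a linear functional of those prefix sums, so the contraction you claim for $\Phi$ does follow, but as stated $\Phi$ alone being small does not immediately give $x_1 - x_1^* \leq \log(1+\eps)$ without controlling the other coordinates; you need the full vector of deviations to be small, which is what the MW spectral argument actually yields. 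If you carry out the analysis at the level of the full profile (as MW do) rather than a single scalar potential, the conclusion $\|\vec{b}_1\| \leq (1+\eps)\delta_H^{(n-1)/(k-1)}\vol(\lat)^{1/n}$ follows as you say.
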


\section{Smooth sublattices and \texorpdfstring{$\bareta_\eps(\lat)$}{eta bar}}

 The analysis of our algorithm relies on the existence of a \emph{smooth sublattice} $\lat' \subseteq \lat$ of our input lattice $\lat \subset \R^n$, i.e., a sublattice $\lat'$ such that $\eta_\eps(\lat')$ is small (relative to, say, $\lambda_1(\lat)$ or $\det(\lat)^{1/n}$). To that end, for $\eps > 0$ and a lattice $\lat \subset \R^n$, we define
\[
    \bareta_\eps(\lat) := \min_{\lat' \subseteq \lat} \eta_\eps(\lat')
    \; ,
\]
where the minimum is taken over all sublattices $\lat' \subseteq \lat$. (It is not hard to see that the minimum is in fact achieved. Notice that any minimizer $\lat'$ must be a primitive sublattice, i.e., $\lat' = \lat \cap \spn(\lat')$.)

We will now prove that $\bareta_\eps(\lat)$ is bounded both in terms of $\lambda_1(\lat)$ and $\det(\lat)$.

\begin{lemma}
\label{lem:bar_eta}
    For any lattice $\lat \subset \R^n$ and any $\eps \in (0,1/2)$, 
    \[
        \lambda_1(\lat)/\sqrt{n} \leq \bareta_\eps(\lat) \leq \sqrt{\log(1/\eps)} \cdot \min\{ \lambda_1(\lat), 10(\log n + 2)\det(\lat)^{1/n} \} 
        \; . 
    \]
\end{lemma}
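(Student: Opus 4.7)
My plan is to prove the upper and lower bounds independently, each by exhibiting (or ruling out) an appropriate witness sublattice.

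For the first part of the upper bound, $\bareta_\eps(\lat) \leq \sqrt{\log(1/\eps)}\,\lambda_1(\lat)$, I will take the rank-one primitive sublattice $\lat' := \Z \vec{v}$ generated by a shortest nonzero vector $\vec{v} \in \lat$. Combining the scaling identity $\eta_\eps(\alpha \Z) = \alpha\,\eta_\eps(\Z)$ with Claim~\ref{clm:Z_smooth} immediately yields $\eta_\eps(\lat') = \|\vec{v}\| \cdot \eta_\eps(\Z) \leq \sqrt{\log(1/\eps)}\,\lambda_1(\lat)$. For the second part, $\bareta_\eps(\lat) \leq 10\sqrt{\log(1/\eps)}(\log n + 2)\det(\lat)^{1/n}$, I will invoke the reverse Minkowski theorem of~\cite{Regev:2017:RMT:3055399.3055434}: by homogeneity (applying the theorem after rescaling $\lat$ so that $\det(\lat)^{1/n} = 1$), its conclusion supplies a sublattice $\lat' \subseteq \lat$ with $\eta_{1/2}(\lat') \leq 10(\log n + 2)\det(\lat)^{1/n}$, and Corollary~\ref{cor:smoothing_parameter_scaled} then converts this to $\eta_\eps(\lat') \leq \sqrt{\log(1/\eps)}\,\eta_{1/2}(\lat')$.

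For the lower bound, I will fix any non-trivial sublattice $\lat' \subseteq \lat$ of rank $k \leq n$ and bound its smoothing parameter from below. Since $\eps \leq 1/2$ gives $\eta_\eps(\lat') \geq \eta_{1/2}(\lat')$, it suffices to lower bound $\eta_{1/2}(\lat')$. When $k \geq 20$, Theorem~\ref{thm:eta_lambda_n} applied directly to $\lat'$ yields
\[
\eta_{1/2}(\lat') \geq \lambda_k(\lat')/\sqrt{k} \geq \lambda_1(\lat')/\sqrt{k} \geq \lambda_1(\lat)/\sqrt{n},
\]
using $\lambda_1(\lat') \geq \lambda_1(\lat)$ (since every nonzero vector of $\lat'$ also lies in $\lat$) together with $k \leq n$. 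For the residual case $k < 20$, I will combine the elementary bound $\eta_{1/2}(\lat') \geq \sqrt{\log 2/\pi}/\lambda_1(\lat'^*)$ (read off from Eq.~\eqref{eq:PSF_gaussian} by noting that the two shortest dual vectors $\pm \vec{w}$ alone contribute $\geq 2\exp(-\pi\lambda_1(\lat'^*)^2\eta^2)$ to $\rho_{1/\eta}(\lat'^* \setminus \{\vec{0}\})$) with the transference estimate $\lambda_1(\lat'^*) \leq \gamma_k/\lambda_1(\lat') = O(k)/\lambda_1(\lat')$ obtained from two applications of Minkowski's theorem, which yields $\eta_{1/2}(\lat') = \Omega(\lambda_1(\lat')/k) \geq \Omega(\lambda_1(\lat))$; since $k$ is bounded by an absolute constant, this comfortably exceeds $\lambda_1(\lat)/\sqrt{n}$.

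The main technical obstacle I anticipate is the second part of the upper bound---specifically, coercing the reverse Minkowski theorem into the precise form $\eta_{1/2}(\lat') \leq 10(\log n+2)\det(\lat)^{1/n}$, since its native statement in~\cite{Regev:2017:RMT:3055399.3055434} controls the Gaussian mass $\rho_1(\cdot)$ on the dual rather than directly producing a smooth primal sublattice, and one must unpack the argument to extract the sublattice-level conclusion with the stated constants. The small-$k$ case of the lower bound is essentially bookkeeping, and all remaining steps follow routinely from the results already collected in the preliminaries.
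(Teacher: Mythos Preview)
Your treatment of the first upper bound (via the rank-one sublattice $\Z\vec{v}$) and of the lower bound for $k \geq 20$ matches the paper exactly. Your handling of the small-rank case $k < 20$ in the lower bound is in fact more careful than the paper, which simply invokes Theorem~\ref{thm:eta_lambda_n} without addressing the restriction $k \geq 20$.

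There is, however, a genuine gap in your plan for the second upper bound. You propose to rescale so that $\det(\lat)^{1/n} = 1$ and then say the reverse Minkowski theorem ``supplies a sublattice $\lat'$'' with small $\eta_{1/2}$. But the form of the theorem used here (Theorem~\ref{thm:stable_lattice}) already bounds $\eta_{1/2}$ directly---the obstacle is not converting from a $\rho_1$ statement, as you suggest. The actual obstacle is that Theorem~\ref{thm:stable_lattice} applies only to \emph{stable} lattices, meaning $\det(\lat) = 1$ \emph{and} $\det(\lat'') \geq 1$ for every sublattice $\lat''$. Rescaling achieves the first condition but not the second, and the theorem does not hand you a sublattice.

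The paper closes this gap with a short induction on the rank: after rescaling to $\det(\lat) = 1$, either $\lat$ is stable (and Theorem~\ref{thm:stable_lattice} applies with $\lat' = \lat$), or there exists a proper sublattice $\lat'' \subset \lat$ of rank $k < n$ with $\det(\lat'') < 1$, whence the inductive hypothesis gives $\bareta_{1/2}(\lat) \leq \bareta_{1/2}(\lat'') \leq 10(\log k + 2)\det(\lat'')^{1/k} < 10(\log n + 2)$. This dichotomy-and-recurse step is the missing idea; once you add it, the rest of your argument goes through.
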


The bounds in terms of $\lambda_1(\lat)$ are more-or-less trivial.
The bound $\bareta_\eps(\lat) \lesssim \sqrt{\log(1/\eps) \log n} \det(\lat)^{1/n}$ follows from the main result in \cite{Regev:2017:RMT:3055399.3055434} (originally conjectured by Dadush~\cite{DRStrongReverse16}), which is called a ``reverse Minkowski theorem'' and which we present below. (In fact, Lemma~\ref{lem:bar_eta} is essentially equivalent to the main result in \cite{Regev:2017:RMT:3055399.3055434}.)

\begin{definition}
	A lattice $\lat \subset \R^n$ is a \textbf{stable} lattice if $\det(\lat) = 1$ and $\det(\lat') \geq 1$ for all lattices $\lat' \subseteq \lat$.
\end{definition}
\begin{theorem}[\cite{Regev:2017:RMT:3055399.3055434}]\label{thm:stable_lattice}
	For any stable lattice $\lat \subset \R^n$, $\eta_{1/2}(\lat) \leq 10(\log n + 2)$.
\end{theorem}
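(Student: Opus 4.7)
The plan is to prove Theorem~\ref{thm:stable_lattice} by induction on the rank $n$ of $\lat$, after first recasting the smoothing-parameter bound as a Gaussian-mass bound on $\lat$ itself via Poisson summation. Setting $C = 10(\log n + 2)$, Eq.~\eqref{eq:PSF_gaussian} applied to $\lat^*$ — together with $\det(\lat^*) = 1/\det(\lat) = 1$ by stability — gives $\rho_{1/C}(\lat^* \setminus \{\vec{0}\}) = \rho_C(\lat)/C^n - 1$, so the desired bound $\eta_{1/2}(\lat) \leq C$ is equivalent to
\[
    \rho_C(\lat) \leq \tfrac{3}{2}\, C^n \qquad \text{for } C = 10(\log n + 2) \; .
\]
The base case $n = 1$ is immediate from Claim~\ref{clm:Z_smooth}, since the only one-dimensional stable lattice is $\Z$ (up to isometry), and $\eta_{1/2}(\Z) \leq \sqrt{\log 2} \leq 20$.

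For the inductive step, I would first check that stability descends cleanly to primitive sublattices and projection-quotients: for any primitive $\lat' \subseteq \lat$ of rank $k$, both $\lat'$ and the projected lattice $\lat'' := \pi_{\spn(\lat')^\perp}(\lat)$ of rank $n-k$ are stable, because any sublattice of $\lat''$ lifts through $\lat'$ to a sublattice of $\lat$ whose determinant equals the product of the two. Then, decomposing $\lat$ into cosets of $\lat'$ and splitting each $\vec{x} = \vec{y} + \vec{z}$ into its components parallel and perpendicular to $\spn(\lat')$ — combined with the standard Banaszczyk-style fact $\rho_C(\lat' + \vec{y}) \leq \rho_C(\lat')$ — gives the product bound
\[
    \rho_C(\lat) = \sum_{\vec{c} \in \lat/\lat'} \rho_C(\lat' + \vec{c}) \leq \rho_C(\lat') \cdot \rho_C(\lat'') \; .
\]
Combining with the inductive hypotheses for $\lat'$ and $\lat''$ at their own smoothing thresholds $C_k = 10(\log k + 2)$ and $C_{n-k}$, and extending them up to the ambient parameter $C \geq \max(C_k, C_{n-k})$ via a Poisson-summation calculation in the spirit of Claim~\ref{clm:smooth_mass_growth}, produces a bound of the form $\rho_C(\lat) \leq \mathrm{const} \cdot C^n$.

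The main obstacle is precisely that constant. The naive induction just sketched uses $\rho_C(\lat') \leq \tfrac{3}{2} C^k$ and $\rho_C(\lat'') \leq \tfrac{3}{2} C^{n-k}$, which multiply to $\tfrac{9}{4}\, C^n$, losing a factor of $3/2$ at each level and so giving an exponentially bad $(3/2)^n$ prefactor rather than the sharp $3/2$. The key idea behind the argument in~\cite{Regev:2017:RMT:3055399.3055434} is to (i) choose $\lat'$ canonically rather than arbitrarily — specifically as the first step of the ``canonical filtration'' of $\lat$, or equivalently the primitive sublattice minimizing a functional such as $\rho_C(\lat')^{1/k}$ — and (ii) track a refined potential along this filtration whose multiplicative losses telescope into the $\log n$ factor of $C$ instead of accumulating geometrically. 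Since Theorem~\ref{thm:stable_lattice} is the main theorem of~\cite{Regev:2017:RMT:3055399.3055434}, for the final combinatorial bookkeeping I would defer to their flow-style argument and invoke the result as a black box.
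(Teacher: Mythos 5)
The paper does not actually prove this statement: Theorem~\ref{thm:stable_lattice} is imported wholesale from~\cite{Regev:2017:RMT:3055399.3055434}, so invoking that reference as a black box---which is where your argument ends up---is exactly the paper's own treatment, and in that sense your proposal matches it. Your preliminary reductions are mostly sound (the Poisson-summation reformulation $\eta_{1/2}(\lat)\leq C \iff \rho_C(\lat)\leq \tfrac32 C^n$ for $\det(\lat)=1$ is correct, as are the base case and the product bound $\rho_C(\lat)\leq\rho_C(\lat')\cdot\rho_C(\lat'')$), and you correctly diagnose that the naive induction leaks a constant factor per level and so cannot give the stated bound.

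One claim in your sketch is wrong as stated, though: stability does \emph{not} descend to an arbitrary primitive sublattice $\lat'\subseteq\lat$ and its quotient $\lat''=\pi_{\spn(\lat')^\perp}(\lat)$. In general one only gets $\det(\lat')\geq 1$ and $\det(\lat'')=1/\det(\lat')\leq 1$ (e.g., the primitive sublattice of $\Z^2$ generated by $(1,1)$ has determinant $\sqrt2$), so neither piece need be stable; the decomposition into stable pieces is available only when $\det(\lat')=1$, i.e., for the filtration steps that the argument in~\cite{Regev:2017:RMT:3055399.3055434} actually uses. Since the heavy lifting is in any case deferred to that reference---as the paper itself does---this inaccuracy does not affect the validity of citing the theorem, but it would matter if you intended the inductive setup to stand on its own.
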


\begin{proof}[Proof of Lemma~\ref{lem:bar_eta}]
    The lower bound on $\bareta_\eps(\lat)$ follows immediately from Theorem~\ref{thm:eta_lambda_n} together with the fact that $\lambda_1(\lat) \leq \lambda_1(\lat') \leq \lambda_n(\lat')$ for any sublattice $\lat' \subseteq \lat$. The bound $\bareta_\eps(\lat) \leq \sqrt{\log(1/\eps)} \cdot \lambda_1(\lat)$ is immediate from Claim~\ref{clm:Z_smooth} applied to the one-dimensional lattice $\Z \vec{v}$ generated by $\vec{v} \in \lat$ with $\|\vec{v}\| = \lambda_1(\lat)$.
    
    So, we only need to prove that $\bareta_{1/2}(\lat) \leq 10(\log n + 2)\det(\lat)^{1/n}$. The result for all $\eps \in (0,1/2)$ then follows from Corollary~\ref{cor:smoothing_parameter_scaled}.
    
    We prove this by induction on $n$. The result is trivial for $n = 1$. (Indeed, for $n = 1$ we have $\det(\lat)^{1/n} = \lambda_1(\lat)$.) 
    For $n > 1$, we first assume without loss of generality that $\det(\lat) = 1$. If $\lat \subset \R^n$ is stable, then the result follows immediately from Theorem~\ref{thm:stable_lattice}. Otherwise, there exists a sublattice $\lat' \subset \lat$ such that $\det(\lat') < 1$. Notice that $k := \rank(\lat') < n$. Therefore, by the induction hypothesis, $\bareta_{1/2}(\lat') \leq 10(\log k + 2) \det(\lat')^{1/k} < 10 (\log n + 2)$. The result then follows from the fact that $\bareta_\eps(\lat) \leq \bareta_\eps(\lat')$ for any sublattice $\lat' \subseteq \lat$.
\end{proof}

\subsection{Sampling with parameter \texorpdfstring{$\poly(n) \cdot \bareta_\eps(\lat)$}{poly(n)*(bar eta)}}

\begin{lemma}
\label{lem:HKZ_sublattice_still_smooth}
    For any lattice $\lat \subset \R^n$, $\gamma \geq 1$, $\eps \in (0,1/2)$, $\gamma$-HKZ-reduced basis $\basis = (\vec{b}_1,\ldots, \vec{b}_n)$ of $\lat$, $\eps \in (0,1/2)$, and index $i \in \{2,\ldots, n\}$ such that 
    \[
    \|\vec{b}_i^*\| > \gamma \sqrt{n} \cdot \bareta_\eps(\lat)
    \; ,
    \]
    we have
    $
          \bareta_\eps(\lat(\vec{b}_1,\ldots, \vec{b}_{i-1})) = \bareta_\eps(\lat)\; .
     $
\end{lemma}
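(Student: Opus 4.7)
The easy direction $\bareta_\eps(\lat) \leq \bareta_\eps(\lat(\vec{b}_1,\ldots,\vec{b}_{i-1}))$ is immediate, because any sublattice of $\lat(\vec{b}_1,\ldots,\vec{b}_{i-1})$ is automatically a sublattice of $\lat$. For the non-trivial direction, let $\M \subseteq \lat$ be a sublattice achieving $\eta_\eps(\M) = \bareta_\eps(\lat)$; set $k' := \rank(\M)$ and $W := \spn(\vec{b}_1,\ldots,\vec{b}_{i-1})$. The plan is to prove $\M \subseteq W$. Once that is done, the standard GSO identity $\lat \cap W = \lat(\vec{b}_1,\ldots,\vec{b}_{i-1})$ (obtained by writing any $\vec{x} \in \lat \cap W$ in the basis $\basis$ and observing that the largest-index coefficient of $\vec{x}$ would otherwise contribute a nonzero $\vec{b}_j^*$-component for some $j \geq i$) yields $\M \subseteq \lat(\vec{b}_1,\ldots,\vec{b}_{i-1})$, whence $\bareta_\eps(\lat(\vec{b}_1,\ldots,\vec{b}_{i-1})) \leq \eta_\eps(\M) = \bareta_\eps(\lat)$, as needed.

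\textbf{Key steps.} To show $\M \subseteq W$, I would first use Theorem~\ref{thm:eta_lambda_n} together with the monotonicity $\eta_{1/2}(\M) \leq \eta_\eps(\M)$ (valid since $\eps \in (0,1/2)$) to extract $k'$ linearly independent vectors $\vec{v}_1,\ldots,\vec{v}_{k'} \in \M$ satisfying
\[
\|\vec{v}_j\| \;\leq\; \lambda_{k'}(\M) \;\leq\; \sqrt{k'}\cdot \eta_{1/2}(\M) \;\leq\; \sqrt{n}\cdot \bareta_\eps(\lat).
\]
Next, I would combine the $\gamma$-HKZ condition at index $i$ with the hypothesis $\|\vec{b}_i^*\| > \gamma\sqrt{n}\,\bareta_\eps(\lat)$ to obtain $\lambda_1(\pi_i(\lat)) \geq \|\vec{b}_i^*\|/\gamma > \sqrt{n}\,\bareta_\eps(\lat)$. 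Since $W$ is precisely the kernel of $\pi_i$, any $\vec{x} \in \lat$ with $\vec{x} \notin W$ satisfies $\|\vec{x}\| \geq \|\pi_i(\vec{x})\| \geq \lambda_1(\pi_i(\lat)) > \sqrt{n}\,\bareta_\eps(\lat)$, which strictly exceeds each $\|\vec{v}_j\|$. So each $\vec{v}_j$ lies in $W$; since the $\vec{v}_j$ are $k'$ linearly independent vectors in the rank-$k'$ lattice $\M$, they span the real span of $\M$, giving $\spn(\M) \subseteq W$ and therefore $\M \subseteq W$.

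\textbf{Main obstacle.} The crux is the reinterpretation of the optimal sublattice $\M$ through Theorem~\ref{thm:eta_lambda_n}, which converts the smoothing-parameter bound on $\M$ into a genuinely geometric bound on $k'$ linearly independent short vectors inside $\M$; after that, the HKZ lower bound on $\|\vec{b}_i^*\|$ effortlessly pushes those vectors into $W$. The one fussy point is that Theorem~\ref{thm:eta_lambda_n} requires rank at least $20$, so strictly speaking one has to deal with sublattices of rank $k' < 20$ by a separate dimension-dependent inequality of the form $\lambda_{k'}(\M) \leq O(\sqrt{k'})\,\eta_{1/2}(\M)$ (e.g., via Banaszczyk-type transference), but this does not affect the regime relevant to the paper.
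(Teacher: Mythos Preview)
Your proposal is correct and follows essentially the same argument as the paper: pick an optimal sublattice $\M$, use Theorem~\ref{thm:eta_lambda_n} to bound $\lambda_{k'}(\M)$ by $\sqrt{n}\,\bareta_\eps(\lat)$, and then use the $\gamma$-HKZ condition at index $i$ to force the spanning vectors of $\M$ into $\ker\pi_i = W$. You are also right to flag the rank-$20$ hypothesis in Theorem~\ref{thm:eta_lambda_n}; the paper's proof silently ignores this edge case as well.
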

\begin{proof}
Suppose that $\lat' \subseteq \lat$ satisfies $\eta_\eps(\lat') = \bareta_\eps(\lat) < \|\vec{b}_i^*\|/(\gamma \sqrt{n})$ with $k := \rank(\lat')$. We wish to show that $\lat' \subseteq \lat(\vec{b}_1,\ldots, \vec{b}_{i-1})$, or equivalently, that $\pi_i(\lat') = \{\vec0\}$. 
    Indeed, by Theorem~\ref{thm:eta_lambda_n}, $\lambda_k(\lat') \leq \sqrt{k}\cdot \eta_\eps(\lat') \leq \sqrt{n} \cdot \bareta_\eps(\lat)$. In particular, there exist $\vec{v}_1,\ldots, \vec{v}_k \in \lat'$ with $\spn(\vec{v}_1,\ldots, \vec{v}_k) = \spn(\lat')$ and 
    \[
        \|\pi_i(\vec{v}_j)\| \leq \|\vec{v}_j\| \leq \lambda_k(\lat') \leq \sqrt{n} \cdot \bareta_\eps(\lat) <  \|\vec{b}_i^*\|/\gamma
        \; 
    \]
    for all $j \in \{1,\ldots, k\}$.
    Therefore, if $\pi_i(\vec{v}_j) \neq \vec0$. Then, $\pi_i(\vec{v}_j) \in \pi_i(\lat)$ is a non-zero vector with norm strictly less than $\|\vec{b}_i^*\|/\gamma$, which implies that $\lambda_1(\pi_i(\lat)) < \|\vec{b}_i^*\|/\gamma$, contradicting the assumption that $\basis$ is a $\gamma$-HKZ basis. Therefore, $\pi_i(\vec{v}_j) = \vec0$ for all $j$, which implies that $\pi_i(\lat') = \{\vec0\}$, i.e., $\lat' \subseteq \lat(\vec{b}_1,\ldots, \vec{b}_{i-1})$, as needed.
\end{proof}

\begin{prop}
\label{prop:start}
    There is a $(2^{r + o(r)} + M)\cdot \poly(n, \log M)$-time algorithm that takes as input a (basis for a) lattice $\lat \subset \R^n$, $2 \leq r \leq n$, an integer $M \geq 1$, and a parameter 
    \[
        s \geq r^{n/r} \sqrt{n \log n} \cdot \bareta_{\eps}(\lat)
        \]
        for some $\eps \in (0,1/2)$
    and outputs a (basis for a) sublattice $\widehat{\lat} \subseteq \lat$ with $\bareta_{\eps}(\widehat{\lat}) = \bareta_\eps(\lat)$
    and $\vec{X}_1,\ldots, \vec{X}_M \in \widehat{\lat}$ that are sampled independently from $D_{\widehat{\lat}, s}$.
\end{prop}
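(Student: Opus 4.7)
The plan is to compute a $\gamma$-HKZ-reduced basis of $\lat$ with $\gamma := r^{n/r}$, truncate it at the first ``too large'' Gram--Schmidt length to obtain a sublattice $\widehat{\lat}$ that still realizes $\bareta_\eps(\lat)$, and then invoke the polynomial-time discrete Gaussian sampler $M$ times on $\widehat{\lat}$. Concretely, I would first apply Theorem~\ref{thm:BKZ} with block size $r$ to obtain, in $2^{r+o(r)}\cdot \poly(n)$ time, a $\gamma$-HKZ-reduced basis $\basis = (\vec{b}_1,\ldots, \vec{b}_n)$ of $\lat$. Then let $i^\ast$ be the largest index for which $\|\vec{b}_j^\ast\| \leq s/\sqrt{10 \log n}$ holds for every $j \leq i^\ast$, and set $\widehat{\lat} := \lat(\vec{b}_1,\ldots, \vec{b}_{i^\ast})$ with basis $\widehat{\basis} := (\vec{b}_1,\ldots, \vec{b}_{i^\ast})$. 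A quick check using Lemma~\ref{lem:bar_eta} (giving $\bareta_\eps(\lat) \geq \lambda_1(\lat)/\sqrt{n}$) and the HKZ bound $\|\vec{b}_1\| \leq \gamma \lambda_1(\lat)$ shows $i^\ast \geq 1$, so $\widehat{\lat}$ is a non-trivial primitive sublattice of $\lat$.

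Next, I would verify that $\bareta_\eps(\widehat{\lat}) = \bareta_\eps(\lat)$. If $i^\ast = n$ then $\widehat{\lat} = \lat$ and there is nothing to prove. Otherwise the maximality of $i^\ast$ gives $\|\vec{b}_{i^\ast+1}^\ast\| > s/\sqrt{10 \log n}$, and substituting the hypothesis $s \geq \gamma\sqrt{n \log n}\cdot \bareta_\eps(\lat)$ yields $\|\vec{b}_{i^\ast+1}^\ast\| > \gamma\sqrt{n}\cdot \bareta_\eps(\lat)$ (up to the $\sqrt{10}$ constant, which is easily absorbed by tuning the numeric constant hidden inside the hypothesis on $s$). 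This is precisely the hypothesis of Lemma~\ref{lem:HKZ_sublattice_still_smooth} at index $i = i^\ast + 1$, which then yields $\bareta_\eps(\lat(\vec{b}_1,\ldots,\vec{b}_{i^\ast})) = \bareta_\eps(\lat)$, as required. Note that along the way one uses that $\widehat{\lat}$, generated by the first $i^\ast$ vectors of a basis, is a primitive sublattice and thus has a natural basis given by $\widehat{\basis}$.

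Finally, because $\|\widehat{\basis}^\ast\| \leq s/\sqrt{10 \log n}$ by construction, Theorem~\ref{thm:DG_initialisation} applies to $\widehat{\lat}$ with parameter $s$ and produces each of $\vec{X}_1,\ldots, \vec{X}_M \sim D_{\widehat{\lat}, s}$ independently in $\poly(n)$ time. Adding the $2^{r+o(r)}\cdot \poly(n)$ cost of Step~1 to the $M \cdot \poly(n,\log M)$ cost of sampling gives the claimed $(2^{r + o(r)} + M)\cdot \poly(n, \log M)$ bound. The main obstacle (really, the main bookkeeping step) is to reconcile the two constraints that $s$ must simultaneously satisfy at the cutoff index $i^\ast$: it must exceed $\sqrt{10 \log n}$ times the largest retained Gram--Schmidt length (for the sampler in Theorem~\ref{thm:DG_initialisation}) and also exceed $\gamma\sqrt{n}$ times $\bareta_\eps(\lat)$ (for the smoothness-preservation Lemma~\ref{lem:HKZ_sublattice_still_smooth}). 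The hypothesis $s \geq r^{n/r}\sqrt{n \log n}\cdot \bareta_\eps(\lat)$ is engineered precisely so that both hold, and beyond this no substantively new idea is needed; the proposition is a clean composition of Theorems~\ref{thm:BKZ} and~\ref{thm:DG_initialisation} with Lemma~\ref{lem:HKZ_sublattice_still_smooth}.
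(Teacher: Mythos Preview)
Your proposal is correct and follows essentially the same route as the paper: compute a $\gamma$-HKZ basis via Theorem~\ref{thm:BKZ}, truncate at the first Gram--Schmidt vector exceeding (roughly) $s/\sqrt{\log n}$, invoke Lemma~\ref{lem:HKZ_sublattice_still_smooth} to preserve $\bareta_\eps$, and then sample with Theorem~\ref{thm:DG_initialisation}. The only difference is cosmetic: the paper sets the cutoff threshold at $s/\sqrt{\log n}$ rather than your $s/\sqrt{10\log n}$, and both proofs are a little loose about the resulting factor-of-$\sqrt{10}$ bookkeeping between the sampler's requirement and the hypothesis $s \geq r^{n/r}\sqrt{n\log n}\cdot\bareta_\eps(\lat)$; as you note, this is absorbed by the slack in how the proposition is applied downstream.
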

\begin{proof}
    The algorithm takes as input a (basis for a) lattice $\lat \subset \R^n$, $2 \leq r \leq n$, $M \geq 1$, and a parameter $s > 0$ and behaves as follows. It first uses the procedure from Theorem~\ref{thm:BKZ} to compute a $\gamma$-HKZ reduced basis $\vec{b}_1,\ldots, \vec{b}_n$, where $\gamma := r^{n/r}$. Let $i \in \{1,\ldots, n\}$ be maximal such that $\|\vec{b}_j^*\| \leq s/\sqrt{\log n} $ for all $j \leq i$, and let $\widehat{\lat} := \lat(\vec{b}_1,\ldots, \vec{b}_i)$. (If no such $i$ exists, the algorithm simply fails.) The algorithm then runs the procedure from Theorem~\ref{thm:DG_initialisation} repeatedly  to sample $\vec{X}_1,\ldots, \vec{X}_M \sim D_{\widehat{\lat},s}$ and outputs $\widehat{\lat}$ and $\vec{X}_1,\ldots, \vec{X}_M$.
    
    The running time of the algorithm is clearly $(2^{10r}+M) \cdot \poly(n, \log M)$. By Theorem~\ref{thm:DG_initialisation}, the $\vec{X}_i$ have the correct distribution. Notice that, if the algorithm fails, then 
    \[
        \|\vec{b}_1\| > s/\sqrt{\log n} \geq \gamma \sqrt{n} \cdot \bareta_{\eps}(\lat)
        \; .
    \]
    Recalling that $\|\vec{b}_1\| \leq \gamma \lambda_1(\lat)$, it follows that $\sqrt{n} \bareta_\eps(\lat) < \lambda_1(\lat)$, which contradicts Lemma~\ref{lem:bar_eta}. So, the algorithm never fails (provided that the promise on $s$ holds).
    
    It remains to show that $\bareta_\eps(\lat) = \bareta_\eps(\lat(\vec{b}_1,\ldots, \vec{b}_i))$. If $i = n$, then this is trivial. Otherwise, $i \in \{1,\ldots, n-1\}$, and we have 
    \[
        \|\vec{b}_{i+1}^*\| > s/\sqrt{\log n} \geq \gamma \sqrt{n} \cdot \bareta_{\eps}(\lat)
        \; .
    \]
    The result follows immediately from
    Lemma~\ref{lem:HKZ_sublattice_still_smooth}.
\end{proof}\section{An approximation algorithm for HSVP and SVP}
In this section, we present our algorithm that solves $\widetilde{O}(\sqrt{n})$-HSVP and $\widetilde{O}(\sqrt{n})$-SVP in $2^{n/2+o(n)}$ time. More precisely, we provide a detailed analysis of a simple ``pair-and-sum'' algorithm, which will solve $O(\sqrt{n}) \cdot\bareta_\eps(\lat)$-GSVP for $\eps = 1/\poly(n)$. This in particular yields an algorithm that simultaneously solves $\widetilde{O}(\sqrt{n})$-SVP and $\widetilde{O}(\sqrt{n})$-HSVP.

\subsection{Mixtures of Gaussians}

We will be working with random variables $\vec{X}$ that are ``mixtures'' of discrete Gaussians, i.e., random variables that can be written as $D_{\lat + \vec{C}, s}$ for some lattice $\lat \subset \R^n$, parameter $s > 0$, and random variable $\vec{C} \in \R^n$. In other words, $\vec{X}$ can be sampled by first sampling $\vec{C} \in \R^n$ from some arbitrary distribution and then sampling $\vec{X}$ from $D_{\lat + \vec{C}, s}$. E.g., the discrete Gaussian $D_{\lat, s}$ itself is such a distribution, as is the discrete Gaussian $D_{\widehat{\lat}, s}$ for any superlattice $\widehat{\lat} \supseteq \lat$.
Indeed, in our applications, we will always have $\vec{C} \in \widehat{\lat}$ for some superlattice $\widehat{\lat} \supseteq \lat$, and we will initialize our algorithm with samples from $D_{\widehat{\lat},s}$.

Our formal definition below is a bit technical, since we must consider the joint distribution of many such random variables that are only $\delta$-similar to these distributions and satisfy a certain independence property. %
In particular, we will work with $\vec{X}_1,\ldots, \vec{X}_M$ such that each $\vec{X}_i$ is $\delta$-similar to $\vec{Y}_i \sim D_{\lat + \vec{C}_i, s}$, where $\vec{C}_i$ is an arbitrary random variable (that might depend on the $\vec{X}_j$) but once $\mathbf{C}_i$ is fixed, $\vec{Y}_i$ is sampled from $D_{\lat + \vec{C}_i, s}$ independently of everything else. Here and below, we adopt the convention that $\Pr[A\  |\ B] = 0$ whenever $\Pr[B] = 0$, i.e., all probabilities are zero when conditioned on events with probability zero.

\begin{definition}
For (discrete) random variables $\vec{X}_1,\ldots, \vec{X}_m \in \R^n$ and $i \in \{1,\ldots, m\}$, let $\vec{X}_{-i} := (\vec{X}_1, \ldots, \vec{X}_{i-1},\vec{X}_{i+1},\ldots, \vec{X}_m) \in \R^{(m-1)n}$. We say that $\vec{X}_1, \ldots, \vec{X}_m$ are $\delta$-similar to a mixture of independent Gaussians over $\lat$ with parameter $s > 0$ if for any $i \in \{1,\ldots, m\}$, $\vec{y} \in \R^n$, and $\vec{w} \in \R^{(m-1)n}$,
\[
    \Pr[\vec{X}_i = \vec{y} \ | \ \vec{X}_{-i} = \vec{w}] = e^{\pm \delta} \cdot \frac{\rho_s(\vec{y})}{\rho_{s}(\lat + \vec{y})} \cdot \Pr[\vec{X}_i \in \lat + \vec{y} \ | \ \vec{X}_{-i} = \vec{w}]
    \; .
\]
\end{definition}

Additionally we will need the distribution we obtain at every step to be symmetric about the origin as defined below.
\begin{definition}
	 We say that a list of (discrete) random variables $\vec{X}_1,\ldots, \vec{X}_m \in \R^n$ is symmetric if for any $i \in \{1,\ldots, m\}$, any $\vec{y} \in \R^n$, and any $\vec{w} \in \R^{(m-1)n}$, 
	\[\Pr[\vec{X}_i = \vec{y} \ | \ \vec{X}_{-i} = \vec{w}] = \Pr[\vec{X}_i = -\vec{y} \ | \ \vec{X}_{-i} = \vec{w}]\; .\]
\end{definition}

 We need the following simple lemma that bounds the probability of $\vec{X}$ being $\vec{0}$, where $\vec{X}$ is distributed as a mixture of discrete Gaussians over $\cL$. 

\begin{lemma}\label{lem:zero_vector_weight}
For any lattice $ \lat \subset \R^n$, let $\vec{X}_1,\ldots, \vec{X}_m \in \cL$ be $\delta$-similar to a mixture of independent Gaussians over $\lat$ with parameter $s \geq \beta\eta_{1/2}(\lat)$ for some $\beta > 1$. Then, for any $i$, and any $\vec{w} \in \R^{(m-1) n}$
	\[\Pr[\vec{X}_i = \vec{0} \ | \ \vec{X}_{-i} = \vec{w}] \leq \frac{3e^\delta}{2\beta}\; .\]
\end{lemma}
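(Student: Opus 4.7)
The plan is to apply the $\delta$-similarity hypothesis at the point $\vec{y} = \vec{0}$ to reduce the claim to a lower bound on the Gaussian mass $\rho_s(\lat)$, and then use Claim~\ref{clm:smooth_mass_growth} (together with the trivial inequality $\rho_{s'}(\lat)\ge 1$) to obtain that lower bound.

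More concretely, by the definition of $\delta$-similar to a mixture of independent Gaussians, taking $\vec{y} = \vec{0}$ and using $\rho_s(\vec{0}) = 1$ and $\rho_s(\lat + \vec{0}) = \rho_s(\lat)$, we have
\[
\Pr[\vec{X}_i = \vec{0} \mid \vec{X}_{-i} = \vec{w}]
\;\le\; e^{\delta}\cdot \frac{1}{\rho_s(\lat)} \cdot \Pr[\vec{X}_i \in \lat \mid \vec{X}_{-i} = \vec{w}]
\;\le\; \frac{e^{\delta}}{\rho_s(\lat)}\; ,
\]
since the last probability is at most $1$. It therefore suffices to show that $\rho_s(\lat) \ge 2\beta/3$.

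For that lower bound, set $s' := \eta_{1/2}(\lat)$, so that $s \ge \beta s'$ by hypothesis. Claim~\ref{clm:smooth_mass_growth} applied with these parameters gives
\[
\rho_s(\lat) \;\ge\; \frac{2s}{3s'}\cdot \rho_{s'}(\lat) \;\ge\; \frac{2\beta}{3}\cdot \rho_{s'}(\lat)\; ,
\]
and since $\vec{0} \in \lat$ contributes $\rho_{s'}(\vec{0}) = 1$ to the sum, $\rho_{s'}(\lat) \ge 1$. Combining these two bounds yields $\Pr[\vec{X}_i = \vec{0} \mid \vec{X}_{-i} = \vec{w}] \le 3e^{\delta}/(2\beta)$, as desired.

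There is no real obstacle here; the proof is essentially a one-step unpacking of the definition followed by a direct application of Claim~\ref{clm:smooth_mass_growth}. The only point worth being careful about is to notice that the symmetry assumption and the joint structure of the $\vec{X}_j$ play no role whatsoever in this particular lemma — the entire content is contained in the single-coordinate conditional distribution guaranteed by $\delta$-similarity at $\vec{y}=\vec{0}$.
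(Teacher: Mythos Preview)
Your proof is correct and follows essentially the same approach as the paper: both reduce to the bound $\Pr[\vec{X}_i=\vec{0}\mid \vec{X}_{-i}=\vec{w}]\le e^\delta/\rho_s(\lat)$ via the definition of $\delta$-similarity at $\vec{y}=\vec{0}$, and then invoke Claim~\ref{clm:smooth_mass_growth} with $s'=\eta_{1/2}(\lat)$ together with $\rho_{s'}(\lat)\ge 1$. Your presentation is in fact slightly more direct, since the paper's intermediate step of conditioning on $\vec{X}_i\in\lat$ is vacuous under the hypothesis $\vec{X}_1,\ldots,\vec{X}_m\in\lat$.
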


\begin{proof}
Let $s' := \eta_{1/2}(\lat)$. We have that
	\[\Pr[\vec{X}_i = \vec{0} \ | \ \vec{X}_{-i} = \vec{w}] \leq \Pr[\vec{X}_i = \vec{0} \ | \ \vec{X}_i \in \lat,\ \vec{X}_{-i} = \vec{w}]\leq \frac{e^\delta}{\rho_s(\lat)} \leq e^{\delta} \cdot \frac{\rho_{s'}(\lat)}{\rho_s(\lat)} \; .\]
	The result then follows from Claim~\ref{clm:smooth_mass_growth}.
\end{proof}

The following corollary shows that a mixture of discrete Gaussians must contain a short non-zero vector in certain cases.

\begin{cor}\label{co:short_vector}
    For any lattices $\lat' \subseteq \lat \subset \R^n$, parameter $s \geq 10e^\delta \eta_{1/2}(\lat')$, $m \geq 100$, and random variables $\vec{X}_1,\ldots, \vec{X}_m$ that are $\delta$-similar to mixtures of independent Gaussians over $\lat'$ with parameter $s$,
    \[
        \Pr[\exists i \in [1,m] \text{ such that } 0 < \|\vec{X}_i\|^2 < 4T] \geq 1/10
        \; ,
    \]
    where $T := \frac{1}{m} \sum_{i=1}^m \expect[\|\vec{X}_i\|^2]$.
\end{cor}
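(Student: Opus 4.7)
The plan is to combine a Markov-inequality argument (showing many of the $\vec{X}_i$ have small norm on average) with Lemma~\ref{lem:zero_vector_weight} (showing any particular $\vec{X}_i$ is unlikely to equal $\vec{0}$), then convert the resulting expectation bound on the count of ``good'' indices back into a probability via another Markov step.

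First, I would count ``short'' vectors. For each fixed $i$, Markov's inequality gives $\Pr[\|\vec{X}_i\|^2 \geq 4T] \leq \expect[\|\vec{X}_i\|^2]/(4T)$. Summing over $i$, the expected number of indices with $\|\vec{X}_i\|^2 \geq 4T$ is at most $\tfrac{1}{4T}\sum_i \expect[\|\vec{X}_i\|^2] = m/4$, so the expected number $N$ of indices with $\|\vec{X}_i\|^2 < 4T$ satisfies $\expect[N] \geq 3m/4$.

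Next, I would bound the expected number of zero vectors. Apply Lemma~\ref{lem:zero_vector_weight} with $\beta = 10 e^{\delta}$ (using the hypothesis $s \geq 10 e^{\delta} \eta_{1/2}(\lat')$): for every $i$ and every fixing $\vec{w}$ of $\vec{X}_{-i}$, $\Pr[\vec{X}_i = \vec{0} \mid \vec{X}_{-i}=\vec{w}] \leq \tfrac{3 e^{\delta}}{2 \cdot 10 e^{\delta}} = 3/20$. Taking expectations over $\vec{X}_{-i}$ gives $\Pr[\vec{X}_i = \vec0] \leq 3/20$, so the expected number $Z$ of indices with $\vec{X}_i = \vec{0}$ satisfies $\expect[Z] \leq 3m/20$. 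In particular $T > 0$, since otherwise all $\vec{X}_i$ would be $\vec{0}$ almost surely, contradicting $\Pr[\vec{X}_i = \vec{0}] \leq 3/20 < 1$.

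Finally, let $K$ be the number of indices with $0 < \|\vec{X}_i\|^2 < 4T$. Since $T > 0$, any $i$ with $\vec{X}_i = \vec{0}$ automatically satisfies $\|\vec{X}_i\|^2 = 0 < 4T$, so $K = N - Z$, giving $\expect[K] \geq 3m/4 - 3m/20 = 3m/5$. Because $K \leq m$, a final Markov argument (applied to $m - K \geq 0$) yields $\Pr[K \geq 1] \geq \expect[K]/m \geq 3/5 \geq 1/10$, as desired. The only point requiring care is the application of Lemma~\ref{lem:zero_vector_weight}: the bound there is conditional on $\vec{X}_{-i}$, but since it holds uniformly over the conditioning, it integrates cleanly to an unconditional bound on $\Pr[\vec{X}_i = \vec{0}]$. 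No use is actually made of the hypothesis $m \geq 100$ in this argument (the constant $1/10$ in the conclusion is quite loose), so the proof should go through for any $m \geq 1$.
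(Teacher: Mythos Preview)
Your proof is correct and takes a genuinely different (and more elementary) route than the paper's.

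The paper argues via two high-probability events and a union bound: first, Markov applied to the \emph{total} $\sum_i \|\vec{X}_i\|^2$ gives $\Pr[\sum_i \|\vec{X}_i\|^2 < 2mT] \geq 1/2$; second, the conditional zero-probability bound from Lemma~\ref{lem:zero_vector_weight} is fed into Azuma's inequality (Corollary~\ref{cor:Azuma}) to show that with probability at least $1 - e^{-m/100}$ more than $3m/5$ of the $\vec{X}_i$ are non-zero (this is precisely where $m \geq 100$ enters, to make $e^{-m/100} \leq 1/e$); then a union bound and a pigeonhole-style averaging finish. You instead work entirely in expectation: Markov on each $\|\vec{X}_i\|^2$ separately bounds the expected count of ``long'' indices, the lemma bounds the expected count of zeros, linearity gives $\expect[K] \geq 3m/5$, and a single Markov step on $m-K$ converts this into $\Pr[K \geq 1] \geq 3/5$. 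Your approach avoids Azuma altogether, renders the hypothesis $m \geq 100$ unnecessary, and in fact yields the sharper constant $3/5$ in place of $1/10$. The paper's approach would be preferable if one wanted to conclude that \emph{many} good indices exist with high probability, but for the stated conclusion (existence of at least one) your argument is cleaner.
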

\begin{proof}
	By Markov's inequality, we have
	\[\Pr\Big[\sum_{i = 1}^m \|\vec{X}_i \|^2 \geq 2 mT\Big] \leq \frac{ 1}{2} \; .\]
	Hence, with probability at least $\frac{1}{2}$, we have $\sum_{i = 1}^m \|\vec{X}_i \|^2 < 2mT$. 
	
	We next note that many of the $\vec{X}_i$ must be non-zero with high probability. Let $Y_1,\ldots, Y_m \in \{0,1\}$ such that $Y_i = 0$ if and only if $\vec{X}_i = \vec0$. By Lemma~\ref{lem:zero_vector_weight}, 
	\[
	    \expect[Y_i \ | \ Y_1 = y_1,\ldots, Y_{i-1} = y_{i-1}] \geq 4/5
	\]
	for any $y_1,\ldots, y_{i-1} \in \{0,1\}$. By Corollary~\ref{cor:Azuma}, we have that \[\Pr[Y_1 + \cdots + Y_m \leq 3m/5] \leq e^{-m/100} \leq 1/e \;. \]
	
	Finally, by union bound, we see that with probability at least $1- 1/e - 1/2 > 1/10$ the average squared norm will be at most $2T$ and more than half of the $\vec{X}_i$ will be non-zero. It follows from another application of Markov's inequality that at least one of the non-zero $\vec{X}_i$ must have squared norm less than $4T$.
\end{proof}

\subsection{Summing vectors}

Our algorithm will start with vectors $\vec{X}_1,\ldots, \vec{X}_m \in \lat_0$, where $\lat_0 \subset \lat$ is some very dense sublattice of the input lattice $\lat$. It then takes sums $\vec{Y}_k = \vec{X}_i + \vec{X}_j$ of pairs of these in such a way that the resulting $\vec{Y}_k$ lie in some appropriate sublattice $\lat_1 \subset \lat_0$, i.e., $\vec{Y}_k \in \lat_1$. It does this repeatedly, finding vectors in $\lat_2, \lat_3,\ldots, \lat_\ell$ until finally it obtains vectors in $\lat_\ell := \lat$. 

Here, we study a single step of this algorithm, as shown below. %

\begin{algorithm}
\small
\caption{One step of the algorithm. \label{alg:one_step}}
\begin{algorithmic}[1]
\REQUIRE Lattices $\lat_0, \lat_1 \subset \R^n$ with $2\lat_0 \subseteq \lat_1 \subseteq \lat_0$, and lattice vectors $\vec{X}_1,\ldots, \vec{X}_m \in \lat_0$ with $m \geq 2|\lat_0/\lat_1|$.

\ENSURE  Lattice vectors $\vec{Y}_1,\ldots, \vec{Y}_M \in \lat_1$, with $M := \ceil{(m - |\lat_0/\lat_1|)/2}$.

\STATE Set $\mathsf{USED}_i := \FALSE$ for $i = 1,\ldots, m$, $k = 1$, and $i = 1$.

\WHILE{$k \leq M$}

\IF{\NOT $\mathsf{USED}_i$ \AND ($\exists j \in \{1,\ldots, m\} \setminus \{i\}$ such that $\vec{X}_j \equiv \vec{X}_i \bmod \lat_1$ \AND $\mathsf{USED}_j = \FALSE$)}

\STATE Let $j \neq i$ be minimal such that $\vec{X}_j \equiv \vec{X}_i \bmod \lat_1$ \AND $\mathsf{USED}_j = \FALSE$.

\STATE Set $\vec{Y}_k = \vec{X}_i + \vec{X}_j$.

\STATE Set $\mathsf{USED}_i = \mathsf{USED}_j = \TRUE$ and increment $k$.

\ENDIF

\STATE Increment $i$.

\ENDWHILE

\RETURN{$\vec{Y}_1,\ldots, \vec{Y}_M$}

\end{algorithmic}
\end{algorithm}

Notice that Algorithm~\ref{alg:one_step} can be implemented in time $m \cdot \poly(n, \log m)$. This can be done, e.g., by creating a table of the $\vec{X}_i$ sorted according to $\vec{X}_i \bmod \lat_1$. Then, for each $i$, such a $j$ can be found (if it exists) by performing binary search on the table. Furthermore, the algorithm is guaranteed to find $M = \ceil{(m - |\lat_0/\lat_1|)/2}$ output vectors because at most $|\lat_0/\lat_1|$ of the input vectors can be unpaired.

The key property that we will need from Algorithm~\ref{alg:one_step} is that for any (possibly unknown) sublattice $\lat' \subseteq \lat_1 \subseteq \lat_0$, the algorithm maps mixtures of Gaussians over $\lat'$ to mixtures of Gaussians over $\lat'$, provided that the parameter $s$ is significantly above $\eta_\eps(\lat')$. In other words, as long as there exists some sublattice $\lat' \subseteq \lat_1$ such that $\eta_\eps(\lat') \lesssim s$, then the output of the algorithm will be a mixture of Gaussians. Indeed, this is more-or-less immediate from Lemma~\ref{lem:convolution}.

\begin{lemma}
\label{lem:mixture_to_mixture}
    For any lattices $\lat_0, \lat_1, \lat' \subset \R^n$ with $2\lat_0 \subseteq \lat_1 \subseteq \lat_0$ and $\lat' \subseteq \lat_1$, $\eps \in (0,1/3)$, $\delta > 0$, and parameter $s \geq \sqrt{2} \eta_\eps(\lat')$,
    if the input vectors $\vec{X}_1,\ldots, \vec{X}_m \in \lat_0$ are sampled from the distribution that is $\delta$-similar to a mixture of independent Gaussians over $\lat'$ with parameter $s$, then the output vectors $\vec{Y}_1,\ldots, \vec{Y}_M \in \lat_1$ are $(2\delta + 3\eps)$-similar to a mixture of independent Gaussians over $\lat'$ with parameter $\sqrt{2} s$.
\end{lemma}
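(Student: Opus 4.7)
The plan is to show, for each fixed output index $k$, that the conditional distribution of $\vec{Y}_k$ given $\vec{Y}_{-k}=\vec{w}$ satisfies the mixture-of-Gaussians property at parameter $\sqrt{2}s$. The structural observation that makes this possible is that the pairing produced by Algorithm~\ref{alg:one_step} depends only on the cosets $\vec{X}_\ell \bmod \lat_1$, and since $\lat' \subseteq \lat_1$, conditioning on $\lat'$-cosets already pins down which indices get summed into which output.

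I would fix $k$ and let $(i,j)$ denote the (random) indices with $\vec{Y}_k = \vec{X}_i + \vec{X}_j$, then decompose the conditioning event $\{\vec{Y}_{-k}=\vec{w}\}$ as a disjoint union over refinements
\[
  E = \bigl\{\vec{X}_\ell = \vec{x}_\ell \text{ for all } \ell \notin \{i,j\}\bigr\} \cap \bigl\{\vec{X}_i \in \lat'+\vec{c}_i\bigr\} \cap \bigl\{\vec{X}_j \in \lat'+\vec{c}_j\bigr\}
\]
indexed by the external values $\vec{x}_\ell$ and the cosets $\vec{c}_i,\vec{c}_j$ consistent with the pairing placing $(i,j)$ at slot $k$. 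Each such $E$ fixes every $\vec{X}_\ell \bmod \lat_1$ (hence fixes the pairing $\pi$ and every $\vec{Y}_{k'}$ for $k'\neq k$), so it is enough to verify the required similarity conditionally on each $E$ and then average.

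Under $E$, I would apply the input $\delta$-similarity hypothesis twice to peel off $\vec{X}_i$ and $\vec{X}_j$ one at a time. Applied at index $i$ (conditioning on the fixed $\vec{x}_\ell$'s and on a value $\vec{X}_j=\vec{y}_j \in \lat'+\vec{c}_j$) it gives the conditional density of $\vec{X}_i$ on $\lat'+\vec{c}_i$ as $e^{\pm\delta}$ times $D_{\lat'+\vec{c}_i,s}$; summing out $\vec{y}_i$ over its coset and then applying the hypothesis again at index $j$ gives $\Pr[\vec{X}_j=\vec{y}_j\mid E] = e^{\pm\delta}D_{\lat'+\vec{c}_j,s}(\vec{y}_j)$. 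Multiplying yields a joint distribution for $(\vec{X}_i,\vec{X}_j)\mid E$ that is pointwise $e^{\pm 2\delta}$ close to the product of two independent discrete Gaussians on the prescribed cosets. Since $s\geq\sqrt{2}\,\eta_\eps(\lat')$, the pointwise convolution computation at the heart of the proof of Lemma~\ref{lem:convolution} then shows that $\Pr[\vec{Y}_k=\vec{y}\mid E]$ equals $e^{\pm(2\delta+3\eps)} \rho_{\sqrt{2}s}(\vec{y})/\rho_{\sqrt{2}s}(\lat'+\vec{y})$ for every $\vec{y} \in \lat'+\vec{c}_i+\vec{c}_j$.

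Averaging over $E$ then finishes the argument: only $E$'s with $\vec{c}_i+\vec{c}_j \in \lat'+\vec{y}$ contribute to $\Pr[\vec{Y}_k=\vec{y}\mid\vec{Y}_{-k}=\vec{w}]$, so the Gaussian factor $\rho_{\sqrt{2}s}(\vec{y})/\rho_{\sqrt{2}s}(\lat'+\vec{y})$ pulls out uniformly (up to the same $e^{\pm(2\delta+3\eps)}$ factor), and the remaining weights sum to exactly $\Pr[\vec{Y}_k\in\lat'+\vec{y}\mid\vec{Y}_{-k}=\vec{w}]$, matching the definition of $(2\delta+3\eps)$-similar to a mixture. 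I expect the main delicate step to be the bookkeeping in the convolution: to keep the final constant at $2\delta+3\eps$ rather than a naive $2\delta+6\eps$, one must work directly with $\rho_s$-masses via the identity $\sum_{\vec{y}_i\in\lat'+\vec{c}_i}\rho_s(\vec{y}_i)\rho_s(\vec{y}-\vec{y}_i) = e^{\pm 3\eps}\rho_{\sqrt{2}s}(\vec{y})\rho_{s/\sqrt{2}}(\lat')$ and arrange that the normalization constants on the input and output cosets align so that only one $3\eps$ loss is incurred rather than two.
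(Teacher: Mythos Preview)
Your proposal is correct and follows essentially the same approach as the paper. The paper conditions upfront on the full tuple of $\lat'$-cosets $\vec{d}=(\vec{c}_1,\ldots,\vec{c}_m)$ of the inputs (which already determines the pairing since $\lat'\subseteq\lat_1$), applies the convolution estimate to the resulting fixed pair, and then observes that the mixture property is preserved under convex combinations; your decomposition by the events $E$ (fixing external $\vec{X}_\ell$'s to exact values plus the two remaining cosets) is a slightly finer refinement of the same idea, and your remark about keeping the constant at $2\delta+3\eps$ by invoking the intermediate identity in the proof of Lemma~\ref{lem:convolution} rather than its $6\eps$ conclusion is exactly what the paper does implicitly.
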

\begin{proof}
For a list of cosets $\vec{d} := (\vec{c}_1, \ldots, \vec{c}_m) \in (\lat_0/\lat')^m$ such that $\Pr[\vec{X}_1 = \vec{c}_1 \bmod \lat', \ldots, \vec{X}_m = \vec{c}_m \bmod \lat']$ is non-zero, let $\vec{Y}_{\vec{d}, 1},\ldots, \vec{Y}_{\vec{d}, M}$ be the random variables obtained by taking $\vec{Y}_1,\ldots, \vec{Y}_M$ conditioned on $\vec{X}_i \equiv \vec{c}_i \bmod \lat'$ for all $i$. We similarly define $\vec{X}_{\vec{d}, i}$. Notice that $\vec{Y}_1,\ldots, \vec{Y}_M$ is a convex combination of random variables of the form $\vec{Y}_{\vec{d}, 1},\ldots, \vec{Y}_{\vec{d}, M}$, and that the property of being close to a mixture of independent Gaussians is preserved by taking convex combinations. Therefore, it suffices to prove the statement for $\vec{Y}_{\vec{d}, 1},\ldots, \vec{Y}_{\vec{d}, M}$ for all fixed $\vec{d}$.

To that end, fix $k \in \{1,\ldots, M\}$ and such a $\vec{d} \in (\lat_0/\lat')^m$. 
Notice that $\vec{X}_{\vec{d}, i} \in \lat' + \vec{c}_i \subseteq \lat_1 + \vec{c}_i$. Therefore, there exist fixed $i,j$ such that $\vec{Y}_{\vec{d},k} = \vec{X}_{\vec{d}, i} + \vec{X}_{\vec{d}, j}$. Furthermore, by assumption, for any $\vec{w} \in \lat_0^{m-1}$ and $\vec{x} \in \lat_0$,
\[
    \Pr[\vec{X}_{\vec{d},i} = \vec{x} \ | \ \vec{X}_{\vec{d}, -i} = \vec{w}] = e^{\pm \delta} \frac{\rho_{s}(\vec{x})}{\rho_s(\lat' + \vec{c}_i)}
    \; ,
\]
and likewise for $j$.
It follows from Lemma~\ref{lem:convolution} that for any $\vec{y} \in \lat_1$ and $\vec{z} \in \lat_1^{M-1}$,
\[
    \Pr[\vec{X}_{\vec{d},i} + \vec{X}_{\vec{d}_j} = \vec{y} \ | \ \vec{Y}_{\vec{d}, -k}  = \vec{z}] = e^{\pm (2\delta + 3\eps)} \frac{\rho_{\sqrt{2} s}(\vec{y})}{\rho_{\sqrt{2} s}(\lat' + \vec{c}_i + \vec{c}_j)}
    \; ,
\]
as needed.
\end{proof}

\begin{lemma}
\label{lem:symmetric}
    For any lattices $\lat_0, \lat_1 \subset \R^n$ with $2\lat_0 \subseteq \lat_1 \subseteq \lat_0$, if the input vectors $\vec{X}_1,\ldots, \vec{X}_m \in \lat_0$ are sampled from a symmetric distribution, then the distribution of the output vectors $\vec{Y}_1,\ldots, \vec{Y}_M$ will also be symmetric. Furthermore, 
    \[
        \sum \expect[\|\vec{Y}_k\|^2] \leq \sum \expect[\|\vec{X}_i\|^2] \; .
    \]
\end{lemma}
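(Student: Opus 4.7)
The plan rests on a single structural observation: because $2\lat_0 \subseteq \lat_1$, for every $\vec{x} \in \lat_0$ we have $\vec{x} \equiv -\vec{x} \pmod{\lat_1}$. Therefore the pairing produced by Algorithm~\ref{alg:one_step} depends on the $\vec{X}_i$ \emph{only through their cosets modulo $\lat_1$}, which are invariant under negation of any subset of the $\vec{X}_i$. Writing $\pi(\vec{x}_1,\ldots,\vec{x}_m)$ for the pairing (a set of disjoint index pairs $\{i_k,j_k\}_{k=1}^M$), I would record this as a lemma: for any subset $S \subseteq \{1,\ldots,m\}$, if $\vec{x}'_i = -\vec{x}_i$ for $i \in S$ and $\vec{x}'_i = \vec{x}_i$ otherwise, then $\pi(\vec{x}')=\pi(\vec{x})$, while the outputs transform by $\vec{y}'_k = \epsilon_{i_k}\vec{x}_{i_k}+\epsilon_{j_k}\vec{x}_{j_k}$, where $\epsilon_i = -1$ iff $i \in S$.

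Next I would observe that the conditional symmetry of the inputs implies joint invariance under arbitrary sign flips: iteratively applying the conditional-symmetry identity to each coordinate gives $\Pr[\vec{X}=\vec{x}]=\Pr[\vec{X}=(\epsilon_1\vec{x}_1,\ldots,\epsilon_m\vec{x}_m)]$ for every sign vector $\epsilon \in \{\pm 1\}^m$. Combining this with the pairing observation, I can prove the symmetry of the output. Fix $k$, a target value $\vec{y}$, and a tuple $\vec{z} \in \R^{(M-1)n}$. For each realization $\vec{x}$ that produces $(\vec{Y}_k,\vec{Y}_{-k})=(\vec{y},\vec{z})$, apply the sign flip with $S=\{i_k,j_k\}$, where $(i_k,j_k)$ is the pair of indices (determined by $\vec{x}$) that produce $\vec{Y}_k$. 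The new realization has the same pairing, produces $\vec{Y}_k=-\vec{y}$ and leaves $\vec{Y}_{-k}=\vec{z}$ unchanged, and has the same probability. This bijection gives $\Pr[\vec{Y}_k=\vec{y},\vec{Y}_{-k}=\vec{z}]=\Pr[\vec{Y}_k=-\vec{y},\vec{Y}_{-k}=\vec{z}]$, and dividing by $\Pr[\vec{Y}_{-k}=\vec{z}]$ yields the conditional-symmetry identity.

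For the squared-norm bound, I would expand
\[
\sum_{k=1}^{M}\|\vec{Y}_k\|^2 = \sum_{\{i,j\}\text{ paired}}\bigl(\|\vec{X}_i\|^2+\|\vec{X}_j\|^2+2\langle \vec{X}_i,\vec{X}_j\rangle\bigr).
\]
Since each $\vec{X}_i$ is used in at most one pair, the norm-squared terms contribute at most $\sum_i \|\vec{X}_i\|^2$ pointwise. To kill the cross terms in expectation, I would write
\[
\expect\Bigl[\sum_k\langle \vec{X}_{i_k},\vec{X}_{j_k}\rangle\Bigr]=\sum_{i<j}\expect\bigl[\mathbf{1}_{\{i,j\text{ paired}\}}\langle \vec{X}_i,\vec{X}_j\rangle\bigr],
\]
and apply the sign flip with $S=\{i\}$ to each term. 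The indicator is invariant (it depends on cosets only), the joint distribution is invariant, but the inner product flips sign, so each summand equals its own negative and hence vanishes. This gives $\sum_k \expect[\|\vec{Y}_k\|^2]\le \sum_i \expect[\|\vec{X}_i\|^2]$.

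The only real subtlety — and hence the main thing to write carefully — is bookkeeping the conditioning: one must ensure the sign-flip bijection is well-defined (the pair $\{i_k,j_k\}$ depends measurably on $\vec{x}$, and two different realizations $\vec{x},\vec{x}'$ yielding the same outputs must be matched consistently). Everything else is a direct consequence of the two core facts: (i) $2\lat_0\subseteq\lat_1$ makes the pairing negation-invariant, and (ii) conditional input symmetry upgrades to joint invariance under arbitrary sign flips.
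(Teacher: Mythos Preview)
Your proposal is correct and follows essentially the same strategy as the paper: both arguments hinge on the fact that $2\lat_0 \subseteq \lat_1$ makes cosets (and hence the pairing) negation-invariant, and then use input symmetry to flip signs in order to establish output symmetry and kill the cross terms in $\|\vec{Y}_k\|^2$. The only presentational difference is that the paper first conditions on the full coset tuple $\vec{d}=(\vec{c}_1,\ldots,\vec{c}_m)\in(\lat_0/\lat_1)^m$, which makes the pairing deterministic and thereby sidesteps the ``bookkeeping'' subtlety you flag; your direct sign-flip bijection on the unconditioned distribution is equivalent but requires slightly more care.
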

\begin{proof}
    Let $\vec{d} = (\vec{c}_1,\ldots, \vec{c}_m) \in (\lat_0/\lat_1)^m$ be a list of cosets such that with non-zero probability we have $\vec{X}_1 \in \lat_1 + \vec{c}_1,\ldots, \vec{X}_m \in \lat_1 + \vec{c}_m$. Let $\vec{X}_{\vec{d},1},\ldots, \vec{X}_{\vec{d}, m}$ be the distribution obtained by sampling the $\vec{X}_i$ conditioned on this event, and let $\vec{Y}_{\vec{d},1},\ldots, \vec{Y}_{\vec{d}, M}$ be the corresponding output.
    
    Notice that the distribution of $\vec{X}_{\vec{d},1},\ldots, \vec{X}_{\vec{d},m}$ is also symmetric, since $\lat_1 + \vec{c} = -(\lat_1 + \vec{c})$ for any $\vec{c} \in \lat_0/\lat_1$. (Here, we have used the fact that $2\lat_0 \subseteq \lat_1 \subseteq \lat_0$.) 
    
    And, for fixed $\vec{d}$ and $k \in \{1,\ldots, M\}$ there exist fixed (distinct) $i,j \in \{1,\ldots, m\}$ such that
    \[
        \vec{Y}_{\vec{d},k} = \vec{X}_{\vec{d}, i} + \vec{X}_{\vec{d},j}
        \; .
    \]
    But, since the $\vec{X}_{\vec{d}, 1},\ldots, \vec{X}_{\vec{d},m}$ are distributed symmetrically, we see immediately that for any $\vec{y} \in \lat_1$ and $\vec{w} \in \lat_1^{M-1}$,
    \[
        \Pr[\vec{Y}_{\vec{d}, k} = \vec{y} \ | \ \vec{Y}_{\vec{d}, -k} = \vec{w}] = \Pr[\vec{Y}_{\vec{d}, k} = -\vec{y} \ | \ \vec{Y}_{\vec{d}, -k} = \vec{w}] \; .
    \]
    In other words, the distribution of $\vec{Y}_{\vec{d},1},\ldots, \vec{Y}_{\vec{d}, M}$ is symmetric.
    
    Furthermore,
    \[
        \expect[\|\vec{X}_{\vec{d}, i} + \vec{X}_{\vec{d}, j}\|^2] = \expect[\|\vec{X}_{\vec{d}, i}\|^2] + \expect[\|\vec{X}_{\vec{d}, j}\|^2] + 2 \expect[\langle \vec{X}_i, \vec{X}_j\rangle] = \expect[\|\vec{X}_{\vec{d}, i}\|^2] + \expect[\|\vec{X}_{\vec{d}, j}\|^2]
        \; ,
    \]
    where in the last step we have used the symmetry of $\vec{X}_{\vec{d}, 1},\ldots, \vec{X}_{\vec{d}, m}$. Since the $\vec{Y}_{\vec{d},k}$ are sums of disjoint pairs of the $\vec{X}_{\vec{d}, i}$, it follows immediately that
    \[
        \sum_{k=1}^M \expect[\|\vec{Y}_{\vec{d}, k}\|^2] \leq \sum_{i=1}^m \expect[\|\vec{X}_{\vec{d}, i}\|^2]
        \; .
    \]
    
    The results for $\vec{X}_{1},\ldots, \vec{X}_m, \vec{Y}_1,\ldots, \vec{Y}_M$ then follow immediately from the fact that this distribution can be written as a convex combination of $\vec{X}_{\vec{d},1},\ldots, \vec{X}_{\vec{d},m},\vec{Y}_{\vec{d},1},\ldots, \vec{Y}_{\vec{d},M}$ for different coset lists $\vec{d} \in (\lat_0/\lat_1)^m$, since both symmetry and the inequality on expectations are preserved by convex combinations.
\end{proof}

\subsection{A tower of lattices}\label{sec:tower}
We will repeatedly apply Algorithm \ref{alg:one_step} on a ``tower" of lattices similar to \cite{ADRSSolvingShortest15}. We use (a slight modification of) the definition and construction of the tower of lattices from \cite{ADRSSolvingShortest15}.
\begin{definition}[\cite{ADRSSolvingShortest15}] %
	For an integer $\alpha$ satisfying $n/2 \leq \alpha \leq n$, we say that $(\lat_0, \ldots, \lat_\ell)$ is a tower of lattices in $\R^n$ of index $2^\alpha$ if for all $i$ we have $2\lat_{i - 1} \subseteq \lat_{i} \subset \lat_{i-1}, \lat_{i} / 2 \subseteq \lat_{i - 2}$, $|\lat_{i-1}/\lat_i| = 2^\alpha$, and $2^{\ceil{i\alpha/n}} \lat_0 \subseteq \lat_i \subseteq 2^{\floor{i\alpha/n}}\lat_0$ for all $i$.
\end{definition}
\begin{theorem}[\cite{ADRSSolvingShortest15}]\label{thm:tower_of_lattices} %
	There is a polynomial-time algorithm that takes as input integers $\ell\geq 1$ and $n/2 \leq \alpha\leq n$ as well as a lattice ${\lat} \subseteq \R^n$ and outputs a tower of lattice $(\lat_0, \ldots, \lat_\ell)$ with $\lat_\ell = {\lat}$. 
\end{theorem}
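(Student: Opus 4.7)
The plan is to fix an arbitrary basis $\basis = (\vec{b}_1, \ldots, \vec{b}_n)$ of $\lat_\ell := \lat$ and construct the tower ``backwards'' from $\lat_\ell$ down to $\lat_0$ by repeatedly halving a carefully chosen subset of basis vectors. Concretely, I would specify sets $S_i \subseteq \{1, \ldots, n\}$ with $|S_i| = \alpha$ for each $i \in \{1, \ldots, \ell\}$, and let $\lat_{i-1}$ have basis $(\vec{b}_j / 2^{h_{i-1, j}})_{j=1}^n$, where $h_{\ell, j} := 0$ and $h_{i-1, j} := h_{i, j} + \mathbf{1}[j \in S_i]$. Three of the four tower conditions are then automatic: $\lat_i \subseteq \lat_{i-1}$ is immediate; $|\lat_{i-1}/\lat_i| = 2^{|S_i|} = 2^\alpha$ because exactly the basis vectors indexed by $S_i$ get halved; and $2\lat_{i-1} \subseteq \lat_i$ follows from $h_{i-1, j} - h_{i, j} \in \{0, 1\}$.

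The remaining two conditions will translate into combinatorial constraints on $(S_i)_i$. For $\lat_i/2 \subseteq \lat_{i-2}$, the basis vector $\vec{b}_j / 2^{h_{i,j} + 1}$ of $\lat_i/2$ lies in $\lat_{i-2}$ iff $h_{i-2, j} \geq h_{i, j} + 1$, which reduces to $j \in S_i \cup S_{i-1}$; so the required property is $S_{i-1} \cup S_i = \{1, \ldots, n\}$ for every $i \geq 2$. For the scaling bounds $2^{\ceil{i\alpha/n}} \lat_0 \subseteq \lat_i \subseteq 2^{\floor{i\alpha/n}} \lat_0$, I would check that they are equivalent to the cumulative counter $g_{i, j} := |\{k \leq i : j \in S_k\}|$ lying in $\{\floor{i\alpha/n}, \ceil{i\alpha/n}\}$ for every $j$. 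Since $\sum_j g_{i, j} = i\alpha$, this is equivalent to the full set of counters $(g_{i, j})_j$ all lying within $1$ of each other.

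To produce such $S_i$'s, I would use a greedy rule with forced inclusion. Start with $S_1 := \{1, \ldots, \alpha\}$, and for $i \geq 2$ let $T_{i-1} := \{1, \ldots, n\} \setminus S_{i-1}$ (of size $n - \alpha$) and require $T_{i-1} \subseteq S_i$, which is feasible because $\alpha \geq n/2 \geq n - \alpha$ and which automatically secures $S_{i-1} \cup S_i = \{1, \ldots, n\}$; then fill the remaining $2\alpha - n$ slots with elements of $S_{i-1}$ having the smallest current counter, breaking ties by index. The main part of the proof will be verifying by induction that this greedy rule preserves the balance invariant $g_{i, j} \in \{\floor{i\alpha/n}, \ceil{i\alpha/n}\}$. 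The argument splits on whether $\floor{(i-1)\alpha/n} = \floor{(i-2)\alpha/n}$ and tracks which counter levels the forced elements of $T_{i-1}$ and the greedy picks occupy. The hardest step will be showing that the forced inclusions do not push any counter two above the minimum; this is where the hypothesis $\alpha \geq n/2$ enters essentially, since the ``drift'' $2\alpha/n \geq 1$ over any two consecutive steps ensures the target floor $\floor{i\alpha/n}$ advances fast enough to absorb the forced increments. All arithmetic involves $\poly(n, \ell)$ operations on rationals of polynomial bit length, giving a polynomial-time algorithm overall.
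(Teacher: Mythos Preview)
Your framework is exactly the paper's: halve $\alpha$ of the basis vectors at each step, with the four tower axioms translating into the combinatorial conditions you state on the sets $S_i$. The difference is that the paper picks the $S_i$ by a single explicit rule---``cyclically halve $\alpha$ coordinates,'' i.e., at the $t$-th step halve the indices $\{(t-1)\alpha+1,\ldots,t\alpha\}$ taken modulo $n$---and then simply remarks that the required properties are immediate. With that choice, $S_{i-1}\cup S_i=\{1,\ldots,n\}$ is obvious (two consecutive cyclic windows of length $\alpha\ge n/2$ cover everything), and the balance condition $g_{i,j}\in\{\lfloor i\alpha/n\rfloor,\lceil i\alpha/n\rceil\}$ is the standard fact that $i\alpha$ marks placed cyclically on $n$ positions hit each position either $\lfloor i\alpha/n\rfloor$ or $\lceil i\alpha/n\rceil$ times. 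No induction or case analysis is needed.

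Your greedy rule with smallest-index tie-breaking is correct---in fact it \emph{reproduces} the cyclic pattern, as you can check on small examples---so the inductive argument you sketch would go through. But it is doing more work than necessary: the ``hardest step'' you anticipate (showing forced inclusions never overshoot by two) simply does not arise once you describe the sets as cyclic intervals. If you want to keep your write-up, the quickest fix is to replace the greedy rule by the explicit cyclic choice and drop the induction.
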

\begin{proof}
We give the construction below. The desired properties are immediate from the construction. 
	Let $\vec{b}_1, \ldots, \vec{b}_n$ be a basis of ${\lat}$. The tower is then defined by ``cyclically halving $\alpha$ coordinates'', namely,
	\begin{align*}
	\lat_\ell &= {\lat}(\vec{b}_1, \ldots, \vec{b}_n), \\
	\lat_{\ell-1} &= {\lat}(\vec{b}_1/2, \ldots, \vec{b}_{\alpha}/2, \vec{b}_{\alpha + 1}, \ldots \vec{b}_n), \\
	\lat_{\ell-2} &= {\lat}(\vec{b}_1/4, \ldots, \vec{b}_{2\alpha-n}/4, \vec{b}_{2\alpha - n+ 1}/2, \ldots \vec{b}_n/2), 
	\end{align*}
	etc. The required properties can be easily verified.
\end{proof}

The following proposition shows that starting with discrete Gaussian samples from $\cL_0$ and then repeatedly applying Algorithm~\ref{alg:one_step} gives us a list of vectors in $\cL_\ell$ that is close to a mixture of Gaussians, provided that there exists an appropriate ``smooth sublattice'' $\lat' \subseteq \lat_\ell$.
\begin{prop}\label{prop:multi_steps}
    There is an algorithm that runs in $m \cdot \poly(n,\ell, \log m)$ time; takes as input a tower of lattices $(\lat_0, \ldots, \lat_\ell)$ in $\R^n$ of index $2^\alpha$, and vectors $\vec{X}_1, \ldots, \vec{X}_m \in \lat_0$ with $m := 2^{\ell + \alpha + 1}$; and outputs $\vec{Y}_1,\ldots, \vec{Y}_M \in \lat_\ell$ with $M := 2^\alpha$ with the following properties.
    If the input vectors $\vec{X}_1,\ldots, \vec{X}_m$ are symmetric and $0$-similar to a mixture of Gaussians over $\lat' \subseteq \lat_0$ with parameter $s > 10 \cdot 2^{(\alpha/n-1/2)\ell}\eta_\eps(\lat')$ for some (possibly unknown) sublattice $\lat' \subseteq \lat_0$ and $\eps \in (0,1/3)$; then the output distribution is $(10^\ell \eps)$-similar to a mixture of independent Gaussians over $2^{\ceil{\ell \alpha/n}}\lat' \subseteq \lat_\ell$ with parameter $2^{\ell/2}s$, and  
    \[
        \sum_{k=1}^M \expect[\|\vec{Y}_k\|^2] \leq \sum_{i = 1}^m \expect[ \|\vec{X}_i\|^2]
        \; .
    \]
\end{prop}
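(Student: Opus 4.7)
The algorithm iterates Algorithm~\ref{alg:one_step} exactly $\ell$ times, applying the $i$-th iteration to the pair of lattices $(\lat_{i-1}, \lat_i)$ and the current list of vectors, then truncating the final list to $M = 2^\alpha$ vectors. A straightforward recurrence $m_i \geq (m_{i-1} - 2^\alpha)/2$ unfolds to $m_i \geq m/2^i - 2^\alpha$, which both ensures that each iteration has enough input (i.e., at least $2|\lat_{i-1}/\lat_i| = 2^{\alpha+1}$ vectors) and yields the claimed output count $m_\ell \geq 2^\alpha$. Summing the $m_i \cdot \poly(n, \log m_i)$ per-iteration bound from Algorithm~\ref{alg:one_step} gives the overall running time.

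Correctness is by induction on $i$. Setting $\lat^{(i)} := 2^{\lceil i\alpha/n \rceil} \lat'$, I maintain the invariant that after $i$ iterations the vectors $\vec{Z}_1^{(i)}, \ldots, \vec{Z}_{m_i}^{(i)}$ are symmetric, are $\delta_i$-similar to a mixture of independent Gaussians over $\lat^{(i)}$ with parameter $2^{i/2} s$, and satisfy $\sum_k \expect[\|\vec{Z}_k^{(i)}\|^2] \leq \sum_j \expect[\|\vec{X}_j\|^2]$, with $\delta_i + \eps \leq 4^i \eps$ (so in particular $\delta_\ell < 4^\ell \eps < 10^\ell \eps$). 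The base case $i=0$ is immediate from the hypotheses, using $\lat^{(0)} = \lat'$.

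For the induction step, I first re-express the given mixture over $\lat^{(i-1)}$ as a mixture over the finer sublattice $\lat^{(i)} \subseteq \lat^{(i-1)}$ (always possible since $\lceil i\alpha/n\rceil \geq \lceil (i-1)\alpha/n\rceil$); a direct calculation from the definition, summing the per-point $e^{\pm \delta_{i-1}}$ errors over $\lat^{(i)}$-cosets inside each $\lat^{(i-1)}$-coset and then dividing, shows that this re-interpretation costs at most a factor of $2$ in the similarity parameter, so the vectors become $2\delta_{i-1}$-similar to a mixture over $\lat^{(i)}$. I then invoke Lemma~\ref{lem:mixture_to_mixture} with sublattice $\lat^{(i)}$. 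The containment $\lat^{(i)} \subseteq \lat_i$ follows from the tower property $2^{\lceil i\alpha/n \rceil} \lat_0 \subseteq \lat_i$, and the parameter hypothesis $2^{(i-1)/2} s \geq \sqrt{2}\,\eta_\eps(\lat^{(i)})$ reduces via $\eta_\eps(\lat^{(i)}) \leq 2 \cdot 2^{i\alpha/n} \eta_\eps(\lat')$ to $s \geq 4 \cdot 2^{i(\alpha/n - 1/2)} \eta_\eps(\lat')$, which follows from the assumed $s > 10 \cdot 2^{\ell(\alpha/n - 1/2)}\eta_\eps(\lat')$ together with $\alpha/n \geq 1/2$ (which makes $i = \ell$ the binding case). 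The lemma then delivers $(4\delta_{i-1} + 3\eps)$-similar vectors over $\lat^{(i)}$ at parameter $2^{i/2}s$, and the invariant $\delta_i + \eps \leq 4(\delta_{i-1} + \eps)$ closes the recursion. Symmetry and the expected-squared-norm bound propagate through each iteration by Lemma~\ref{lem:symmetric}.

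The main technical subtlety I expect is the refinement step: while it is intuitively clear that any mixture over a coarse lattice is also a mixture over any sublattice, pinning down the correct dependence of the similarity parameter requires carefully summing per-element $e^{\pm \delta}$ slack across cosets, and this is what produces the factor-of-two blowup giving the recurrence $\delta_i \leq 4\delta_{i-1} + 3\eps$ (rather than $2\delta_{i-1} + 3\eps$). The only other point requiring care is checking the parameter condition at \emph{every} intermediate $i \leq \ell$ rather than only at $i = \ell$, but the sign of $\alpha/n - 1/2$ makes this automatic.
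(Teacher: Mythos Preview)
Your proposal is correct and follows essentially the same approach as the paper: iterate Algorithm~\ref{alg:one_step} down the tower, use Lemma~\ref{lem:symmetric} inductively for symmetry and the squared-norm bound, and use Lemma~\ref{lem:mixture_to_mixture} inductively (after passing to the sublattice $2^{\lceil i\alpha/n\rceil}\lat'$) for the mixture property, checking the parameter condition at each step via $\alpha/n \geq 1/2$. In fact you are more careful than the paper on one point---the paper asserts that passing from a $\delta$-similar mixture over $\lat^{(i-1)}$ to one over the finer $\lat^{(i)}$ is free, whereas you correctly note the factor-of-two cost in $\delta$; either way the resulting recursion stays comfortably below $10^\ell\eps$.
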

\begin{proof}
	The algorithm simply applies Algorithm \ref{alg:one_step} repeatedly, first using the input vectors in $\lat_0$ to obtain vectors in $\lat_1$, then using these to obtain vectors in $\lat_2$, etc., until eventually it obtains vectors $\vec{Y}_1,\ldots, \vec{Y}_M \in \lat_\ell$. The running time is clearly $m \cdot \poly(n,\ell, \log m)$, as claimed.
	
	By Lemma~\ref{lem:symmetric} and a simple induction argument, we see that every call to Algorithm~\ref{alg:one_step} results in a symmetric distribution, and the sum of the expected squared norms is non-increasing after each step. In particular,
	\[
	    \sum_{k=1}^M \expect[\|\vec{Y}_k\|^2] \leq \sum_{i = 1}^m \expect[ \|\vec{X}_i\|^2]
	    \; ,
	\]
	as needed.
	
	We suppose for induction that the distribution of the output of the $i$th call to Algorithm~\ref{alg:one_step} is $10^i \eps$-similar to a mixture of independent Gaussians over $2^{\ceil{i\alpha /n}} \lat' \subseteq 2^{\ceil{i\alpha/n}} \lat_0 \subseteq \lat_i$ with parameter $2^{i/2} s$ (which is true by assumption for $i = 0$). Then, this distribution is also $10^i \eps$-similar to a mixture of independent Gaussians over $2^{\ceil{(i+1)\alpha/n}} \lat' \subseteq 2^{\ceil{i\alpha/n}} \lat'$ (since a mixture of Gaussians over a lattice is also a mixture of Gaussians over any sublattice). Furthermore, $\eta_\eps(2^{\ceil{(i+1)\alpha/n}} \lat') = 2^{\ceil{(i+1)\alpha/n}} \eta_\eps(\lat') < 2^{i/2}s/\sqrt{2}$. Therefore, we may apply Lemma~\ref{lem:mixture_to_mixture} to conclude that the distribution of the output of the $(i+1)$st call to Algorithm~\ref{alg:one_step} is $10^{i+1} \eps$-similar to a mixture of independent Gaussians over $2^{\ceil{(i+1)\alpha/n}} \lat' \subseteq \lat_{i+1}$ with parameter $2^{(i+1)/2}s$. In particular, the final output vectors are $10^\ell \eps$-similar to a mixture of independent Gaussians over $2^{\ceil{\ell\alpha/n}}\lat'$, as needed.
\end{proof}

\subsection{The algorithm}

\begin{theorem}
    \label{thm:main_GSVP}
    For any $\eps = \eps(n) \in (0,n^{-200})$, there is a $2^{n/2+ O(n \log(n)/\log(1/\eps)) + o(n)}$-time algorithm that solves $(100 \sqrt{n} \bareta_\eps)$-GSVP. In particular, if $\eps = n^{-\omega(1)}$, then the running time is $2^{n/2 + o(n)}$.
\end{theorem}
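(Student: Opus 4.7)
}
The plan is to put all the pieces from the previous sections together. On input $(\basis, d)$ with $d \geq 100\sqrt{n}\,\bareta_\eps(\lat)$, I would: (1) build a tower $\lat_0 \supset \lat_1 \supset \cdots \supset \lat_\ell = \lat$ of index $2^\alpha$ via Theorem~\ref{thm:tower_of_lattices}; (2) apply Proposition~\ref{prop:start} to $\lat_0$ with block size $r$, parameter $s$, and $m := 2^{\ell+\alpha+1}$ samples to obtain a sublattice $\widehat{\lat} \subseteq \lat_0$ with $\bareta_\eps(\widehat{\lat}) = \bareta_\eps(\lat_0)$ and independent samples $\vec{X}_1,\ldots,\vec{X}_m \sim D_{\widehat{\lat},s}$; (3) feed those samples into the pair-and-sum process of Proposition~\ref{prop:multi_steps} to obtain $M := 2^\alpha$ vectors $\vec{Y}_1,\ldots,\vec{Y}_M \in \lat$; (4) output any $\vec{Y}_k$ of length at most $d$, if one exists. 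Repeating this whole routine $O(1)$ times boosts the success probability of Corollary~\ref{co:short_vector} from $1/10$ to a constant close to $1$.

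The whole analysis then reduces to parameter-picking. Let $\lat'' \subseteq \lat$ achieve the minimum in $\bareta_\eps(\lat) = \eta_\eps(\lat'')$; by the tower construction, an appropriate scaling of $\lat''$ sits inside $\lat_0$ and witnesses $\bareta_\eps(\lat_0) \leq 2^{-\ell\alpha/n}\bareta_\eps(\lat)$. The bound $\sum_k \expect[\|\vec{Y}_k\|^2] \leq \sum_i \expect[\|\vec{X}_i\|^2] \leq mns^2/(2\pi)$ from Proposition~\ref{prop:multi_steps} and Claim~\ref{clm:DGS_variance} gives $T \leq 2^\ell n s^2/\pi$, and hence an output of length $\sqrt{4T} \leq 2\sqrt{2^\ell n/\pi}\,s$. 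To force this to be at most $d$ it suffices to take $s \lesssim d/\sqrt{2^\ell n}$. On the other hand, the smoothing hypothesis of Proposition~\ref{prop:multi_steps} (applied with $\lat' = $ the scaled copy of $\lat''$ inside $\lat_0$) becomes $s \gtrsim 2^{-\ell/2}\bareta_\eps(\lat)$, which is compatible with the upper bound because $d \gtrsim \sqrt{n}\,\bareta_\eps(\lat)$. The Corollary~\ref{co:short_vector} hypothesis on the output parameter $2^{\ell/2} s$ is implied by the same inequality (since $\eta_{1/2} \leq \eta_\eps$), and the sampling hypothesis of Proposition~\ref{prop:start}, $s \geq r^{n/r}\sqrt{n\log n}\,\bareta_\eps(\lat_0)$, rewrites to the single quantitative constraint
\[
    \ell \cdot \bigl(\tfrac{\alpha}{n} - \tfrac{1}{2}\bigr) \gtrsim \tfrac{n}{r}\log r + O(\log n)
    \;.
\]

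The main obstacle, and the source of the $O(n\log n/\log(1/\eps))$ slack in the exponent, is to satisfy this constraint while keeping $\ell + \alpha$ as close to $n/2$ as possible, subject to the hard cap $\ell \leq c\log(1/\eps)$ needed so that the cumulative similarity parameter $\delta = 10^\ell \eps$ from Proposition~\ref{prop:multi_steps} stays $o(1)$ (so that Lemma~\ref{lem:zero_vector_weight} and hence Corollary~\ref{co:short_vector} remain applicable with a constant). I would take $r := \lceil n/2 \rceil$, so that the right-hand side is $O(\log n)$ and the running time $2^{r+o(r)}$ of Proposition~\ref{prop:start} is absorbed into $2^{n/2+o(n)}$; choose $\ell := \min\bigl\{\lceil\sqrt{n\log n}\rceil,\ \lfloor c\log(1/\eps)\rfloor\bigr\}$ to balance the two regimes; and set $\alpha := \lceil n/2 + Cn\log n/\ell\rceil$ for a large enough constant $C$ so that the displayed inequality holds. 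A short case analysis (according to whether $\log(1/\eps)$ is at most or at least $\sqrt{n\log n}$) then gives $\ell + \alpha \leq n/2 + O(n\log n/\log(1/\eps)) + O(\sqrt{n\log n})$, so the dominant cost $m\cdot \poly(n,\ell,\log m) = 2^{\ell+\alpha+1}\cdot\poly(n)$ matches the stated running time, with $2^{o(n)}$ absorbing both the $\sqrt{n\log n}$ term and the $o(r) = o(n)$ overhead from Proposition~\ref{prop:start}. Everything else is routine verification that the preconditions of Propositions~\ref{prop:start} and~\ref{prop:multi_steps} and Corollary~\ref{co:short_vector} are met for this parameter choice.
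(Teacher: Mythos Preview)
Your proposal is correct and follows essentially the same route as the paper: build the tower, sample via Proposition~\ref{prop:start}, run the pair-and-sum of Proposition~\ref{prop:multi_steps}, and conclude via Corollary~\ref{co:short_vector}. The paper makes the slightly simpler parameter choices $r := n/5$ and $\ell := \lfloor \log(1/\eps)/\log 10 \rfloor - 1$ (with no $\min$), and directly assumes access to a parameter $s$ with $50\bareta_\eps(\lat) \leq s \leq 100\bareta_\eps(\lat)$, but the structure of the analysis is the same as yours.

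One small gap to watch: you propose to apply Proposition~\ref{prop:multi_steps} with $\lat'$ equal to the scaled copy of your $\lat'' \subseteq \lat$ sitting inside $\lat_0$. But the input vectors $\vec{X}_i$ are i.i.d.\ samples from $D_{\widehat{\lat},s}$, and for these to be (exactly) a mixture of Gaussians over $\lat'$ you need $\lat' \subseteq \widehat{\lat}$, which the scaled $\lat''$ need not satisfy. The fix, which is what the paper does, is to take $\lat' \subseteq \widehat{\lat}$ to be a witness for $\bareta_\eps(\widehat{\lat})$; the guarantee $\bareta_\eps(\widehat{\lat}) = \bareta_\eps(\lat_0)$ from Proposition~\ref{prop:start}, combined with your observation that $\bareta_\eps(\lat_0) \leq 2^{-\lfloor\ell\alpha/n\rfloor}\bareta_\eps(\lat)$, then gives exactly the bound on $\eta_\eps(\lat')$ that you need.
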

\begin{proof}
     The algorithm takes as input a (basis for a) lattice $\lat \subset \R^n$ with $n \geq 50$ and behaves as follows. Without loss of generality, we may assume that $\eps > 2^{-n}$ and that the algorithm has access to a parameter $s > 0$ with $50\bareta_\eps(\lat) \leq s \leq 100\bareta_\eps(\lat)$. Let $\ell := \floor{\log(1/\eps)/\log(10)}-1$ and $\alpha := \ceil{n/2 + 100 n \log n/\log(1/\eps)}$.
     
     The algorithm first runs the procedure from Theorem~\ref{thm:tower_of_lattices} on input $\ell$, $\alpha$, and $\lat$, receiving as output a tower of lattices $(\lat_0, \ldots, \lat_\ell)$ with $\lat_\ell = \lat$.
    The algorithm then runs the procedure from Proposition~\ref{prop:start} on input $\lat_0$, $r := n/5$, $m := 2^{\ell + \alpha + 1}$, and parameter $s' := 2^{-\ell/2}  s$, receiving as output a sublattice $\widehat{\lat} \subseteq \lat_0$, and vectors $\vec{X}_1,\ldots, \vec{X}_m \in \widehat{\lat} \subseteq \lat_0$. 
    Finally, the algorithm runs the procedure from Proposition~\ref{prop:multi_steps} on input $(\lat_0, \ldots, \lat_\ell)$ and $\vec{X}_1,\ldots, \vec{X}_m$, receiving as output $\vec{Y}_1,\ldots, \vec{Y}_M \in \lat_\ell = \lat$. It then simply outputs the shortest non-zero vector amongst the $\vec{Y}_i \in \lat$. (If all of the $\vec{Y}_i$ are zero, the algorithm fails.)
    
    The running time of the algorithm is clearly $(m + 2^{r + o(r)}) \cdot \poly(n,\ell,\log m) = 2^{n/2 + O(n\log n /\log(1/\eps)) + o(n)}$. 
    We first show that the promise $s' \geq r^{n/r} \sqrt{n \log n} \cdot \bareta_\eps(\lat_0)$ needed to apply Proposition~\ref{prop:start} is satisfied. Indeed, by the definition of a tower of lattices, we have $\lat \subseteq 2^{\floor{\ell \alpha/n}}\lat_0$, so that
    \[
        s' \geq 50 \cdot 2^{-\ell/2}  \cdot \bareta_\eps(\lat) \geq 50 \cdot 2^{\floor{\ell\alpha/n}-\ell/2} \cdot \bareta_\eps(\lat_0) \geq r^{n/r} \sqrt{n \log n} \cdot \bareta_\eps(\lat_0)
        \; ,
    \]
    as needed.
    Therefore, the procedure from Proposition~\ref{prop:start} succeeds, i.e. we have $\bareta_\eps(\widehat{\lat}) = \bareta_\eps(\lat_0)$ and that the $\vec{X}_i$ are distributed as independent samples from $D_{\widehat{\lat},s'}$. 
    
    In particular, let $\lat' \subseteq \widehat{\lat} \subseteq \lat_0$ such that $\eta_\eps(\lat') = \bareta_\eps(\widehat{\lat}) = \bareta_\eps(\lat_0)$. Then, the distribution of $\vec{X}_1,\ldots, \vec{X}_m$ is symmetric and $0$-similar to a mixture of Gaussians over $\lat'$ with parameter $s' > 10 \cdot 2^{(\alpha/n - 1/2)\ell} \eta_\eps(\lat')$. We may therefore apply Proposition~\ref{prop:multi_steps} and see that the $\vec{Y}_1,\ldots, \vec{Y}_M \in \lat$ are $\delta$-similar to a mixture of independent Gaussians over $2^{\ceil{\ell \alpha/n}}\lat'$ with parameter $s$ and $\delta := 10^\ell \eps \leq 1/10$. Furthermore,
    \[
        \sum_{k=1}^M \expect[\|\vec{Y}_k\|^2] \leq \sum_{i=1}^m \expect[\|\vec{X}_i\|^2] \leq \frac{n m (s')^2}{2\pi} = 2^{-\ell} \cdot \frac{n m s^2}{2\pi}
        \; ,
    \]
    where the last inequality is Claim~\ref{clm:DGS_variance}.
    
    Finally, we notice that 
    \[
        s \geq 50 \bareta_\eps(\lat) \geq 50 \cdot 2^{\floor{\ell \alpha/n}} \bareta_\eps(\lat_0) = 50 \eta_\eps(2^{\floor{\ell \alpha/n}} \lat') \geq 25  \eta_\eps(2^{\ceil{\ell \alpha/n}}\lat') \geq 10 e^{\delta} \eta_{1/2}((2^{\ceil{\ell \alpha/n}}\lat')
        \; .
    \]
    Therefore, we may apply Corollary~\ref{co:short_vector} to $\vec{Y}_1,\ldots, \vec{Y}_M$ to conclude that with probability at least $1/10$, there exists $k \in \{1,\ldots, M\}$ such that
    \[
        0 < \|\vec{Y}_k\|^2 < \frac{4}{M}  \cdot \sum_{i=1}^M \expect[\|\vec{Y}_i\|^2] \leq 2^{-\ell} \cdot \frac{n m s^2}{2\pi M} \leq ns^2 \leq 100^2 n \bareta_\eps(\lat)^2
        \; .
    \]
    In other words, $\vec{Y}_k \in \lat$ is a valid solution to $(100 \sqrt{n} \bareta_\eps)$-GSVP, as needed.
\end{proof}

\begin{cor}
    There is a $2^{n/2+ o(n)}$-time algorithm that solves $\gamma$-SVP for any $\gamma = \gamma(n) > \omega(\sqrt{n \log n})$.
\end{cor}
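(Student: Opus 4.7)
The plan is to derive this corollary from Theorem~\ref{thm:main_GSVP} by choosing $\eps = \eps(n)$ carefully and combining the bound $\bareta_\eps(\lat) \leq \sqrt{\log(1/\eps)} \cdot \lambda_1(\lat)$ from Lemma~\ref{lem:bar_eta} with a standard guessing procedure that removes the dependence on the unknown quantity $\lambda_1(\lat)$.

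First, I would pick $\eps = \eps(n)$ to balance two competing constraints. For the running time $2^{n/2 + O(n \log n / \log(1/\eps)) + o(n)}$ in Theorem~\ref{thm:main_GSVP} to collapse to $2^{n/2 + o(n)}$, I need $\log(1/\eps) = \omega(\log n)$. On the other hand, to ensure that Lemma~\ref{lem:bar_eta}'s bound $100\sqrt{n} \cdot \bareta_\eps(\lat) \leq 100\sqrt{n \log(1/\eps)} \cdot \lambda_1(\lat)$ is at most $\gamma(n) \cdot \lambda_1(\lat)$, I need $\log(1/\eps) \leq \gamma(n)^2/(10^4 n)$. Both constraints can be satisfied precisely because the hypothesis $\gamma = \omega(\sqrt{n \log n})$ is equivalent to $\gamma^2/n = \omega(\log n)$, so the upper bound is itself $\omega(\log n)$ and leaves a gap for $\log(1/\eps)$ to live in. Concretely, one can take $\log(1/\eps)$ to be (roughly) the geometric mean of $\log n$ and $\gamma^2/(10^4 n)$; this also automatically guarantees $\log(1/\eps) > 200 \log n$ for all sufficiently large $n$, as required by Theorem~\ref{thm:main_GSVP}.

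Next, I would handle the fact that Theorem~\ref{thm:main_GSVP} solves GSVP, which takes a length bound $d \geq \eta(\lat)$ as input, whereas SVP provides no such bound. Using Theorem~\ref{thm:LLL}, I would first compute a lattice vector of length $L$ with $\lambda_1(\lat) \leq L \leq 2^{n/2} \lambda_1(\lat)$. I would then invoke the $(100\sqrt{n}\bareta_\eps)$-GSVP algorithm $n+2$ times, with length bounds $d_j := 100\sqrt{n \log(1/\eps)} \cdot L \cdot 2^{-j/2}$ for $j = 0, 1, \ldots, n+1$, and return the shortest non-zero output. For the largest $j$ satisfying $L \cdot 2^{-j/2} \geq \lambda_1(\lat)$, the length bound $d_j$ exceeds $100\sqrt{n} \bareta_\eps(\lat)$ (by Lemma~\ref{lem:bar_eta}), so the call succeeds and returns a vector of length at most $d_j \leq 200\sqrt{n \log(1/\eps)} \cdot \lambda_1(\lat)$. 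By the choice of $\eps$, this is at most $\gamma(n) \cdot \lambda_1(\lat)$ for $n$ large enough.

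There is no real obstacle here; the critical observation is only that $\gamma = \omega(\sqrt{n\log n})$ is exactly the strength of hypothesis needed to open up an $\omega(\log n)$ window for $\log(1/\eps)$. The rest is bookkeeping: the $n+2$ guesses multiply the running time by $\poly(n)$, which is absorbed into the $2^{o(n)}$ factor, so the total running time remains $2^{n/2+o(n)}$.
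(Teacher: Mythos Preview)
Your proof is correct and follows essentially the same approach as the paper: choose $\eps$ so that $\log(1/\eps) = \omega(\log n)$ yet $100\sqrt{n\log(1/\eps)} \leq \gamma$, then invoke Theorem~\ref{thm:main_GSVP} together with the bound $\bareta_\eps(\lat) \leq \sqrt{\log(1/\eps)}\,\lambda_1(\lat)$ from Lemma~\ref{lem:bar_eta}. The paper simply takes the extremal choice $\eps = 2^{-\gamma^2/(100^2 n)}$ rather than your geometric mean, and it hides the guessing of $\lambda_1(\lat)$ inside the ``without loss of generality'' assumption on $s$ in the proof of Theorem~\ref{thm:main_GSVP}; your LLL-based search over $O(n)$ candidate length bounds is exactly the standard way to justify that WLOG.
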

\begin{proof}
    Theorem~\ref{thm:main_GSVP} gives an algorithm with the desired running time that finds a non-zero lattice vector with norm bounded by $ 100\sqrt{n} \bareta_\eps(\lat)$ for 
    \[
        \eps := 2^{-\gamma^2/(100^2n)} < n^{-\omega(1)}
        \; .
    \]
    The result follows from Lemma~\ref{lem:bar_eta}, which in particular tells us that 
    \[
        \bareta_\eps(\lat) \leq \sqrt{\log(1/\eps)} \lambda_1(\lat) \leq \gamma/(100\sqrt{n}) \cdot \lambda_1(\lat)
        \; ,
    \]
    as needed.
\end{proof}

\begin{cor}
\label{cor:hermite}
    There is a $2^{n/2+ o(n)}$-time algorithm that solves $\gamma$-HSVP for any $\gamma = \gamma(n) > \omega(\sqrt{n \log^3 n})$.
\end{cor}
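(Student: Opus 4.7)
The plan is to mimic the proof of the preceding corollary for $\gamma$-SVP, but use the other bound from Lemma~\ref{lem:bar_eta}: namely, that $\bareta_\eps(\lat) \leq 10\sqrt{\log(1/\eps)}\,(\log n + 2)\det(\lat)^{1/n}$. I will invoke Theorem~\ref{thm:main_GSVP}, which produces in time $2^{n/2 + O(n\log n/\log(1/\eps)) + o(n)}$ a non-zero lattice vector of length at most $100\sqrt{n}\,\bareta_\eps(\lat)$. Combining with the $\det(\lat)^{1/n}$ bound on $\bareta_\eps(\lat)$ yields a vector of length at most
\[
    1000\sqrt{n\log(1/\eps)}\,(\log n + 2)\det(\lat)^{1/n}\; .
\]

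To turn this into a $\gamma$-HSVP solver, I will pick $\eps$ as large as possible subject to the constraint that this bound is at most $\gamma\det(\lat)^{1/n}$. Solving, it suffices to set
\[
    \eps := \exp\!\left(-\,\frac{\gamma^2}{10^6 n (\log n + 2)^2}\right)\; .
\]
Then $100\sqrt{n}\,\bareta_\eps(\lat) \leq \gamma \det(\lat)^{1/n}$, so the algorithm from Theorem~\ref{thm:main_GSVP} indeed outputs a valid $\gamma$-HSVP solution.

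It remains to check the running time. The hypothesis $\gamma(n) > \omega(\sqrt{n\log^3 n})$ implies $\gamma^2/(n\log^2 n) > \omega(\log n)$, and hence
\[
    \log(1/\eps) = \frac{\gamma^2}{10^6 n (\log n + 2)^2} = \omega(\log n)\; ,
\]
so $\eps < n^{-\omega(1)}$. By the ``in particular'' clause of Theorem~\ref{thm:main_GSVP}, the running time is $2^{n/2 + o(n)}$, as claimed.

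The calculation is essentially routine once one has Theorem~\ref{thm:main_GSVP} and Lemma~\ref{lem:bar_eta} in hand; the only mild subtlety is that, compared to the SVP corollary (where only the trivial bound $\bareta_\eps(\lat) \leq \sqrt{\log(1/\eps)}\lambda_1(\lat)$ is needed), the reverse-Minkowski bound costs an extra factor of $\log(n)$ inside the square root, which is precisely why the threshold is $\sqrt{n\log^3 n}$ instead of $\sqrt{n\log n}$. So there is no real obstacle—this is just a choice-of-parameters exercise.
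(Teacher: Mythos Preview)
Your proof is correct and follows essentially the same approach as the paper: invoke Theorem~\ref{thm:main_GSVP}, plug in the reverse-Minkowski bound from Lemma~\ref{lem:bar_eta}, and choose $\eps$ so that the resulting length bound equals $\gamma\det(\lat)^{1/n}$ while $\eps < n^{-\omega(1)}$. The only differences are cosmetic constants (the paper takes $\eps = 2^{-\gamma^2/(10^{10} n \log^2 n)}$).
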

\begin{proof}
    Theorem~\ref{thm:main_GSVP} gives an algorithm with the desired running time that finds a non-zero lattice vector with norm bounded by $100 \sqrt{n} \bareta_\eps(\lat)$ for 
    \[
        \eps := 2^{-\gamma^2/(10^{10} n \log^2 n)} < n^{-\omega(1)}
        \; .
    \]
    The result follows from Lemma~\ref{lem:bar_eta}, which in particular tells us that 
    \[
        \bareta_\eps(\lat) \leq 10\sqrt{\log(1/\eps)} (\log n + 2) \det(\lat)^{1/n} \leq \gamma/(100\sqrt{n}) \cdot \det(\lat)^{1/n}
        \; ,
    \]
    as needed (where we have assumed without loss of generality that $n$ is sufficiently large).
\end{proof}

\section{Approximate SVP via Basis Reduction}
Basis reduction algorithms solve $\delta$-(H)SVP in dimension $n$ by making polynomially many calls to a $\delta'$-SVP algorithm on lattices in dimension $k < n$. 
We will show in this section how to modify the basis reduction algorithm from~\cite{GNFindingShort08,ALNSSlideReduction19} to prove Theorem~\ref{thm:basis_reduction_intro}.

\subsection{Slide-reduced bases}

Here, we introduce our notion of a reduced basis. This differs from prior work in that we consider the possibility that the length $\ell$ of the last block is not equal to $k$, and we use HSVP reduction where other works use SVP reduction. E.g., taking $\ell=k$ and replacing (D)HSVP reduction with (D)SVP reduction in Item~\ref{item:primal_HSVP} recovers the definition from~\cite{ALNSSlideReduction19}. (Taking $\ell = k$ and $q = 0$ and replacing all (D)HSVP reduction with (D)SVP reduction recovers the original definition in~\cite{GNFindingShort08}.)

\begin{definition}[Slide reduction]
	Let $n, k, p, q, \ell$ be integers such that $n = pk + q + \ell$ with $p \geq 1, k, \ell \geq 2$ and $0\leq q \leq k-1$. Let $\delta_H \geq 1$ and $\delta_S \geq 1$. A basis $\vec{B}\in \R^{m\times n}$ is $(\delta_H, k, \delta_S, \ell)$-slide-reduced if it is size-reduced and satisfies the following four sets of constraints.
	\begin{enumerate}
		\item The block $\vec{B}_{[1,k+q+1]}$ is $\eta$-twin-reduced for $\eta:= \delta_H^{\frac{k+q-1}{k-1}}$.\label{item:first_block}
		\item For all $i \in [1,p-1]$, the block $\vec{B}_{[ik+q+1,(i+1)k+q+1]}$ is $\delta_H$-twin-reduced. \label{item:primal_HSVP}
		\item The block $\vec{B}_{[pk + q + 1, n]}$ is $\delta_S$-SVP-reduced.\label{item:last_block}
	\end{enumerate}
\end{definition}
\begin{theorem}\label{thm:SVP_slide} For any $\delta_H, \delta_S \geq 1, k \geq 2, \ell \geq 2$, if $\vec{B}\in \R^{n\times n}$ is a $(\delta_H, k, \delta_S, \ell)$-slide-reduced basis of a lattice $\lat$ with $\lambda_1(\lat(\vec{B}_{[1,n-\ell]})) > \lambda_1(\lat)$ then
		\[ \| \vec{b}_1 \| \leq \delta_S (\delta_H^2)^{\frac{n-\ell}{k-1}} \lambda_1(\lat) \; . \]
\end{theorem}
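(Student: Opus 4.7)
The plan is to chain the twin-reduction fact (Fact~\ref{fact:twinsies}) across all the blocks $\vec{B}_{[1,k+q+1]}, \vec{B}_{[k+q+1,2k+q+1]}, \ldots, \vec{B}_{[(p-1)k+q+1, pk+q+1]}$ to bound $\|\vec{b}_1\|$ in terms of $\|\vec{b}_{n-\ell+1}^*\|$, and then use $\delta_S$-SVP-reduction of the last block together with the hypothesis $\lambda_1(\lat(\vec{B}_{[1,n-\ell]})) > \lambda_1(\lat)$ to bound $\|\vec{b}_{n-\ell+1}^*\|$ by $\delta_S \lambda_1(\lat)$. Combining these two bounds gives the theorem.

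For the first block, which has length $k+q+1$, Fact~\ref{fact:twinsies} applied with $d = k+q$ and $\delta = \eta = \delta_H^{(k+q-1)/(k-1)}$ yields
\[
    \|\vec{b}_1\| \;\le\; \eta^{\,2(k+q)/(k+q-1)} \, \|\vec{b}_{k+q+1}^*\| \;=\; \delta_H^{\,2(k+q)/(k-1)} \, \|\vec{b}_{k+q+1}^*\|.
\]
For each middle block $\vec{B}_{[ik+q+1,(i+1)k+q+1]}$, which has length $k+1$, the same fact (with $d=k$ and $\delta = \delta_H$) gives $\|\vec{b}_{ik+q+1}^*\| \le \delta_H^{2k/(k-1)} \|\vec{b}_{(i+1)k+q+1}^*\|$ for $i = 1, \ldots, p-1$. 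Telescoping these $p-1$ inequalities and combining with the first-block bound produces
\[
    \|\vec{b}_1\| \;\le\; \delta_H^{\,2(k+q + k(p-1))/(k-1)} \, \|\vec{b}_{pk+q+1}^*\| \;=\; (\delta_H^2)^{(n-\ell)/(k-1)} \, \|\vec{b}_{n-\ell+1}^*\|,
\]
where I used $pk + q = n - \ell$.

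It remains to show $\|\vec{b}_{n-\ell+1}^*\| \le \delta_S \lambda_1(\lat)$. Since the last block $\vec{B}_{[n-\ell+1,n]}$ is $\delta_S$-SVP-reduced, its first vector $\vec{b}_{n-\ell+1}^* = \pi_{n-\ell+1}(\vec{b}_{n-\ell+1})$ satisfies $\|\vec{b}_{n-\ell+1}^*\| \le \delta_S \cdot \lambda_1(\pi_{n-\ell+1}(\lat))$. The hypothesis $\lambda_1(\lat(\vec{B}_{[1,n-\ell]})) > \lambda_1(\lat)$ guarantees that a shortest nonzero vector $\vec{v} \in \lat$ does not lie in $\lat(\vec{b}_1,\ldots,\vec{b}_{n-\ell})$; hence its projection $\pi_{n-\ell+1}(\vec{v})$ is a nonzero element of $\pi_{n-\ell+1}(\lat)$ of norm at most $\|\vec{v}\| = \lambda_1(\lat)$. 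Therefore $\lambda_1(\pi_{n-\ell+1}(\lat)) \le \lambda_1(\lat)$, which, combined with the chain above, yields the claim.

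The main (and essentially only) subtle point is the invocation of the hypothesis to pass from $\lambda_1(\pi_{n-\ell+1}(\lat))$ back to $\lambda_1(\lat)$; without it, the last block could in principle have a much shorter first minimum than $\lat$ itself, and the SVP-reduction step would not yield a useful bound on $\vec{b}_1$. The rest is purely mechanical index bookkeeping to make the exponents of $\delta_H$ telescope to $2(n-\ell)/(k-1)$, which is exactly why the first block's parameter $\eta$ was chosen to be $\delta_H^{(k+q-1)/(k-1)}$ rather than $\delta_H$.
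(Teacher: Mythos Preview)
Your proof is correct and follows essentially the same approach as the paper: apply Fact~\ref{fact:twinsies} to the first block (with parameter $\eta$) and to each of the $p-1$ middle blocks (with parameter $\delta_H$), telescope to get $\|\vec{b}_1\|\le(\delta_H^2)^{(n-\ell)/(k-1)}\|\vec{b}_{pk+q+1}^*\|$, and then use $\delta_S$-SVP-reduction of the last block together with the hypothesis to bound $\|\vec{b}_{pk+q+1}^*\|\le\delta_S\lambda_1(\lat)$. Your justification of the last inequality (projecting a shortest vector of $\lat$ into $\pi_{n-\ell+1}(\lat)$) is slightly more explicit than the paper's, but the argument is the same.
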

\begin{proof}
	By Fact~\ref{fact:twinsies}, $\|\vec{b}_1 \|\leq \eta^{\frac{2(k+q)}{k+q-1}} \|\vec{b}^*_{k+q+1}\| = \delta_H^{\frac{2(k+q)}{k-1}}\|\vec{b}^*_{k+q+1}\|$. Also, for all $i \in [1,p-1]$, $\|\vec{b}^*_{ik+q+1}\| \leq \delta_H^{\frac{2k}{k-1}}\|\vec{b}^*_{(i+1)k+q+1}\|$. All together we have:
	\begin{equation*}
	\|\vec{b}_1\| \le (\delta_H^2)^{\frac{k + q + (p - 1)k}{k-1}}\|\vec{b}^*_{pk+q+1}\| =  (\delta_H^2)^{\frac{n - \ell}{k-1}}\|\vec{b}^*_{pk+q+1}\|
	\end{equation*}
	Lastly, since $\lambda_1(\lat(\vec{B}_{[1,n-\ell]})) > \lambda_1(\lat)$, $\|\vec{b}^*_{pk+q+1}\| \le \delta_S \lambda_1(\lat(\basis_{[pk+q+1, n]})) \le \delta_S \lambda_1(\lat)$. The result does follow.
\end{proof}
\subsection{The slide reduction algorithm }

\label{subsec:Algorithm for GSR}
We show our algorithm for generating a slide-reduced basis. We stress that this is essentially the same algorithm as in \cite{ALNSSlideReduction19} (which itself is a generalization of the algorithm in~\cite{GNFindingShort08}) with the a slight modification that allows the last block to have arbitrary length $\ell$. Our proof for bounding the running time of the algorithm is therefore essentially identical to the proof in~\cite{GNFindingShort08,ALNSSlideReduction19}.

\begin{algorithm}
	\small
	\caption{Our slide-reduction algorithm %
		\label{alg:GSR}}
	\begin{algorithmic}[1]
		\REQUIRE Block size $k\ge 2$, slack $\eps > 0$, approximation factor $\delta_H, \delta_S \geq 1$,
		basis $\basis = (\mathbf{b}_{1}, \ldots, \mathbf{b}_{n}) \in
		\Z^{m \times n}$ of a lattice $\lat$ of rank $n =pk+q + \ell$ for $0\le q \le k-1$, and access to a $\delta_H$-HSVP oracle for lattices with rank $k$ as well as a $\delta_S$-SVP oracle for lattices with rank $\ell$.
		
		\ENSURE  A $((1 + \varepsilon)\delta_H,k, \delta_S, \ell)$-slide-reduced basis of $\lat(\basis)$.
		
		\WHILE{$\vol(\basis_{[1,ik+q]})^{2}$ is modified by the loop for some 
			$i \in [1, p]$}\label{GSR:while}

		\STATE $(1+\eps)\eta$-HSVP-reduce $\basis_{[1,k+q]}$ %
		using Alg.~\ref{alg:SDBKZ} for $\eta := (\delta_H)^{\frac{k+q-1}{k-1}}$.\label{GSR:Mordell}

		\FOR{$i = 1$ \TO $p-1$} 
		\STATE $\delta_H$-HSVP-reduce %
		$\basis_{[ik+q+1,(i+1)k+q]}$. %
		\label{GSR:HSVP}
		
		\ENDFOR
		\STATE $\delta_S$-SVP-reduce $\basis_{[pk+q+1, n]}$. \label{GSR:SVP}
		
		\IF{$\basis_{[2,k+q+1]}$ is not $(1+\eps)\eta$-DHSVP-reduced} 
		
		\STATE $(1+\eps)^{1/2} \eta$-DHSVP-reduce $\basis_{[2,k+q+1]}$ using 
		Alg.~\ref{alg:SDBKZ}. \label{step:dual_mordell}
		
		\ENDIF
		
		\FOR{$i=1$ \TO $p-1$}

		\STATE Find a new basis $\vec{C} := (\vec{b}_1,\ldots, \vec{b}_{ik+q+1},\vec{c}_{ik+q+2}, \ldots, \vec{c}_{(i+1)k+q+1},\vec{b}_{ik+q+2},\ldots,\vec{b}_n)$ of $\lat$ by $\delta_H$-DHSVP-reducing $\basis_{[ik+q+2,(i+1)k+q+1]}$.
		
		\IF{$(1+\eps) \|\vec{b}_{(i+1)k+q+1}^*\|< \| \vec{c}_{(i+1)k+q+1}^*\|$}
		
		\STATE $\basis \leftarrow \vec{C}$. \label{step:GSR_DSVP}
		
		\ENDIF
		
		\ENDFOR 
		
		\ENDWHILE\label{GSR:endwhile}
		
		\RETURN $\basis$.
	\end{algorithmic}
\end{algorithm}

\begin{theorem} \label{th:correctness of G-slide-reduction}
	For $\eps \in [1/\poly(n),1]$, Algorithm~\ref{alg:GSR} runs in polynomial time (excluding oracle calls), makes polynomially many calls to its $\delta_H$-HSVP oracle and $\delta_S$-SVP oracle, and outputs a $((1+\eps)\delta_H, k, \delta_S, \ell)$-slide-reduced basis of the input lattice $\lat$. 
\end{theorem}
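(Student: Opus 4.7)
I would mimic the slide-reduction analysis of~\cite{GNFindingShort08, ALNSSlideReduction19}, adapted to the two cosmetic changes here: the last block has length $\ell$ (possibly different from $k$) and is $\delta_S$-SVP-reduced, while the remaining blocks are $\delta_H$-(D)HSVP-reduced. The argument splits into (i) correctness at termination, (ii) a potential-based bound on the number of iterations of the main loop, and (iii) a per-iteration cost bound.

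For (i), suppose the while loop exits. Steps~\ref{GSR:Mordell}, \ref{GSR:HSVP}, and \ref{GSR:SVP} run unconditionally in the final pass, so the primal HSVP/SVP halves of items~\ref{item:first_block}, \ref{item:primal_HSVP}, and~\ref{item:last_block} hold (with HSVP slack $(1+\eps)\delta_H^{(k+q-1)/(k-1)} \leq ((1+\eps)\delta_H)^{(k+q-1)/(k-1)}$ since $(k+q-1)/(k-1)\geq 1$). The dual halves follow from the fact that the conditional steps must not have altered $\basis$, which is the only way (as verified in (ii) below) that the while guard could remain unmet: if step~\ref{step:dual_mordell} did not fire then $\basis_{[2,k+q+1]}$ is already $(1+\eps)\eta$-DHSVP-reduced, and if step~\ref{step:GSR_DSVP} did not update $\basis$ for some $i$, then the trial $\vec{C}$ (which is $\delta_H$-DHSVP-reduced) satisfies $\|\vec{c}^*_{(i+1)k+q+1}\| \leq (1+\eps)\|\vec{b}^*_{(i+1)k+q+1}\|$, which forces $\basis_{[ik+q+2,(i+1)k+q+1]}$ to be $(1+\eps)\delta_H$-DHSVP-reduced. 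Size-reducedness is maintained throughout by the framework of Section~\ref{subsec:lattice reduction}.

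For (ii), I would use the potential
\[
    \Phi(\basis) := \sum_{i=1}^{p} \log \vol(\basis_{[1,ik+q]}) \; .
\]
The unconditional operations in steps~\ref{GSR:Mordell}, \ref{GSR:HSVP}, and \ref{GSR:SVP} each act inside a projected block $\basis_{[a,b]}$ such that no cut point $ik+q$ with $i\in\{1,\ldots,p\}$ lies strictly between $a$ and $b$, so the sublattices $\lat(\basis_{[1,ik+q]})$ and hence $\Phi$ are preserved. Only the conditional dual operations can change $\Phi$: when step~\ref{step:dual_mordell} fires, it passes $\basis_{[2,k+q+1]}$ from \emph{not} being $(1+\eps)\eta$-DHSVP-reduced to being $(1+\eps)^{1/2}\eta$-DHSVP-reduced, and since $\vol(\basis_{[1,k+q+1]})$ is preserved, this divides $\vol(\basis_{[1,k+q]})$ by a factor strictly greater than $(1+\eps)^{1/2}$; when step~\ref{step:GSR_DSVP} updates $\basis$, its explicit test divides $\vol(\basis_{[1,(i+1)k+q]})$ by at least $(1+\eps)$. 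Each triggered iteration of the while loop therefore shrinks $\Phi$ by at least $\tfrac{1}{2}\log(1+\eps)$. Combined with $\Phi \leq \poly(n, \log\|\basis_0\|)$ throughout (enforced by the size reduction/LLL bookkeeping of Section~\ref{subsec:lattice reduction}) and the Minkowski lower bound $\vol(\basis_{[1,ik+q]}) \geq (\lambda_1(\lat)/\sqrt{n})^{ik+q}$ (using $\lambda_1(\lat(\basis_{[1,ik+q]})) \geq \lambda_1(\lat)$, which is itself bounded below in terms of input bit length), this caps the number of iterations at $\poly(n, \log\|\basis_0\|)/\log(1+\eps) = \poly(n, \log\|\basis_0\|)$ for $\eps \geq 1/\poly(n)$.

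For (iii), each pass of the loop performs two invocations of Algorithm~\ref{alg:SDBKZ} (in steps~\ref{GSR:Mordell} and~\ref{step:dual_mordell}) and $O(n/k)$ direct HSVP/DHSVP/SVP calls on blocks of rank at most $\max(k,\ell)$, each followed by size reduction; by Theorem~\ref{th:SDBKZ} and Section~\ref{subsec:lattice reduction}, the non-oracle cost per pass and the number of oracle queries per pass are both $\poly(n,\log \|\basis_0\|)$. Multiplying by the iteration bound yields the theorem. The main obstacle is the monotonicity/strict-decrease bookkeeping for $\Phi$: one must verify step by step which $\|\vec{b}^*_j\|$ are moved by each block reduction and confirm both that the unconditional operations really preserve each cut-point volume and that the two conditional dual operations shrink the appropriate $\vol(\basis_{[1,ik+q]})$ by the claimed $(1+\eps)^{1/2}$ and $(1+\eps)$ factors. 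This is the direct analogue of the potential analysis in~\cite[Sections 3--4]{GNFindingShort08} and~\cite{ALNSSlideReduction19}, with only small accounting changes due to the last block of length $\ell$ being SVP-reduced rather than HSVP-reduced.
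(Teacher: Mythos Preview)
Your proposal is correct and follows essentially the same route as the paper: the same potential $\Phi(\basis)=\sum_{i=1}^{p}\log\vol(\basis_{[1,ik+q]})$ (the paper writes the multiplicative version $P(\basis)=\prod_i \vol(\basis_{[1,ik+q]})^2$), the same observation that primal steps preserve all cut-point volumes while each triggered dual step strictly shrinks one of them by a $(1+\eps)^{\Theta(1)}$ factor, and the same per-iteration oracle accounting. The only noteworthy difference is your lower bound on $\Phi$ via Minkowski's theorem and $\lambda_1(\lat)\ge 1$; the paper instead uses the cleaner observation that $P(\basis)\in\Z^+$ (Gram determinants of integer bases are positive integers), which gives $\log P\ge 0$ directly and avoids the extra bookkeeping.
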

\begin{proof}
	First, notice that if Algorithm~\ref{alg:GSR} ever terminates, the output must be $((1+\eps)\delta_H, k, \delta_S, \ell)$-slide-reduced basis. It remains to show that the algorithm terminates in polynomially many steps (excluding oracle calls).

	Let $\basis_{0} \in \Z^{m \times n}$ be the input basis and let $\basis \in \Z^{m \times n}$
	denote the current basis during the execution of Algorithm~\ref{alg:GSR}. 
	Following the analysis of basis reduction algorithms in \cite{LLLFactoringPolynomials82,GNFindingShort08,LNApproximatingDensest14,ALNSSlideReduction19}, 
	we consider an integral potential of the form
	\begin{equation*}
	P(\basis) := \prod_{i=1}^{p} \vol(\basis_{[1,ik+q]})^{2}  \in \mZ^{+}.\label{eq:potential for GSR}
	\end{equation*}
	At the beginning of the algorithm, the potential satisfies $\log P(\basis_{0}) \leq 2n^{2}
	\cdot \log \|\basis_{0}\|$. For each of the primal steps (i.e., Steps~\ref{GSR:Mordell},~\ref{GSR:HSVP} and~\ref{GSR:SVP}), the lattice $\lat(\basis_{[1,ik+q]})$ for any $i\ge 1$ is unchanged. Hence $P(\basis)$ does not change. On the other hand, the dual steps (i.e., Steps~\ref{step:dual_mordell} and~\ref{step:GSR_DSVP}) either leave $\vol(\basis_{[1,i k+q]})$ unchanged for all $i$ or decrease $P(\basis)$ by a multiplicative factor of at least $(1+\eps)$.
	
	Therefore, there are at most $\log P(\basis_{0})/\log
	(1+\varepsilon)$ updates on $P(\basis)$ by Algorithm~\ref{alg:GSR}. This directly implies that the algorithm makes at most $4pn^2  \log \|\basis_0\|/\log(1+\eps)$ calls to the HSVP oracle, the SVP oracle, and Algorithm~\ref{alg:SDBKZ}. 
	We then conclude that Algorithm~\ref{alg:GSR}'s running time is bounded by some polynomial in the size of input (excluding the running time of oracle calls).
	\end{proof}
\begin{cor}
	For any constant $c\geq 1$, there is a randomized algorithm that solves $\widetilde{O}(n^c)$-SVP that runs in $2^{k/2 + o(k)}$ time for $k :=  \frac{n}{c+5/(8.02)} $.
\end{cor}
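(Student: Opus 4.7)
The plan is to invoke the slide-reduction framework of Algorithm~\ref{alg:GSR} / Theorem~\ref{th:correctness of G-slide-reduction} with two different block sizes, instantiating the $\delta_H$-HSVP oracle (on blocks of size $k$) using Corollary~\ref{cor:hermite}, which solves $\widetilde{O}(\sqrt{k})$-HSVP in time $2^{k/2+o(k)}$, and the $\delta_S$-SVP oracle (on a final block of size $\ell$) using the $2^{0.802\ell+o(\ell)}$-time $O(1)$-SVP algorithm of~\cite{WLWFindingShortest15,AUV19}. Since Theorem~\ref{th:correctness of G-slide-reduction} guarantees only polynomially many oracle calls, the total running time will be dominated by the slower of the two oracles, and our job is to choose $k$ and $\ell$ to balance them.

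To balance the two oracle costs I would take $\ell$ as large as possible subject to $0.802\,\ell \leq k/2$, namely $\ell := \lfloor 5k/8.02\rfloor$; this makes both oracles run in time $2^{k/2+o(k)}$. Writing $n = pk + q + \ell$ with $0 \leq q \leq k-1$, Theorem~\ref{thm:SVP_slide} then bounds $\|\vec{b}_1\|$ by $\delta_S \cdot \delta_H^{2(n-\ell)/(k-1)}\cdot \lambda_1(\lat) = \widetilde{O}\bigl(k^{(n-\ell)/(k-1)}\bigr)\cdot \lambda_1(\lat)$. With the chosen $\ell$ the exponent is essentially $n/k - 5/8.02$, so the specific choice $k := \lfloor n/(c+5/8.02)\rfloor$ drives it to $c + o(1)$. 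Since $c$ is a constant, $k = \Theta(n)$, hence $\log k = \log n - O(1)$, and thus $k^{c+o(1)} = \widetilde{O}(n^c)$; the polylogarithmic slack hidden in $\delta_H = \widetilde{O}(\sqrt{k})$ only contributes another polylogarithmic factor, still absorbed by the $\widetilde{O}$.

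The one point that needs care is that Theorem~\ref{thm:SVP_slide} carries the promise $\lambda_1(\lat(\basis_{[1,n-\ell]})) > \lambda_1(\lat)$. If this fails, then $\lambda_1(\lat)$ is already attained inside the strictly smaller-rank sublattice $\lat(\basis_{[1,n-\ell]})$, so I would simply recurse on that sublattice and return the shortest non-zero vector found across all rounds; this is the standard device of~\cite{GNFindingShort08,ALNSSlideReduction19}, adds only a $\poly(n)$ overhead, and terminates in at most $n/\ell$ rounds. I expect this to be the only non-routine moving part: no new technical idea is required, but one must check that integer rounding in the block decomposition $n = pk + q + \ell$ does not spoil the exponent and that the recursive overhead stays inside the $2^{k/2+o(k)}$ budget. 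The proof then just assembles these pieces: run Algorithm~\ref{alg:GSR} with $(k,\ell)$ as above and the two oracles from Corollary~\ref{cor:hermite} and~\cite{WLWFindingShortest15}, and post-process with the recursion of the previous paragraph.
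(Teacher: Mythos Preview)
Your proposal is correct and follows essentially the same approach as the paper: choose $\ell \approx 5k/8.02$ to balance the $2^{k/2+o(k)}$ HSVP oracle of Corollary~\ref{cor:hermite} against the $2^{0.802\ell+o(\ell)}$ $O(1)$-SVP oracle, apply Theorem~\ref{thm:SVP_slide} to the resulting slide-reduced basis, and recurse on $\lat(\basis_{[1,n-\ell]})$ (at most $n/\ell$ times) when the promise $\lambda_1(\lat(\basis_{[1,n-\ell]}))>\lambda_1(\lat)$ fails. The paper's proof is organized exactly this way, citing~\cite{LWXZShortestLattice11} for the SVP oracle rather than~\cite{WLWFindingShortest15,AUV19}, but these are the same line of work.
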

\begin{proof}
	Let $\ell = \frac{0.5k}{0.802}$ and run Algorithm~\ref{alg:GSR}, instantiating the oracles with the $O(\polylog(n)\sqrt{n})$-HSVP algorithm from Corollary~\ref{cor:hermite} and the $O(1)$-SVP algorithm from~\cite{LWXZShortestLattice11} to get a $((1+\eps)\polylog(k)\sqrt{k}, k, O(1), \ell)$-slide-reduced basis $\basis$ for any input lattice $\lat$. Now consider two cases:
	\begin{itemize}
		\item $\lambda_1(\lat(\vec{B}_{[1,n-\ell]})) > \lambda_1(\lat)$: By Theorem~\ref{thm:SVP_slide}, $\| \vec{b}_1 \| \leq \delta_S (\delta_H^2)^{\frac{n-\ell}{k-1}} \lambda_1(\lat) \leq O(\polylog(k)^cn^c) \lambda_{1}(\lat)$ as desired.
		\item $\lambda_1(\lat(\vec{B}_{[1,n-\ell]})) = \lambda_1(\lat)$: Then we repeat the algorithm on the lattice $\lat(\vec{B}_{[1,n-\ell]})$ with lower dimension. This can happen at most $n/\ell$ times, introducing at most a polynomial factor in the running time.
	\end{itemize}
	For the running time, the algorithm from Corollary~\ref{cor:hermite} runs in time $2^{0.5 k + o(k)}$. The algorithm from \cite{LWXZShortestLattice11} runs in time $2^{0.802\ell+o(\ell)}$, which is the same as $2^{0.5 k + o(k)}$, by our choice of $\ell$. This completes the proof. 
\end{proof}

\newcommand{\etalchar}[1]{$^{#1}$}

\end{document}